\newcommand\undermat[2]{%
  \makebox[0pt][l]{$\smash{\underbrace{\phantom{%
    \begin{matrix}#2\end{matrix}}}_{\text{$#1$}}}$}#2}
\newcommand\overmat[2]{%
  \makebox[0pt][l]{$\smash{\overbrace{\phantom{%
    \begin{matrix}#2\end{matrix}}}^{\text{$#1$}}}$}#2}
\tikzstyle{block} = [rectangle, draw, 
\tikzstyle{line} = [draw, -latex]
\newtheorem{definition}{Definition}
\newtheorem{lemma}{Lemma}
\newtheorem{theorem}{Theorem}
\newtheorem{corollary}{Corollary}
\newtheorem{proposition}{Proposition}
\newtheorem{remark}{Remark}
\newtheorem{example}{Example}
\newif\ifcomment
\newif\ifcommentLater
\newcommand{\qmarks}[1]{``#1''}
\newcommand{\ie}{\emph{i.e.}, }
\newcommand{\eg}{\emph{e.g.}, }
\definecolor{dartmouthgreen}{rgb}{0.05, 0.5, 0.06}
\newcommand{\commentOut}[1]{}
\newcommand\liteq{\mathrel{\overset{\makebox[0pt]{\mbox{\normalfont\tiny LIT}}}{\equiv}}}
\renewcommand*\env@matrix[1][*\c@MaxMatrixCols c]{%
  \hskip -\arraycolsep
  \let\@ifnextchar\new@ifnextchar
  \array{#1}}
\DeclareMathOperator{\diagonal}{diag}
\DeclareMathOperator{\trace}{Tr}
\DeclareMathOperator{\rank}{rank}
\DeclareMathOperator{\ttr}{tr}
\DeclareMathOperator{\image}{Im}
\DeclareMathOperator{\LIT}{LIT}
\DeclareMathOperator{\GL}{GL}
\DeclarePairedDelimiter\px{\{}{\}}
\DeclarePairedDelimiter\paren{(}{)}
\DeclarePairedDelimiter\bparen{[}{]}
\DeclarePairedDelimiter\ceil{\lceil}{\rceil}
\DeclarePairedDelimiter\floor{\lfloor}{\rfloor}
\DeclarePairedDelimiterX\symp[2]{\langle}{\rangle_\mathbb{S}}{#1 , #2}
\DeclarePairedDelimiterX\ipp[2]{\langle}{\rangle}{#1 , #2}
\newcommand{\cC}{\mathcal{C}}
\newcommand{\cH}{\mathcal{H}}
\newcommand{\cM}{\mathcal{M}}
\newcommand{\cO}{\mathcal{O}}
\newcommand{\cP}{\mathcal{P}}
\newcommand{\cS}{\mathcal{S}}
\newcommand{\cV}{\mathcal{V}}
\newcommand{\cW}{\mathcal{W}}
\newcommand{\bA}{\mathbf{A}}
\newcommand{\bB}{\mathbf{B}}
\newcommand{\bC}{\mathbf{C}}
\newcommand{\bD}{\mathbf{D}}
\newcommand{\bF}{\mathbf{F}}
\newcommand{\bG}{\mathbf{G}}
\newcommand{\bH}{\mathbf{H}}
\newcommand{\bI}{\mathbf{I}}
\newcommand{\bJ}{\mathbf{J}}
\newcommand{\bM}{\mathbf{M}}
\newcommand{\bN}{\mathbf{N}}
\newcommand{\bP}{\mathbf{P}}
\newcommand{\bQ}{\mathbf{Q}}
\newcommand{\bR}{\mathbf{R}}
\newcommand{\bS}{\mathbf{S}}
\newcommand{\bT}{\mathbf{T}}
\newcommand{\bU}{\mathbf{U}}
\newcommand{\bV}{\mathbf{V}}
\newcommand{\bW}{\mathbf{W}}
\newcommand{\bX}{\mathbf{X}}
\newcommand{\bY}{\mathbf{Y}}
\newcommand{\bZ}{\mathbf{Z}}
\newcommand{\ba}{\mathbf{a}}
\newcommand{\bb}{\mathbf{b}}
\newcommand{\bc}{\mathbf{c}}
\newcommand{\be}{\mathbf{e}}
\newcommand{\bg}{\mathbf{g}}
\newcommand{\bff}{\mathbf{f}}
\newcommand{\bs}{\mathbf{s}}
\newcommand{\bt}{\mathbf{t}}
\newcommand{\bu}{\mathbf{u}}
\newcommand{\bv}{\mathbf{v}}
\newcommand{\bx}{\mathbf{x}}
\newcommand{\by}{\mathbf{y}}
\def\sC{\mathsf{C}}
\def\sH{\mathsf{H}}
\def\sI{\mathsf{I}}
\def\sQ{\mathsf{Q}}
\def\sV{\mathsf{V}}
\def\sX{\mathsf{X}}
\def\sZ{\mathsf{Z}}
\def\a{\alpha}
\def\b{\beta}
\def\bgD{\mathbf{\Delta}}
\def\bgL{\mathbf{\Lambda}}
\def\bgS{\mathbf{\Sigma}}
\def\ppmatrix#1{\begin{pmatrix}#1\end{pmatrix}} 
\def\ppsmatrix#1{\begin{psmallmatrix}#1\end{psmallmatrix}} 
\def\imag#1{\image #1}			
\def\rowspan#1{\langle #1 \rangle_{\mathsf{row}}}
\def\colspan#1{\langle #1 \rangle_{\mathsf{col}}}
\def\bzero{\mathbf{0}}
\def\C{\mathbb{C}}
\def\F{\mathbb{F}}
\def\Fq{\F_{q}}
\def\HWqN{\mathrm{HW}_q^N}
\def\i{\imath}								
\def\N{\mathbb{N}}
\def\R{\mathbb{R}}
\def\symperp{\perp_\mathbb{S}}
\def\zerostate{|\mathbf{0} \rangle}
\begin{document}


\RestyleAlgo{ruled}
\tikzset{meter/.append style={draw, inner sep=10, rectangle, font=\vphantom{A}, minimum width=30, scale=.5, path picture={\draw[black] ([shift={(.1,.3)}]path picture bounding box.south west) to[bend left=50] ([shift={(-.1,.3)}]path picture bounding box.south east);\draw[black,-{Latex[scale=.5]}] ([shift={(0,.1)}]path picture bounding box.south) -- ([shift={(.3,-.1)}]path picture bounding box.north);}}}


\title{$N$-Sum Box: An Abstraction for Linear Computation over Many-to-one Quantum Networks}

\author{
    \IEEEauthorblockN{Matteo Allaix\IEEEauthorrefmark{1}, Yuxiang Lu\IEEEauthorrefmark{2}, Yuhang Yao\IEEEauthorrefmark{2}, Tefjol Pllaha\IEEEauthorrefmark{3}, Camilla Hollanti\IEEEauthorrefmark{1}, Syed Jafar\IEEEauthorrefmark{2}}

	\IEEEauthorblockA{\small \IEEEauthorrefmark{1} Aalto University, Finland. E-mails:
    \{matteo.allaix, camilla.hollanti\}@aalto.fi
	}
	\IEEEauthorblockA{\small \IEEEauthorrefmark{2} University of California, Irvine, CA, USA. E-mails:
    \{yuxiang.lu, yuhangy5, syed\}@uci.edu
	}
	\IEEEauthorblockA{\small \IEEEauthorrefmark{3} University of Nebraska, Lincoln, NE, USA. E-mail:
    tefjol.pllaha@unl.edu
	}
 
    \thanks{
    This work was carried out while M.~Allaix was visiting the research group of S.~Jafar at University of California, Irvine.
    Partial results were submitted to GLOBECOM 2023 \cite{nsumboxarxiv} and to Asilomar 2023 \cite{QCSA23}.
    \emph{The first two authors contributed equally to this work.}
    
    C.~Hollanti and M.~Allaix were supported by the Academy of Finland, under Grant No. 318937. M. Allaix was also supported by a doctoral research grant from the Emil Aaltonen Foundation, Finland. 

    Y.~Lu, Y.~Yao and S.~Jafar were supported by grants NSF CCF-1907053 and CCF-2221379 and ONR N00014-21-1-2386.
    }
}
\IEEEoverridecommandlockouts

\maketitle




\begin{abstract}
Linear computations over quantum many-to-one communication networks offer opportunities for communication cost improvements through schemes that exploit quantum entanglement among transmitters to achieve superdense coding gains, combined with classical techniques such as interference alignment. 
The problem becomes much more broadly accessible if suitable abstractions can be found for the underlying quantum functionality via classical black box models. 
This work formalizes such an abstraction in the form of an \qmarks{$N$-sum box}, a black box generalization of a two-sum protocol of Song \emph{et al.} with recent applications to $N$-server private information retrieval. 
The $N$-sum box has a communication cost of $N$ qudits and classical output of a vector of $N$ $q$-ary digits linearly dependent (via an $N \times 2N$ transfer matrix) on $2N$ classical inputs distributed among $N$ transmitters. 
We characterize which transfer matrices are feasible by our construction, both with and without the possibility of additional locally invertible classical operations at the transmitters and receivers.
Furthermore, we provide a sample application to Cross-Subspace Alignment (CSA) schemes to obtain efficient instances of Quantum Private Information Retrieval (QPIR) and Quantum Secure Distributed Batch Matrix Multiplication (QSDBMM).
We first describe $N$-sum boxes based on maximal stabilizers and we then consider non-maximal-stabilizer-based constructions to obtain an instance of Quantum Symmetric Private Information Retrieval.
\end{abstract}

\section{Introduction}

Distributed computation networks are often limited by their communication costs. 
Improving the efficiency of distributed computation by reducing communication costs is an active area of research. 
Reductions in communication cost may be achieved by coding techniques that are specialized for the type of distributed computation task (e.g., aggregation\cite{nazer2007computation}, MapReduce\cite{dean2008mapreduce}, matrix multiplication\cite{Jia_Jafar_SDMM}) as well as the nature of the communication network (wireless \cite{nazer2007computation}, cable  \cite{ahlswede2000network},
optical fiber \cite{belzner2009network}, quantum networks\cite{hayashi2007quantum}). 
For instance, coding for over-the-air computation reduces the communication cost of linear computation over many-to-one wireless networks, by taking advantage of the natural superposition property of the wireless medium \cite{nazer2007computation}.

Investigating the properties and the applications of quantum protocols is crucial for the development of a quantum internet \cite{cacciapuotiteleportation2020,cacciapuotiquantuinternet2020,caleffi2022distributed}, which operates on the principles of quantum mechanics and differs fundamentally from the classical internet used in our daily lives.
Our focus in this work is on  \emph{linear} computations (possibly with privacy and security constraints) over \emph{quantum} many-to-one communication networks. 
The potential for reduced communication costs in this setting comes from quantum entanglement among the transmitters, which creates opportunities for superdense coding gains\cite{bennett1992communication, werner2001all, liu2002general, gorbachev2002teleportation} as well as classical techniques such as interference alignment. 
However, unlike wireless networks for which there exists an abundance of simplified channel models and  abstractions to facilitate analysis from coding, information-theoretic and signal-processing perspectives \cite{el2011network, tse2005fundamentals, goldsmith2005wireless}, similarly convenient abstractions of quantum communication networks are not readily available, which limits the study of quantum communication networks largely to quantum experts. 
Our work is motivated by the observation that a convenient abstraction for linear computation over quantum many-to-one networks is indeed available, although somewhat implicitly, in the works of Song \emph{et al.}, in the form of a quantum two-sum protocol~\cite{song2019capacity, song2019capacitycollusion}, and its subsequent generalizations, as applied to QPIR \cite{QTPIR,allaix2020quantum, allaix2021quantum,  QMDSTPIR, QQPIR21}.
A similar type of linear computation with a classical-quantum multiple access channel has been discussed in \cite{hayashi2021computation,sohail2022computing,sohail2022unified}, where the classical-quantum multiple access channel might have noise, and further applied to the problem of quantum secret sharing and QPIR \cite{hayashi2022unified}.
\subsection{Two-Sum Protocol}
The two-sum protocol\cite{song2019capacitycollusion} is shown in Figure~\ref{fig:twosumbox}, both as a quantum circuit and as a black box. 
\begin{figure}[t]\setlength{\hfuzz}{1.1\columnwidth}
\begin{minipage}{\textwidth}
\centering
\begin{tikzpicture}
\node (Tx1) at (1.5,0.15) [rectangle, fill=red!10, minimum width=2.5cm, minimum height=1cm]{};
\node [above =0cm of Tx1] {\tiny \bf Tx1};
\node (Tx2) at (1.5,-1.15) [rectangle, fill=blue!10, minimum width=2.5cm, minimum height=1cm]{};
\node [below =0cm of Tx2] {\tiny \bf Tx2};
\node (Rx) at (7,-0.5) [rectangle, fill=orange!10, minimum width=4.5cm, minimum height=1.75cm]{};
\node [below =0cm of Rx] {\footnotesize Receiver};
\node (phi) at (-0.2,-0.5){\footnotesize $\ket{\b_{00}}$};
\node (Z1) at (1,0) [draw, rectangle, inner sep =0.1cm] {$\sZ$} ;
\node (Z2) at (1,-1) [draw, rectangle, inner sep =0.1cm] {$\sZ$};
\node (X1) at (2,0) [draw, rectangle, inner sep =0.1cm] {$\sX$};
\node (X2) at (2,-1) [draw, rectangle, inner sep =0.1cm] {$\sX$};
\node (a2) at (Z1) [above=0.2cm] {\footnotesize $x_3$};
\node (a1) at (X1) [above=0.2cm] {\footnotesize $x_1$};
\node (b2) at (Z2) [below=0.2cm] {\footnotesize $x_4$};
\node (b1) at (X2) [below=0.2cm] {\footnotesize $x_2$};

\begin{scope}[shift={(-0.75,0)}]
\node (x2) at (9.1,0) [inner sep = 0.1cm]{\footnotesize $x_1+x_2$};
\node (y2) at (9.1,-1) [inner sep =0.1cm] {\footnotesize $x_3+x_4$};
\node (H2) at (6.7,0) [draw, rectangle, inner sep =0.1cm] {$\sH$};
\node (mx2) at (7.8,0) [meter] {};
\node (my2) at (7.8,-1) [meter] {};
\node (b2) at (6,0) [draw, circle, inner sep=0.05cm, fill=black]{} ;
\coordinate (xend2) at (5.4,0);
\coordinate (yend2) at (5.4,-1);
\node (op2) at (6,-1) [inner sep=0] {$\oplus$};
\draw [color=black, thick] (b2)--(op2);
\draw [color=black, thick] (xend2)--(H2)--(mx2)--(x2);
\draw [color=black, thick] (yend2)--(op2)--(my2);
\draw [double] (mx2)--(x2);
\draw [double] (my2)--(y2);
\end{scope}
\draw [color=black, thick] (0,0)--(Z1)--(X1)--(3,0);
\draw [color=black, thick] (0,-1)--(Z2)--(X2)--(3,-1);
\draw [color=black, dashed] (3,0)--(xend2) node [midway, align=center] {\footnotesize $1$ qubit\\ \footnotesize to Rx};
\draw [color=black, dashed] (3,-1)--(yend2) node [midway, align=center] {\footnotesize $1$ qubit\\ \footnotesize to Rx};

\node (box) at (7.7,-0.5) [rectangle, draw, help lines, minimum width=17.1cm, minimum height=3.5cm]{};
\draw [] ($(box.north)+(1.75,0)$) -- ($(box.south)+(1.75,0)$);

\begin{scope}[shift={(10.6,0)}]
\node (Tx1) at (0,0.1) [ help lines, text =black, rectangle, minimum height=0.9cm, minimum width=0.75cm, fill=red!10] {};
\node [above =0cm of Tx1] {\tiny \bf Tx1};
\node [left=2pt of {$(Tx1.east)+(0,0.25)$}] {\footnotesize $x_1$};
\node [left=2pt of {$(Tx1.east)-(0,0.25)$}] {\footnotesize $x_3$};

\node (Tx2) at (0,-1.1) [ help lines, rectangle, minimum height=0.9cm, minimum width=0.75cm, text =black,  fill=blue!10]{};
\node [below =0cm of Tx2] {\tiny \bf Tx2};
\node [left=2pt of {$(Tx2.east)+(0,0.25)$}] {\footnotesize $x_2$};
\node [left=2pt of {$(Tx2.east)-(0,0.25)$}] {\footnotesize $x_4$};

\node (R) at (4.3,-0.5) [ help lines, rectangle, text=black, fill=orange!10, minimum width=2.3cm, minimum height=1.4cm]{};
\node [below =0cm of R] {\footnotesize Receiver};
\node [right=0.1cm of {$(R.west)+(0,0.3)$}] {\footnotesize $y_1=x_1+x_2$};
\node [right=0.1cm of {$(R.west)-(0,0.3)$}] {\footnotesize $y_2=x_3+x_4$};

\node (G) at (1.75,-0.5) [draw,  very thick, rectangle, minimum width=1.6cm, minimum height=2cm, fill=black!10, align=center] { \footnotesize $2$-Sum\\[-0.1cm] \footnotesize Box\\[-0.3cm]~\\  \footnotesize Com. Cost \\[-0.1cm] \footnotesize 2 qubits};

\draw [thick, ->] let \p{Geast}=($(G.east)+(0,0.3)$), \p{Rwest}=(R.west) in (\x{Geast},\y{Geast})--(\x{Rwest},\y{Geast}) node [above, midway] {};
\draw [thick, ->] let \p{Geast}=($(G.east)-(0,0.3)$), \p{Rwest}=(R.west) in (\x{Geast},\y{Geast})--(\x{Rwest},\y{Geast}) node [below, midway] {};

\draw [thick, ->] let \p{Tx1east}=($(Tx1.east)+(0,0.25)$), \p{Gwest}=(G.west) in (\x{Tx1east},\y{Tx1east})--(\x{Gwest},\y{Tx1east}) node [above, midway] {}; 
\draw [thick,->] let \p{Tx1east}=($(Tx1.east)-(0,0.25)$), \p{Gwest}=(G.west) in (\x{Tx1east},\y{Tx1east})--(\x{Gwest},\y{Tx1east}) node [above, midway] {}; 
\draw [thick, ->] let \p{Tx2east}=($(Tx2.east)+(0,0.25)$), \p{Gwest}=(G.west) in (\x{Tx2east},\y{Tx2east})--(\x{Gwest},\y{Tx2east}) node [below, midway] {};
\draw [thick, ->] let \p{Tx2east}=($(Tx2.east)-(0,0.25)$), \p{Gwest}=(G.west) in (\x{Tx2east},\y{Tx2east})--(\x{Gwest},\y{Tx2east}) node [below, midway]{};
\end{scope}
\end{tikzpicture}
\caption{Quantum circuit and black-box representation for two-sum transmission protocol with $\ket{\b_{00}} = \frac{1}{\sqrt{2}} (\ket{00} + \ket{11})$.}
\label{fig:twosumbox}
\end{minipage}
\end{figure}
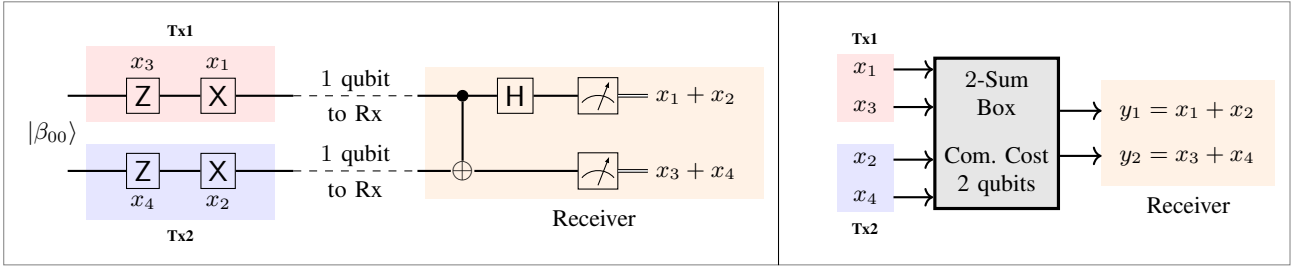
In the quantum circuit, we see two transmitters (Tx1 and Tx2), each in possession of one qubit of an entangled pair. 
The entangled state in this case is the Bell state $\ket{\beta_{00}}$. 
As classical $2$-bit inputs become available to the two transmitters ($(x_1,x_3)$ to Tx1, $(x_2,x_4)$ to Tx2), they perform conditional quantum operations (${\sf X}, {\sf Z}$ gates) on their respective qubits and then send them to the receiver (Rx), for a total communication cost of $2$ qubits. 
The receiver performs a Bell measurement and obtains $(y_1,y_2)=(x_1+x_2,x_3+x_4)$. 
The two-sum protocol can be abstracted into a black box, also shown in Figure~\ref{fig:twosumbox}, with inputs $(x_1,x_3), (x_2,x_4)$ controlled by Tx1 and Tx2, respectively, and output $\by=\bM\bx$, where $\bM = \ppsmatrix{1 & 1 & 0 & 0 \\ 0 & 0 & 1 & 1}$ is  the \emph{transfer matrix} of this $2$-sum box and $\bx^\top = \ppmatrix{x_1,x_2,x_3,x_4}$. 
The black-box representation hides the details of the quantum circuit and specifies only the functionality (transfer matrix $\bM$) and the communication cost ($2$ qubits), which makes it possible for non-quantum experts to design low-communication-cost coding schemes for quantum communication networks using this black box, \eg to take advantage of super-dense coding. 
Note that in the two-sum protocol, if Tx1 uses only zeros for its data $(x_1=x_3=0)$, then the protocol allows Tx2 to send both its classical input bits $(x_2,x_4)$ to the receiver, even though it sends only one qubit, provided Tx1 sends its qubit to the receiver as well. This is an example of superdense coding, made possible by the entanglement between the two qubits. 
Indeed, without the entanglement, qubits are worth no more than classical bits by the Holevo bound~\cite{holevo1973bounds}. 


\subsection{Our Contribution}

The main contribution of this work is to formalize a generalization of the $2$-sum box, namely an $N$-sum box. 
It is worth pointing out immediately that the technical foundations of the $N$-sum box are not new, indeed the construction draws upon the well-understood stabilizer formalism in quantum coding theory\cite{Gottesman97, AK01, Ketkar06, CRSS98}, and most of the technical details of  the generalization from $2$-sum to $N$-sum are also contained in the works of Song and Hayashi on Quantum Private Information Retrieval \cite{QTPIR}. 
Nevertheless, the crystallization of the black-box abstraction, as demonstrated in this study, holds significant promise for researchers in the classical information and coding theory domains. 
These researchers, though less acquainted with stabilizer codes and quantum coding theory, can still make valuable contributions to comprehending the fundamental boundaries of transmitter-side entanglement-assisted distributed classical computation over quantum multiple access (QMAC) networks. 
This is achieved through the utilization of the aforementioned classical abstraction, which effectively conceals the intricate details of the underlying quantum circuitry.
For example, wireless researchers with little background in quantum codes may recognize the $N$-sum box as the familiar MIMO MAC setting illustrated via an example in Figure~\ref{fig:MIMOMAC}. 
The main distinctions from the multiple antenna wireless setting are 1) that the channel is deterministic (noise-free), defined over a finite field ($\Fq$) rather than complex numbers, and 2) that instead of being generated randomly by nature, the channel matrix can be freely designed as long as it is strongly self orthogonal (SSO) (cf. Definition~\ref{def:ssomatrix}). 
This is because it is shown in this work that feasible $N$-sum box transfer functions are precisely those  matrices $\bM\in\F_q^{N\times 2N}$ that are either strongly self-orthogonal themselves, or can be made strongly self-orthogonal by local invertible transformations (cf. Definition~\ref{def:lit}) at various transmitters and/or the receiver. 
Thus, from a wireless perspective, the problem of coding for the QMAC becomes conceptually equivalent to that of designing a coding scheme as well as the channel matrix for a MIMO MAC subject to given structural constraints imposed by the $N$-sum box abstraction (SSO), such that the resulting MIMO MAC is able to efficiently achieve the desired linear computation `over-the-air' (actually, through quantum entanglement). 
The efficiency gained by `over-the-air' computation in this (constrained: SSO) MIMO MAC translates into  superdense coding gain over the QMAC. 

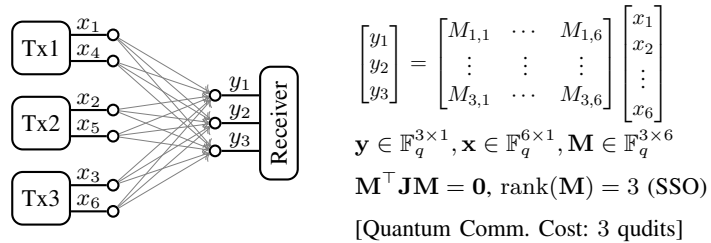
\begin{figure}[h]
\begin{center}
\begin{tikzpicture}
\node (Tx1) [draw, rectangle, rounded corners, thick, minimum width=0.52cm, minimum height=0.7cm] {\footnotesize Tx1};
\node (x1) [draw, thick, circle, inner sep = 0.05cm, above right=-0.25cm and 0.5cm of Tx1] {};
\node (x4) [draw, thick, circle, inner sep = 0.05cm, below right=-0.25cm and 0.5cm of Tx1] {};
\draw [thick] (x1.west)--(Tx1.east|-x1.west) node [midway, above=-0.1cm]{\footnotesize $x_1$};
\draw [thick] (x4.west)--(Tx1.east|-x4.west) node [midway, above=-0.1cm]{\footnotesize $x_{4}$};
\begin{scope}[yshift=-1cm]
\node (Tx2) [draw, rectangle, rounded corners, thick, minimum width=0.52cm, minimum height=0.7cm] {\footnotesize Tx2};
\node (x2) [draw, thick, circle, inner sep = 0.05cm, above right=-0.25cm and 0.5cm of Tx2] {};
\node (x5) [draw, thick, circle, inner sep = 0.05cm, below right=-0.25cm and 0.5cm of Tx2] {};
\draw [thick] (x2.west)--(Tx2.east|-x2.west) node [midway, above=-0.1cm]{\footnotesize $x_2$};
\draw [thick] (x5.west)--(Tx2.east|-x5.west) node [midway, above=-0.1cm]{\footnotesize $x_{5}$};
\end{scope}
\begin{scope}[yshift=-2cm]
\node (Tx3) [draw, rectangle, rounded corners, thick, minimum width=0.52cm, minimum height=0.7cm] {\footnotesize Tx3};
\node (x3) [draw, thick, circle, inner sep = 0.05cm, above right=-0.25cm and 0.5cm of Tx3] {};
\node (x6) [draw, thick, circle, inner sep = 0.05cm, below right=-0.25cm and 0.5cm of Tx3] {};
\draw [thick] (x3.west)--(Tx3.east|-x3.west) node [midway, above=-0.1cm]{\footnotesize $x_3$};
\draw [thick] (x6.west)--(Tx3.east|-x6.west) node [midway, above=-0.1cm]{\footnotesize $x_{6}$};
\end{scope}

\node (Rx) [draw, rectangle, rounded corners, thick, minimum width=0.5cm, minimum height=1.5cm, right=2.5cm of Tx2] {\footnotesize \rotatebox{90}{Receiver}};
\node (y2) [draw, thick, circle, inner sep = 0.05cm,   left= 0.5cm of Rx] {};
\node (y1) [draw, thick, circle, inner sep = 0.05cm,  above= 0.2cm of y2] {};
\node (y3) [draw, thick, circle, inner sep = 0.05cm,  below= 0.2cm of y2] {};

\draw [thick] (y1.east)--(Rx.west|-y1.east) node [midway, above=-0.1cm]{\footnotesize $y_1$};
\draw [thick] (y2.east)--(Rx.west|-y2.east) node [midway, above=-0.1cm]{\footnotesize $y_2$};
\draw [thick] (y3.east)--(Rx.west|-y3.east) node [midway, above=-0.1cm]{\footnotesize $y_3$};

\foreach \i in {1,2,...,6},
	\foreach \j in {1,2,3},
		\draw [->, help lines] (x\i)--(y\j);
		
\node [right=0.61cm of Rx, align=left]{\scalebox{0.75}{$\begin{bmatrix}y_1\\y_2\\y_3\end{bmatrix}=\begin{bmatrix}M_{1,1}&\cdots&M_{1,6}\\ \vdots&\vdots&\vdots\\ M_{3,1}&\cdots&M_{3,6}\end{bmatrix}\begin{bmatrix}x_1\\x_2\\\vdots\\x_6\end{bmatrix}$}\\[0.1cm] \footnotesize ${\bf y}\in\mathbb{F}_q^{3\times 1}, {\bf x}\in\mathbb{F}_q^{6\times 1}, \bM\in\mathbb{F}_q^{3\times 6}$\\[0.1cm]\footnotesize $\bM^\top{\bf J}\bM=\bzero$, $\rank(\bM)=3$ (SSO)\\[0.1cm] \footnotesize {\color{black}[Quantum Comm. Cost: $3$ qudits]}};
\end{tikzpicture}
\end{center}
\caption{The $N$-sum box is illustrated as a MIMO MAC  for $N=3$.}\label{fig:MIMOMAC}
\end{figure}

The $N$-sum box is intended to be useful primarily as a tool for exploring the information-theoretic capacity of $\Fq$-linear classical computations over an ideal QMAC, with the potential to shed new light into the fundamental limitations of superdense coding and quantum entanglement. 
As with other tools that information theorists have at their disposal, it is difficult to predict in advance if the $N$-sum box abstraction will turn out to be sufficient to construct capacity achieving schemes. 
Indeed  the linear computation capacity of a MAC is a challenging problem even in the classical setting, especially for \emph{vector} linear computations. 
Nevertheless, we are cautiously optimistic that the stabilizer-based construction exhausts the scope of the $N$-sum box functionality for $\Fq$-linear computations. 
The optimistic outlook is supported by prior works on capacity of QPIR \cite{QTPIR, allaix2021quantum,QMDSTPIR} where $N$-sum boxes have been implicitly employed for capacity-achieving schemes, as well as a recent follow-up work that utilizes the $N$-sum-box abstraction from this work, to find the capacity of sum-computation over the QMAC \cite{SQMAC23}. 

Last but not the least, even when an exact capacity characterization is beyond reach, a fruitful strategy is to utilize the $N$-sum box to design the best possible schemes allowed by the abstraction. 
In general the constraints of the abstraction may lead to entirely new schemes. 
However, certain applications of interest, of which QPIR is a prime example, have classical solutions with specialized structures that naturally resonate with the SSO constraint, the defining feature of $N$-sum boxes. 
For such applications, it can be particularly insightful to find ways to efficiently \emph{quantumize} the classical solutions, leading not only to good quantum coding schemes but also a better understanding of the role of the SSO structure for linear computations. 
Notably, such a quantumization was introduced in~\cite{allaix2020quantum} by blending the star-product scheme \cite{freij2017private} with the two-sum protocol. 
In this work, to further illustrate this aspect, we provide another instance by quantumizing classical cross-subspace alignment (CSA) codes into QCSA codes. CSA codes have been used in a variety of schemes ranging from XSTPIR~\cite{Jia_Sun_Jafar_XSTPIR} and MDS coded XSTPIR~\cite{Jia_Jafar_MDSXSTPIR} to secure distributed batch matrix multiplication (SDBMM)~\cite{Jia_Jafar_SDMM,Chen_Jia_Wang_Jafar_NGCSA}. 
Therefore, QCSA codes naturally open the door for the general quantum MDS-coded XSTPIR (MDS-coded QXSTPIR) setting as well as quantum SDBMM (QSDBMM). 


\subsection{Organization of the Paper}

In Section~\ref{sec:stab_formalism} we describe the stabilizer formalism over a finite field, which provides the quantum building block for stabilizer-based $N$-sum-box constructions.
In Section~\ref{sec:n_sum_box} we formally define an $N$-sum box.
In Section~\ref{sec:sbnsumbox} we study which constructions based on a maximal stabilizer are feasible both allowing and disallowing local invertible transformations, \ie invertible transformations applied to the inputs by the transmitters or invertible transformations applied to the output by the receiver.
In Section~\ref{sec:QCSA} we provide an application to CSA schemes to obtain a QCSA scheme that enables instances of MDS-coded QXSTPIR and QSDBMM.
Finally in Section~\ref{sec:non_max_stab_based_constructions} we consider constructions based on non-maximal stabilizers that enable instances of symmetric MDS-coded QXSTPIR without the need for shared randomness among the servers.


\subsection{Notations}
We denote by $[N]$ the set $\px*{1,\ldots,N},\ n \in \N$, and by $\Fq$ the finite field with $q$ elements. 
We use bold lower-case letters and bold upper-case letters to denote vectors and matrices, respectively. 
Given a matrix $\bA$, $\rowspan{\bA}$ and $\colspan{\bA}$ denote the spaces spanned by the rows and columns of $\bA$, respectively, while $\bA^\top$ and $\bA^\dagger$ represent its transpose and its conjugate transpose, respectively.


\section{Stabilizer formalism over finite fields}
\label{sec:stab_formalism}

The stabilizer formalism~\cite{Gottesman97} is a compact framework for quantum computation that provides a useful bridge to classical computation. 
Recently, this framework has been leveraged to boost several classical protocols.
We describe the stabilizer formalism over a finite field, for the details of which we refer the reader to~\cite{AK01,Ketkar06}. 
Throughout, we will use the same notation as in~\cite{QMDSTPIR}.


Let $q = p^r$ with a prime number $p$ and a positive integer $r$.
Let $\cH$ be a $q$-dimensional Hilbert space spanned by orthonormal states $\px*{ |j\rangle :  j\in \Fq }$. 
For $x \in \Fq$, we define  $\bT_x$ on $\F_p^{r}$ as the $\F_p$-linear map $y \mapsto xy \in\Fq$, $y\in\Fq$, by identifying the finite field $\Fq$ with the vector space $\F_p^{r}$.
Let $\ttr x \coloneqq \trace \bT_x \in \F_p$ for $x\in\Fq$.
Let $\omega \coloneqq \exp({2\pi i/p})$.
For $a,b\in\Fq$, we define unitary matrices $\sX(a) \coloneqq \sum_{j\in\Fq} |j+a\rangle \langle j |$ and $\sZ(b) \coloneqq \sum_{j\in\Fq} \omega^{\ttr bj} |j\rangle \langle j |$ on $\cH$. 
For $\bs = (s_1,\ldots, s_{2N}) \in \Fq^{2N}$, we define a unitary matrix $\tilde{\bW}(\bs) \coloneqq \sX(s_1) \sZ(s_{N+1}) \otimes \cdots \otimes \sX(s_{N}) \sZ(s_{2N})$ on $\cH^{\otimes N}$ called \emph{Weyl operator}. 

For $\bx=(x_1,\ldots,x_N),\ \by=(y_1,\ldots,y_N) \in \Fq^{N}$, we define the tracial bilinear form $\langle \bx, \by \rangle \coloneqq \ttr \sum_{i=1}^{N} x_iy_i\in\F_p$ and the trace-symplectic bilinear form $\symp{\bx}{\by} \coloneqq \langle \bx, \bJ\by \rangle$, where $\bJ$ is the $2N \times 2N$ matrix 
\[
\bJ = \ppmatrix{\bzero & -\bI \\ \bI & \bzero}.
\]
The dual of a subspace $\cV$ of $\Fq^{2N}$ with respect to this form is $\cV^{\symperp} \coloneqq \px*{ \bs\in \Fq^{2N} :  \symp{\bv}{\bs} = 0 \text{ for any } \bv\in \cV }$.

A matrix $\bF\in \Fq^{2N\times 2N}$ is called {\em symplectic} if $\bF^\top \bJ \bF = \bJ$. 
Symplectic matrices are precisely those matrices that preserve $\symp{\cdot}{\cdot}$, and its columns form a symplectic basis for $\Fq^{2N}$.
If we write $\bF = \ppsmatrix{\bA & \bC \\ \bB & \bD}$, then $\bF$ is symplectic if and only if $\bB^\top\bA, \bD^\top\bC$ are symmetric and $\bA^\top\bD -\bB^\top\bC = \bI$.
Thus,
\begin{equation}\label{eq:SympInv}
\bF^{-1} = \bJ^\top\bF^\top\bJ = \begin{pmatrix}\bD^\top&-\bC^\top\\-\bB^\top&\bA^\top\end{pmatrix}.
\end{equation}

\begin{remark}
    \label{rem:sso_to_symplectic}
    If $\bF \in \Fq^{2N \times 2N}$ is a symplectic matrix, then it is easy to see that the matrix $\bF' = \bF \ppsmatrix{\bI_\kappa \\ \bzero} \in \Fq^{2N \times \kappa}$, \ie the matrix containing the first $\kappa \in [N]$ columns of $\bF$, satisfies the relation $\paren*{\bF'}^\top \bJ \bF' = \bzero$. 
    Conversely, a matrix satisfying such relation can be completed to a symplectic matrix (\eg using the Gram--Schmidt Algorithm).
\end{remark}

Symplectic orthogonality in the vector space $\Fq^{2N}$ is equivalent to commutativity in the \emph{Heisenberg-Weyl group} $\HWqN \coloneqq \px*{c \tilde{\bW}(\bs) :  \bs \in \Fq^{2N},\  c \in \C \setminus \px*{0} }$. 
In fact, there is a surjective homomorphism $c \tilde{\bW}(\bs)\in \HWqN \mapsto \bs \in \Fq^{2N}$ with kernel $\px*{ c \bI_{q^N} : c \in \C \setminus \px*{0} }$, so two matrices $c_1\tilde{\bW}(\bs_1), c_2\tilde{\bW}(\bs_2)$ commute if and only if $\symp{\bs_1}{\bs_2} = 0$.


A commutative subgroup of $\mathrm{HW}_q^N$ not containing $c \bI_{q^N}$ for any $c\neq 1$ is called a {\em stabilizer group}. Such groups are precisely those groups for which the aforementioned homomorphism is actually an isomorphism. Thus, a stabilizer group defines a {\em self-orthogonal subspace}, that is $\cV\subseteq \cV^{\symperp}$, in $\Fq^{2N}$. 
Conversely, given a self-orthogonal subspace $\cV$ of $\Fq^{2N}$, there exist complex numbers $c_\bv$ so that 
\begin{equation}
    \cS(\cV)  \coloneqq \px*{ \bW(\bv)  \coloneqq c_{\bv} \tilde{\bW}(\bv) :  \bv \in \cV } \subseteq \mathrm{HW}_q^N
    \label{eq:stab}
\end{equation}
forms a stabilizer group. Thus, there is a one-to-one correspondence between stabilizer groups in $\HWqN$ and self-orthogonal subspaces in $\Fq^{2N}$.

Throughout this paper, we focus primarily on {\em maximal} stabilizers, since they exhaust the scope of all possible stabilizer-based $N$-sum boxes (cf. Remark~\ref{rem:max_to_non_max}).
Maximal stabilizers are in one-to-one correspondence with \emph{strongly self-orthogonal (SSO) subspaces}, \ie $\cV = \cV^{\symperp}$, so $\dim(\cV) = N$. 
In Section~\ref{sec:non_max_stab_based_constructions} we consider non-maximal stabilizer-based constructions for the cases when we want to discard $N-\kappa$ of the $N$ outputs of an $N$-sum box.

While $\cV$ defines the stabilizer $\cS(\cV)$, the quotient space $\Fq^{2N}/\cV^{\symperp}$ defines orthogonal projectors 
\begin{equation}
    \label{eq:coset_measurement}
    \cP^\cV \coloneqq \px*{\bP^{\cV}_{\overline{\bs}} : \overline{\bs}\in\Fq^{2N}/\cV^{\symperp}}
\end{equation}
which we use as a projective-value measurement (PVM). 
We will denote $|\overline{\bs}\rangle$ the state which $\bP^{\cV}_{\overline{\bs}}$ projects onto.
Throughout this paper we will use the results shown in the following proposition.

\begin{proposition}{\cite[Proposition~2.2]{QMDSTPIR}}
    \label{prop:stab}
    Let $\cV$ be a $D$-dimensional self-orthogonal subspace of $\Fq^{2N}$ and $\cS(\cV)$ be a stabilizer defined from $\cV$.
    Then, we obtain the following statements.
    
    \begin{enumerate}[label=(\alph*)]
        \item For any $\bv\in\cV$, the operation $\bW(\bv) \in \cS(\cV)$ is simultaneously and uniquely decomposed as 
        \begin{equation}
            \bW(\bv) = \sum_{\overline{\bs}\in\F_q^{2N} / \cV^{\symperp} } \omega^{ \symp{\bv}{\bs} } \bP^{\cV}_{\overline{\bs}}
        	\label{eq:weyldecomp}
        \end{equation}
        with orthogonal projections $\px*{\bP^{\cV}_{\overline{\bs}}}$ such that
        \begin{align*}
        	\bP^{\cV}_{\overline{\bs}} \bP^{\cV}_{\overline{\bt}} &= \bzero \mbox{ for any } \overline{\bs}\neq \overline{\bt}, \\
        	\sum_{\overline{\bs}\in\F_q^{2N} / \cV^{\symperp} } \bP^{\cV}_{\overline{\bs}} &= \bI_{q^N}.
        \end{align*} 
        
        \item Let $\cH^{\cV}_{\overline{\bs}} \coloneqq \imag{\bP^{\cV}_{\overline{\bs}}}$.
        We have $\dim \cH^{\cV}_{\overline{\bs}} = q^{N-D}$ for any $\overline{\bs}\in\F_q^{2N} / \cV^{\symperp}$ and the quantum system $\cH^{\otimes N}$ is decomposed as
        \begin{equation}
            \cH^{\otimes N} = \bigotimes_{\overline{\bs}\in \F_q^{2N} / \cV^{\symperp}} \cH^{\cV}_{\overline{\bs}} = \cW \otimes \C^{q^{N-D}},
            \label{eq:Hdecomposition}
        \end{equation}
        where the system $\cW$ is the $q^D$-dimensional Hilbert space spanned by $\px*{ |\overline{\bs}\rangle  : \overline{\bs}\in \F_q^{2N} / \cV^{\symperp} }$ with the property 
        $\cH^{\cV}_{\overline{\bs}} = |\overline{\bs} \rangle \otimes  \C^{q^{N-D}} \coloneqq \px*{ |\overline{\bs} \rangle \otimes | \psi \rangle : |\psi\rangle \in\C^{q^{N-D}} }$.

        \item For any $\bs, \bt\in\F_q^{2N}$, we have
        \begin{align*}
        	\bW(\bt) |\overline{\bs} \rangle \otimes  \C^{q^{N-D}} 
        	&= |\overline{\bs+\bt} \rangle \otimes  \C^{q^{N-D}},
        	\\
        	\bW(\bt) \paren*{|\overline{\bs} \rangle\langle \overline{\bs} | \otimes  \bI_{q^{N-D}}  } \bW(\bt)^{\dagger}
        	&= |\overline{\bs+\bt} \rangle \langle \overline{\bs+\bt} | \otimes  \bI_{q^{N-D}}.
        \end{align*}
    \end{enumerate}
\end{proposition}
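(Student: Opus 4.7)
The plan is to exploit the correspondence between symplectic self-orthogonality in $\Fq^{2N}$ and commutativity in $\HWqN$, combine it with simultaneous diagonalization of the commuting unitaries in $\cS(\cV)$, and identify the resulting joint eigenspaces with additive characters of $\cV$ via the nondegenerate pairing induced on $\Fq^{2N}/\cV^\symperp$. Concretely for part (a), since $\cV\subseteq\cV^\symperp$, every pair of operators in $\cS(\cV)$ commutes, and each is a unitary of finite order (its eigenvalues are $p$-th roots of unity, since the $\F_p$-linearity of $\bv\mapsto\bW(\bv)$ as a group homomorphism forces $\bW(\bv)^p=\bI$). Hence $\cS(\cV)$ admits a joint orthogonal eigendecomposition. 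On any joint eigenspace, the assignment $\bv\mapsto(\text{eigenvalue of }\bW(\bv))$ is a character $\cV\to\C^\times$, and every such character is of the form $\bv\mapsto\omega^{\symp{\bv}{\bs}}$ for some $\bs\in\Fq^{2N}$, by the isomorphism between the Pontryagin dual of $\cV$ and $\Fq^{2N}/\cV^\symperp$ coming from the nondegeneracy of $\symp{\cdot}{\cdot}$. Two $\bs$'s yield the same character iff they differ by an element of $\cV^\symperp$, so eigenspaces are indexed by cosets $\overline{\bs}$. Defining $\bP^\cV_{\overline{\bs}}$ as the projector onto the $\overline{\bs}$-eigenspace then gives \eqref{eq:weyldecomp}, pairwise orthogonality, and completeness. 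Fourier inversion over $\cV$ furnishes the key explicit formula $\bP^\cV_{\overline{\bs}}=q^{-D}\sum_{\bv\in\cV}\omega^{-\symp{\bv}{\bs}}\bW(\bv)$, which drives the rest of the proof.

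For part (b), the count $|\Fq^{2N}/\cV^\symperp|=q^{2N-(2N-D)}=q^D$ is immediate. For the dimension, I will take the trace of the Fourier formula from (a) and use the standard fact that $\trace\tilde\bW(\bv)=q^N\delta_{\bv,\bzero}$: this gives $\trace\bP^\cV_{\overline{\bs}}=q^{-D}\cdot q^N=q^{N-D}$, and since the rank of an orthogonal projector equals its trace, $\dim\cH^\cV_{\overline{\bs}}=q^{N-D}$. The decomposition in \eqref{eq:Hdecomposition} is then the canonical identification of $\cH^{\otimes N}$ as an orthogonal direct sum of $q^D$ equidimensional subspaces, rewritten as a tensor product after naming a basis $\{|\overline{\bs}\rangle\}$ for the $q^D$-dimensional indexing factor $\cW$.

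For part (c), the key identity is the Heisenberg-Weyl commutation relation $\bW(\bt)\bW(\bv)\bW(\bt)^\dagger=\omega^{\symp{\bt}{\bv}}\bW(\bv)$, a direct consequence of the group law on $\HWqN$. Substituting into the Fourier formula and using the antisymmetry $\symp{\bt}{\bv}=-\symp{\bv}{\bt}$ gives
\begin{equation*}
\bW(\bt)\bP^\cV_{\overline{\bs}}\bW(\bt)^\dagger=q^{-D}\sum_{\bv\in\cV}\omega^{-\symp{\bv}{\bs+\bt}}\bW(\bv)=\bP^\cV_{\overline{\bs+\bt}},
\end{equation*}
which is exactly the second displayed identity. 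The first identity then follows by acting on any vector $|\overline{\bs}\rangle\otimes|\psi\rangle\in\imag{\bP^\cV_{\overline{\bs}}}$ and observing that the computation above shows $\bW(\bt)$ restricts to a unitary isomorphism $\cH^\cV_{\overline{\bs}}\to\cH^\cV_{\overline{\bs+\bt}}$.

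The only nontrivial bookkeeping I anticipate is the consistent choice of the scalars $c_\bv$ so that $\bv\mapsto\bW(\bv)$ is a bona fide group homomorphism (rather than merely projective) and so that the commutation relation holds with the precise phase used above; this is a standard projective-to-linear lift in the stabilizer literature, familiar from \cite{AK01,Ketkar06}, and poses no substantive obstacle beyond careful sign tracking.
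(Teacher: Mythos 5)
The paper does not prove this proposition itself: it is cited verbatim as Proposition~2.2 of the reference [QMDSTPIR], which in turn uses the same standard stabilizer-formalism argument you present (joint diagonalization of the commuting family $\cS(\cV)$, identification of eigenspace labels with $\Fq^{2N}/\cV^{\symperp}$ via nondegeneracy of $\symp{\cdot}{\cdot}$, Fourier inversion over $\cV$ for the projector and trace formulas, and the Heisenberg--Weyl covariance for part~(c)). Your proof is correct and matches the intended argument.
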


Measuring with $\cP^\cV$ as in Equation~\eqref{eq:coset_measurement} would yield a coset.
Proposition~\ref{prop:measurementoutcome} aims to clarify the notation of Proposition~\ref{prop:stab} by giving a unique representative of the outputted equivalence class. 
First, we need the following lemma to prove that matrices satisfying the conditions of the proposition exist.

\begin{lemma}
    \label{lem:Gperp}
    Let $\bG \in \Fq^{2N \times \kappa}$ be such that $\bG^\top \bJ \bG = \bzero$ and $\rank(\bG) = \kappa$.
    Then there exists a full-rank matrix $\bG^\perp \in \Fq^{2N \times 2N - \kappa}$ such that 
    \begin{enumerate}
        \item $\bG^\top \bJ \bG^\perp = \bzero$, \ie $\symp{\bg_i}{\bg'_j} = 0$ for $i \in [\kappa],\ j \in [2N - \kappa]$, where $\bg_i$ is the $i^{th}$ column of $\bG$ and $\bg'_j$ is the $j^{th}$ column of $\bG^\perp$, and

        \item $\bG = \bG^\perp \ppsmatrix{\bI_\kappa \\ \bzero}$, \ie $\bG$ is the leftmost submatrix of $\bG^\perp$.
    \end{enumerate}
\end{lemma}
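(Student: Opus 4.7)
My plan is to read the hypothesis geometrically and then produce $\bG^\perp$ by a basis-extension argument. The condition $\bG^\top \bJ \bG = \bzero$ together with $\rank(\bG) = \kappa$ says that $\cG \coloneqq \colspan{\bG}$ is a $\kappa$-dimensional self-orthogonal $\Fq$-subspace of $\Fq^{2N}$ with respect to the (non-degenerate) trace-symplectic form $\symp{\cdot}{\cdot}$. A standard dimension count then gives $\dim \cG^\symperp = 2N - \kappa$, and self-orthogonality gives the inclusion $\cG \subseteq \cG^\symperp$.

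First, I would take the columns $\bg_1,\ldots,\bg_\kappa$ of $\bG$, which form a basis of $\cG$, and extend them to a basis $\bg_1,\ldots,\bg_\kappa,\bg'_{\kappa+1},\ldots,\bg'_{2N-\kappa}$ of the larger space $\cG^\symperp$. Such an extension exists precisely because $\cG\subseteq\cG^\symperp$ and $\dim \cG^\symperp = 2N-\kappa$. I would then define $\bG^\perp\in\Fq^{2N\times(2N-\kappa)}$ to be the matrix whose columns are this basis, listed in the stated order.

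The three required conclusions are then immediate. Full column rank of $\bG^\perp$ follows since its columns are a basis of $\cG^\symperp$. The chosen ordering makes $\bG$ the leftmost $\kappa$ columns of $\bG^\perp$, which is exactly the matrix identity $\bG = \bG^\perp \ppsmatrix{\bI_\kappa\\ \bzero}$. Finally, $\bG^\top\bJ\bG^\perp = \bzero$ holds entry-wise because every column of $\bG^\perp$ lies in $\cG^\symperp$ and is therefore symplectically orthogonal to every column of $\bG$.

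The only conceptual point I would pause on is the equivalence, for $\Fq$-subspaces, between the scalar orthogonality $\symp{\bx}{\by}=0 \in \Fp$ that defines $\cG^\symperp$ and the $\Fq$-valued condition $\bx^\top\bJ\by=0$ that appears in the matrix equation. This equivalence uses that $\symp{\bx}{\a\by}=\ttr(\a\,\bx^\top\bJ\by)$ for all $\a\in\Fq$, together with non-degeneracy of the trace form $\ttr:\Fq\to\Fp$. Apart from this small bookkeeping remark, the argument is just linear algebra about self-orthogonal subspaces, and I do not anticipate any hard step.
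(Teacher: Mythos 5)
Your proof is correct, and it takes a mildly different route from the paper's. The paper invokes Remark~\ref{rem:sso_to_symplectic} to get a full symplectic completion $\bF=\ppmatrix{\bG & \bG_2 & \bH_1 & \bH_2}$ of $\bG$ (e.g.\ via symplectic Gram--Schmidt), reads off the orthogonality pattern from $\bF^\top\bJ\bF=\bJ$, and then selects $\bG^\perp=\ppmatrix{\bG & \bG_2 & \bH_2}$, i.e.\ the $2N-\kappa$ columns of $\bF$ other than $\bH_1$. You instead argue directly: $\cG=\colspan{\bG}$ is self-orthogonal, $\cG\subseteq\cG^\symperp$ with $\dim_{\Fq}\cG^{\symperp}=2N-\kappa$ (the $\Fq$-kernel of the rank-$\kappa$ matrix $\bG^\top\bJ$), and a basis of $\cG$ extends to a basis of $\cG^\symperp$; stacking that extended basis columnwise gives $\bG^\perp$. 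Both are clean; yours is a touch more elementary because it avoids the machinery of symplectic completion, while the paper's version has the side benefit that the discarded block $\bH_1$ is precisely a candidate for the matrix $\bH$ needed later in Proposition~\ref{prop:measurementoutcome} to complete $\bG^\perp$ to an invertible matrix (in your approach you would extend the basis one more time to get such an $\bH$, which is of course also routine). You are also right to flag the bookkeeping point about passing between the $\Fp$-valued tracial condition $\symp{\bg}{\bs}=0$ defining $\cG^\symperp$ and the $\Fq$-valued condition $\bg^\top\bJ\bs=0$: since $\cG$ is $\Fq$-closed and $\ttr:\Fq\to\Fp$ is non-degenerate, the two coincide, which also guarantees that $\cG^\symperp$ is an $\Fq$-subspace so that the basis-extension argument makes sense over $\Fq$.
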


\begin{proof}
    Let $\bF$ be a symplectic completion of $\bG$, \ie a symplectic matrix such that its first $\kappa$ columns are equal to $\bG$, which exists by the conditions imposed on $\bG$ (cf. Remark~\ref{rem:sso_to_symplectic}).
    Then we can write it as $\bF = \ppmatrix{\bG & \bG_2 & \bH_1 & \bH_2}$, where $\bH_1 \in \Fq^{2N \times \kappa},\ \bG_2,\bH_2 \in \Fq^{2N \times N - \kappa}$.
    Since $\bF^\top \bJ \bF = \bJ$, it is clear that $\bff_i^\top \bJ \bff_j = 0$ for each pair of columns $\bff_i,\bff_j$ of $\bF$ such that $i \in [\kappa],\ j \in [N] \cup \{ N+\kappa+1,\ldots,2N \}$.
    Choosing $\bG^\perp = \ppmatrix{\bG & \bG_2 & \bH_2}$ proves the statement.
\end{proof}

\begin{proposition}
    \label{prop:measurementoutcome}
    Let $\bG \in \Fq^{2N \times \kappa}$ and $\bG^\perp \in \Fq^{2N \times 2N - \kappa}$ be such that 
    \begin{itemize}
        \item $\bG = \bG^\perp \ppsmatrix{\bI_\kappa \\ \bzero}$,
    
        \item $\bG^\top \bJ \bG^\perp = \bzero$,

        \item there exists $\bH \in \Fq^{2N \times \kappa}$ such that $\ppmatrix{\bG^\perp & \bH}$ is full rank.
    \end{itemize}
    Let $\cV = \colspan{\bG}$ and $(\cdot)_h: \Fq^{2N} \to \Fq^\kappa$ be such that 
    \[
    (\bx)_h \coloneqq \ppmatrix{\bzero_{\kappa \times 2N - \kappa} & \bI_\kappa} \ppmatrix{\bG^\perp & \bH}^{-1} \bx.
    \]
    Then performing the PVM $\px*{\bP^{\cV}_{\overline{\bs}} : \overline{\bs}\in\Fq^{2N}/\cV^{\symperp}}$ on the state $| \overline{\bx} \rangle$ gives the outcome $(\bx)_h$.
\end{proposition}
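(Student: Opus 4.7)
The plan is to unfold the definition of $(\cdot)_h$ and show it amounts to choosing a canonical $\Fq^\kappa$-label for each coset in $\Fq^{2N}/\cV^{\symperp}$, induced by the basis $\ppmatrix{\bG^\perp & \bH}$; once this identification is in place, Proposition~\ref{prop:stab}(b) immediately determines the measurement outcome. The crux of the argument is identifying $\cV^{\symperp}$ concretely as $\colspan{\bG^\perp}$.

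I would first argue that $\cV^{\symperp} = \colspan{\bG^\perp}$. The hypothesis $\bG^\top \bJ \bG^\perp = \bzero$ gives $\colspan{\bG^\perp} \subseteq \cV^{\symperp}$, while the full-rank assumption on $\ppmatrix{\bG^\perp & \bH}$ forces $\bG^\perp$ to have column rank $2N - \kappa$. Combined with $\dim \cV = \kappa$ (since $\bG$ consists of the first $\kappa$ columns of the full-column-rank matrix $\bG^\perp$), this matches $\dim \cV^{\symperp} = 2N - \kappa$, so the inclusion is forced to be an equality. Note also that $\cV$ is self-orthogonal, since $\bG^\top \bJ \bG = (\bG^\top \bJ \bG^\perp) \ppsmatrix{\bI_\kappa \\ \bzero} = \bzero$, and hence Proposition~\ref{prop:stab} applies with $D = \kappa$.

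Next, decomposing any $\bx \in \Fq^{2N}$ uniquely as $\bx = \bG^\perp \ba + \bH \bb$ with $\ba \in \Fq^{2N-\kappa}$, $\bb \in \Fq^\kappa$, I would read off $(\bx)_h = \bb$ directly from the formula. Because $\cV^{\symperp} = \colspan{\bG^\perp}$, shifting $\bx$ by any element of $\cV^{\symperp}$ alters only $\ba$, so $(\cdot)_h$ descends to a well-defined bijection $\Fq^{2N}/\cV^{\symperp} \to \Fq^\kappa$. Finally, Proposition~\ref{prop:stab}(b) with $D = \kappa$ identifies $\bP^{\cV}_{\overline{\bx}}$ as the projector onto $|\overline{\bx}\rangle \otimes \C^{q^{N-\kappa}}$, while $\bP^{\cV}_{\overline{\bs}}$ annihilates this subspace for every $\overline{\bs} \neq \overline{\bx}$; hence the PVM applied to $|\overline{\bx}\rangle$ returns the coset $\overline{\bx}$ with probability one, which under the identification above is $(\bx)_h$.

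I expect the only subtle point to be the notational translation between the abstract coset produced by the measurement and its concrete $\Fq^\kappa$ representative; there is no genuine computation, and one simply verifies that each of the three hypotheses on $\bG, \bG^\perp, \bH$ is used exactly where needed (column-span containment, dimension matching, and invertibility of the change-of-basis, respectively).
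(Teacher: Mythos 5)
Your proposal is correct and follows essentially the same route as the paper: decompose $\bx$ in the basis $\ppmatrix{\bG^\perp & \bH}$, identify $\cV^{\symperp}$ with $\colspan{\bG^\perp}$ so that cosets of $\cV^{\symperp}$ are labeled bijectively by the $\bH$-component $\bx_h \in \Fq^\kappa$, and then apply Proposition~\ref{prop:stab} to conclude the PVM returns that label deterministically. You are slightly more explicit than the paper in justifying $\cV^{\symperp} = \colspan{\bG^\perp}$ via the dimension count, which is a small but welcome clarification.
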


\begin{proof}
    Condition 3 ensures that the matrices $\bG$ and $\bG^\perp$ are full rank, so $\dim(\cV) = \kappa$ and $\dim(\cV^{\symperp}) = 2N - \kappa$.
    By conditions 1 and 2 we have that $\cV \subseteq \cV^{\symperp} = \colspan{\bG^\perp}$, so the subgroup $\cS(\cV)$ as in Equation~\eqref{eq:stab} is a stabilizer.
    By Equation~\eqref{eq:Hdecomposition} we have that $\cH^{\otimes N} = \cW \otimes \C^{q^{N-\kappa}}$, where $\cW$ is the $q^\kappa$-dimensional Hilbert space spanned by $\px*{ |\overline{\bs}\rangle : \overline{\bs}\in \Fq^{2N} / \cV^{\symperp} }$.

    Let $\bx \in \Fq^{2N}$, then we can uniquely decompose it as $\bx = \bG^\perp \bx_g + \bH \bx_h, \ \bx_g \in \Fq^{2N-\kappa},\ \bx_h \in \Fq^\kappa$.
    Notice now that $(\bx)_h = \bx_h$.
    Let $\bg \in \cV^{\symperp}$, then we can write $\bg = \bG \bg'$ for $\bg' \in \Fq^{2N-\kappa}$ and $\bx + \bg$ can be decomposed as $\bx + \bg = \bG (\bx_g + \bg') + \bH \bx_h = \bG \bx'_g + \bH \bx_h,\ \bx'_g \in \Fq^\kappa$. 

    In Proposition~\ref{prop:stab}, we have that $\overline{\bs} = \bs + \cV^{\symperp} = \px*{\bs + \bg : \bg \in \cV^{\symperp}}$.
    It follows that every $\bx \in \overline{\bs}$ maps to a unique element $\bs_h \in \Fq^\kappa$.
    Thus, we can identify each coset $\overline{\bs}$ with the element $\bs_h$.
    Then we identify the states
    \begin{equation}
        \label{eq:idwstates}
        | \bs_h \rangle_\cW = | \overline{\bs} \rangle
    \end{equation}
    to avoid confusion with the computational basis, since $\cW$ is the space spanned by the states $| \overline{\bs} \rangle$.
    
    In the decomposition given by Equation~\eqref{eq:Hdecomposition} we have $q^\kappa$ distinct elements, since each $\cH^\cV_{\overline{\bs}}$ has dimension $q^{N-\kappa}$.
    Thus, since there are $q^\kappa$ vectors $\bs_h \in \Fq^\kappa$, we can write Equation~\eqref{eq:weyldecomp} as $\bW(\bv) = \sum_{\bs_h \in \Fq^\kappa} \omega^{\symp{\bv}{\bH_1 \bs_h}} \bP^\cV_{\bs_h}$, where $\bP^\cV_{\bs_h}$ is the projection associated with the measurement outcome $\bs_h$.
    If $\kappa=N$, then $\dim(\imag{\bP^\cV_{\bs_h}}) = 1$ and we can decompose it as $\bP^\cV_{\bs_h} \coloneqq | \bs_h \rangle_\cW \langle \bs_h |_\cW$, otherwise the projection is given by a density matrix.
    
    In general, assume the system is in the state $|\overline{\bx} \rangle$ for some $\overline{\bx} \in \Fq^{2N} / \cV^\symperp$.
    By the discussion above, we can identify $\overline{\bx}$ with a unique $\bx_h \in \Fq^N$.
    Then,  
    \[
    \bP^\cV_{\bs_h} |\overline{\bx} \rangle = 
    \begin{cases}
        | \bx_h \rangle_\cW & \text{if } \bs_h = \bx_h, \\
        0 & \text{if } \bs_h \neq \bx_h.
    \end{cases}
    \]
    We thus obtain the outcome $(\bx)_h$ with probability 1 after performing the PVM $\px*{\bP^{\cV}_{\overline{\bs}} : \overline{\bs}\in\Fq^{2\kappa}/\cV^{\symperp}} = \px*{\bP^{\cV}_{\bs_h} : \bs_h \in \Fq^\kappa}$ on the state $| \overline{\bx} \rangle$.
\end{proof}

\begin{remark}
    The PVM can be more clearly expressed as 
    \begin{equation}
        \label{eq:pvm}
        \cP^\cV \coloneqq \px*{\bP^{\cV}_{\bs_h} = | \bs_h \rangle_\cW \langle \bs_h |_\cW : \bs_h \in \Fq^\kappa}.
    \end{equation}
\end{remark}

\begin{remark}
    If $\kappa = N$, then $\bG^\perp = \bG$, so the first two conditions of Proposition~\ref{prop:measurementoutcome} can be simply rewritten as $\bG^\top \bJ \bG = \bzero$.
    In this case $\bG$ defines an SSO subspace $\cV$, which is in correspondence with the maximal stabilizer $\cS(\cV)$.
    Furthermore, measuring over the PVM $\cP^\cV$ is equivalent to first revert the unitary $\bU_{\bG,\bH}$ (cf. Remark~\ref{rem:symplectic}) and measuring on the computational basis, as such unitary is needed to map the computational basis to the PVM basis.
\end{remark}

As the PVM $\cP^\cV$ is applied on $N$ qudits, one should expect $N$ $q$-ary digits as output, but if $\kappa < N$, the output of the PVM has only $\kappa$ $q$-ary digits according to Proposition~\ref{prop:measurementoutcome}.
We will clarify this aspect in Section~\ref{sec:non_max_stab_based_constructions}.

\section{$N$-Sum Box}
\label{sec:n_sum_box}

An $N$-sum box is a black box with the following functional form:
\[
\ppmatrix{y_1 \\ y_2 \\ \vdots \\ y_N} = 
\ppmatrix{M_{1,1} & \cdots & M_{1,2N} \\
          \vdots  & \ddots & \vdots \\
          M_{N,1} & \cdots & M_{N,2N}}
\ppmatrix{x_1 \\ \vdots \\ x_N \\ x_{N+1} \\ \vdots \\ x_{2N}},
\]
or equivalently $\by = \bM \bx$, where $\by \in \Fq^N$ is the output vector, $\bx \in \Fq^{2N}$ is the input vector, and $\bM \in \Fq^{N \times 2N}$ is the \emph{transfer matrix}.
The inputs to the $N$-sum box are controlled by $N$ parties (\emph{transmitters}), where transmitter $n \in [N]$ is controlling $(x_n,x_{N+n})$. 
The output vector $\by$ is measured by another party, which we label as the \emph{receiver}. 
The $N$-sum box is initialized with shared quantum entanglement among the $N$ transmitters, \ie $N$ entangled $q$-dimensional qudits are prepared and distributed to the transmitters, one qudit per transmitter. 
The initial qudit entanglement is independent of the inputs $\bx$ and any data that subsequently becomes available to the transmitters. 
No quantum resource is initially available to the receiver. 
In the course of operation of the $N$-sum box, each of the $N$ transmitters acquires data from various sources, including possibly the receiver (\eg queries in private information retrieval), based on which it performs conditional $\sX,\sZ$-gate operations on its own qudit, and then sends its qudit to the receiver.
The receiver performs a quantum measurement on the $N$ qudits, from which it recovers $\by$.

In this setting, we allow the inputs $(x_n,x_{n+N})$ from each transmitter $n \in [N]$ to be transformed by an invertible matrix.
This corresponds to multiplying the input vector by local invertible transformations, which are defined as follows.
\begin{definition}
    \label{def:lit}
    Let $\diagonal_{N,\Fq}$ be the set of diagonal matrices of dimension $N \times N$ and entries in $\Fq$. The set of {\em local invertible transformations (LITs)} is defined as
    \[ \begin{split}
        \LIT_{N,\Fq} \coloneqq \px*{ \ppmatrix{\bgL_1 & \bgL_2 \\ \bgL_3 & \bgL_4} : \bgL_i \in \diagonal_{N,\Fq},\ i \in [4],\ \det(\bgL_1 \bgL_4 - \bgL_2 \bgL_3) \neq 0 }.
    \end{split} \]
\end{definition}
Notice that the submatrix with the entries in position $(n,n)$, $(n,n+N)$, $(n+N,n)$, $(n+N,n+N)$ of $\bgL \in \LIT_{N,\Fq}$ is the invertible matrix applied by transmitter $n \in [N]$ to its inputs.

We also allow \emph{receiver invertible transformations}, \ie we allow the receiver to transform the output vector of the $N$-sum box by multiplying it by $\bP \in \GL_{N,\Fq}$, where $\GL_{N,\Fq}$ is the set of invertible matrices with dimension $N$ and entries in $\Fq$.
This gives equivalent representations of the $N$-sum box as
\[
\by = \bP \bM \bgL \bx, \quad \bP \in \GL_{N,\Fq},\ \bgL \in \LIT_{N,\Fq}.
\]

\begin{definition}
    The relation $\liteq$ defines an equivalence class of pairs of matrices $\bM_1, \bM_2 \in \Fq^{N \times 2N}$ up to local and receiver invertible transformations, \ie
    \[
    \bM_1 \liteq \bM_2 \iff \bM_1 = \bP \bM_2 \bgL,
    \]
    where $\bP \in \GL_{N,\Fq},\ \bgL \in \LIT_{N,\Fq}$.
\end{definition}

\section{Stabilizer-based $N$-Sum Boxes}
\label{sec:sbnsumbox}

\begin{figure*}[ht]
\centering
\begin{tikzpicture}

\node at (0.3,-2) [align=center,rotate=90] {\footnotesize $N$ qudits in entangled state $\ket{\bzero}_\cW $\\ \footnotesize $ \ket{\bzero}_\cW= \bU_{\bG,\bH}\ket{\bzero}$};


\begin{scope}[shift={(1.0,0)}]

\node (U1) at (1.75,-0.15) [ help lines, rectangle, minimum width=2.6cm, minimum height=1cm, fill=red!10] {};
\node (U2) at (1.75,-1.9) [ help lines, rectangle, minimum width=2.6cm, minimum height=1cm, fill=blue!10] {};
\node (U3) at (1.75,-4.15) [ help lines, rectangle, minimum width=2.6cm, minimum height=1cm, fill=teal!10] {};
\node [below=-0.1cm of U1] {\tiny{Tx1} with data $(x_1,x_{1+N})$};
\node [below=-0.1cm of U2] {\tiny{Tx2} with data $(x_2,x_{2+N})$};
\node [below=-0.1cm of U3] {\tiny{Tx$N$} with data $(x_N,x_{2N})$};
\node (dd) [below = 0.3cm of U2] {$\vdots$};

\node (Z1) at (1.1,0) [draw, rectangle, inner sep =0.1cm] {$\sZ$};
\node (Z2) at (1.1,-1.75) [draw, rectangle, inner sep =0.1cm] {$\sZ$};
\node (Z3) at (1.1,-4) [draw, rectangle, inner sep =0.1cm] {$\sZ$};
\node (X1) at (2.4,0) [draw, rectangle, inner sep =0.1cm] {$\sX$};
\node (X2) at (2.4,-1.75) [draw, rectangle, inner sep =0.1cm] {$\sX$};
\node (X3) at (2.4,-4) [draw, rectangle, inner sep =0.1cm] {$\sX$};
\node (a2) at (X1) [below=0.2cm] {\footnotesize $x_1$};
\node (a1) at (Z1) [below=0.2cm] {\footnotesize $x_{1+N}$};
\node (b2) at (X2) [below=0.2cm] {\footnotesize $x_2$};
\node (b1) at (Z2) [below=0.2cm] {\footnotesize $x_{2+N}$};
\node (c2) at (X3) [below=0.2cm] {\footnotesize $x_N$};
\node (c1) at (Z3) [below=0.2cm] {\footnotesize $x_{2N}$};

\draw [color=black, thick] ($(Z1.west) - (0.75,0)$)--(Z1)--(X1)--($(X1.east) + (0.75,0)$);
\draw [color=black, thick] ($(Z2.west) - (0.75,0)$)--(Z2)--(X2)--($(X2.east) + (0.75,0)$);
\draw [color=black, thick] ($(Z3.west) - (0.75,0)$)--(Z3)--(X3)--($(X3.east) + (0.75,0)$);

\node (Mid) at (1.75,-2.25) [  help lines, rectangle, minimum width=3.0cm, minimum height=5.5cm] {};
\node [above=0cm of Mid]{\footnotesize Transmitters};

\node (L2) [below=0cm of Mid, align=center]{\footnotesize $\bx = \bG \bx_g + \bH \bx_h $,\\[-0.05cm] \footnotesize $\bx \in \Fq^{2N} \mbox{ and }\bx_g, \bx_h \in \Fq^N$};

\end{scope}

\begin{scope}[shift={(-5.0,0)}]
\node (D) at (13.3,-2) [ help lines, fill =orange!10, rectangle, minimum width=4.3cm, minimum height=5cm] {};
\node [below=0cm of D, align=center]{\footnotesize { Rx} computes $\by = \bx_h = \bM \bx$,\\[0.1cm]\footnotesize $\bM = \ppmatrix{\bzero & \bI}\ppmatrix{\bG & \bH}^{-1}$,\\ \footnotesize \mbox{$\bM\in \Fq^{N\times 2N}$}};
\node (R2) at (12,-2) [draw, very thick, rectangle, minimum width=1.0cm, minimum height=4.5cm, fill=white] {$\bU_{\mbox{\tiny $\bG,\bH$}}^{\dagger}$};
\draw [thick] ($(X1-|R2.west) - (0.75, 0)$)--(Z1-|R2.west);
\draw [thick] ($(X2-|R2.west) - (0.75, 0)$)--(Z2-|R2.west);
\draw [thick] ($(X3-|R2.west) - (0.75, 0)$)--(Z3-|R2.west);
\end{scope}

\draw [color=black, dashed] ($(X1.east) + (0.75,0)$)--($(X1-|R2.west) - (0.75, 0)$) node [midway, align=center] {\footnotesize $1$ qudit\\ \footnotesize to Rx};
\draw [color=black, dashed] ($(X2.east) + (0.75,0)$)--($(X2-|R2.west) - (0.75, 0)$) node [midway, align=center] {\footnotesize $1$ qudit\\ \footnotesize to Rx};
\draw [color=black, dashed] ($(X3.east) + (0.75,0)$)--($(X3-|R2.west) - (0.75, 0)$) node [midway, align=center] {\footnotesize $1$ qudit\\ \footnotesize to Rx};

\begin{scope}[shift={(-4.3,0)}]
\node (mx1) at (12.5,0) [meter] {};
\node (mx2) at (12.5,-1.75) [meter] {};
\node [below = 0.5cm of mx2] {$\vdots$};
\node (mx3) at (12.5,-4) [meter] {};

\draw [thick] (R2.east|-mx1)--(mx1);
\draw [thick] (R2.east|-mx2)--(mx2);
\draw [thick] (R2.east|-mx3)--(mx3);

\node [above=0cm of D]{\footnotesize Receiver};

\node (O1) at (13,0) [anchor=west]{\footnotesize $y_1 = (\bx_h)_1$} ;
\node (O2) at (13,-1.75) [anchor=west]{\footnotesize $y_2 = (\bx_h)_2$} ;
\node [below = 0.5cm of O2] {$\vdots$};
\node (O3) at (13,-4) [anchor=west]{\footnotesize $y_N = (\bx_h)_N$} ;

\draw [double] (mx1)--(O1);
\draw [double] (mx2)--(O2);
\draw [double] (mx3)--(O3);
\end{scope}

\node (box) at (8.55,-2.5) [rectangle, help lines, draw, minimum width=18.0cm, minimum height=8cm]{};
\draw [] ($(box.north)+(2.4,0)$) -- ($(box.south)+(2.4,0)$);

\begin{scope}[shift={(12,0)}]

\node (Tx1) at (0,-0.15) [help lines, text =black, rectangle, minimum height=0.9cm, minimum width=1cm, fill=red!10] {};
\node (Tx2) at (0,-1.9) [help lines, text =black, rectangle, minimum height=0.9cm, minimum width=1cm, fill=blue!10] {};
\node (Tx3) at (0,-4.15) [help lines, text =black, rectangle, minimum height=0.9cm, minimum width=1cm, fill=teal!10] {};
\node [below=-0.1cm of Tx1] {\tiny \bf Tx1};
\node [below=-0.1cm of Tx2] {\tiny \bf Tx2};
\node [below=-0.1cm of Tx3] {\tiny \bf Tx$N$};
\node (vd) [below = 0.3cm of Tx2] {$\vdots$};

\node [left=6pt of {$(Tx1.east)+(0,0.25)$}] {\footnotesize $x_1$};
\node [left=0pt of {$(Tx1.east)-(0,0.25)$}] {\footnotesize $x_{1+N}$};

\node [left=6pt of {$(Tx2.east)+(0,0.25)$}] {\footnotesize $x_2$};
\node [left=0pt of {$(Tx2.east)-(0,0.25)$}] {\footnotesize $x_{2+N}$};

\node [left=4pt of {$(Tx3.east)+(0,0.25)$}] {\footnotesize $x_N$};
\node [left=2pt of {$(Tx3.east)-(0,0.25)$}] {\footnotesize $x_{2N}$};

\node (R) at (4.8,-2) [ help lines, rectangle, text=black, fill=orange!10, minimum width=0.8cm, minimum height=3cm, align = center]{};
\node [right=0.08cm of {$(R.west)+(0,1)$}] {\footnotesize $y_1$};
\node [right=0.08cm of {$(R.west)+(0,0.25)$}] {\footnotesize $y_2$};
\node [right=0.08cm of {$(R.west)-(0,1)$}] {\footnotesize $y_N$};
\node [right=0.2cm of {$(R.west)-(0,0.375)$}] {\footnotesize $\vdots$};
\node [left=0.2cm of {$(R.west)-(0,0.375)$}] {\footnotesize $\vdots$};

\node [above=0cm of R, align=center]{\footnotesize Receiver};

\node (G) at (2.5,-2.15) [draw, very thick, rectangle, minimum width=2.2cm, minimum height=4.8cm, fill=black!10, align=center] { \footnotesize $N$-Sum Box\\ [1.1cm] $\by = \bM \bx$\\[0.55cm]~\\  \footnotesize Com. Cost \\[-0.1cm] \footnotesize $N$ qudits};

\node [below=0.5cm of G, align=center] {\footnotesize $\by=(y_1,y_2,\cdots,y_N)^\top$\\ \footnotesize $\bx=(x_1,x_2,\cdots,x_{2N})^\top$};

\draw [thick, ->] let \p{Geast}=($(G.east)+(0,1.1)$), \p{Rwest}=(R.west) in (\x{Geast},\y{Geast})--(\x{Rwest},\y{Geast}) node [above, midway] {};
\draw [thick, ->] let \p{Geast}=($(G.east)-(0,0.9)$), \p{Rwest}=(R.west) in (\x{Geast},\y{Geast})--(\x{Rwest},\y{Geast}) node [above, midway] {};
\draw [thick, ->] let \p{Geast}=($(G.east)+(0,0.37)$), \p{Rwest}=(R.west) in (\x{Geast},\y{Geast})--(\x{Rwest},\y{Geast}) node [below, midway] {};

\draw [thick, ->] let \p{Tx1east}=($(Tx1.east)+(0,0.25)$), \p{Gwest}=(G.west) in (\x{Tx1east},\y{Tx1east})--(\x{Gwest},\y{Tx1east}) node [above, midway] {}; 
\draw [thick,->] let \p{Tx1east}=($(Tx1.east)-(0,0.25)$), \p{Gwest}=(G.west) in (\x{Tx1east},\y{Tx1east})--(\x{Gwest},\y{Tx1east}) node [above, midway] {}; 
\draw [thick, ->] let \p{Tx2east}=($(Tx2.east)+(0,0.25)$), \p{Gwest}=(G.west) in (\x{Tx2east},\y{Tx2east})--(\x{Gwest},\y{Tx2east}) node [below, midway] {};
\draw [thick, ->] let \p{Tx2east}=($(Tx2.east)-(0,0.25)$), \p{Gwest}=(G.west) in (\x{Tx2east},\y{Tx2east})--(\x{Gwest},\y{Tx2east}) node [below, midway]{};
\draw [thick, ->] let \p{Tx3east}=($(Tx3.east)+(0,0.25)$), \p{Gwest}=(G.west) in (\x{Tx3east},\y{Tx3east})--(\x{Gwest},\y{Tx3east}) node [below, midway] {};
\draw [thick, ->] let \p{Tx3east}=($(Tx3.east)-(0,0.25)$), \p{Gwest}=(G.west) in (\x{Tx3east},\y{Tx3east})--(\x{Gwest},\y{Tx3east}) node [below, midway]{};

\end{scope}
\end{tikzpicture}
\caption{Quantum circuit and black-box representation for an $N$-sum box with transfer function $\by = \bM \bx$.}
\label{fig:sbcNsumbox}
\end{figure*}
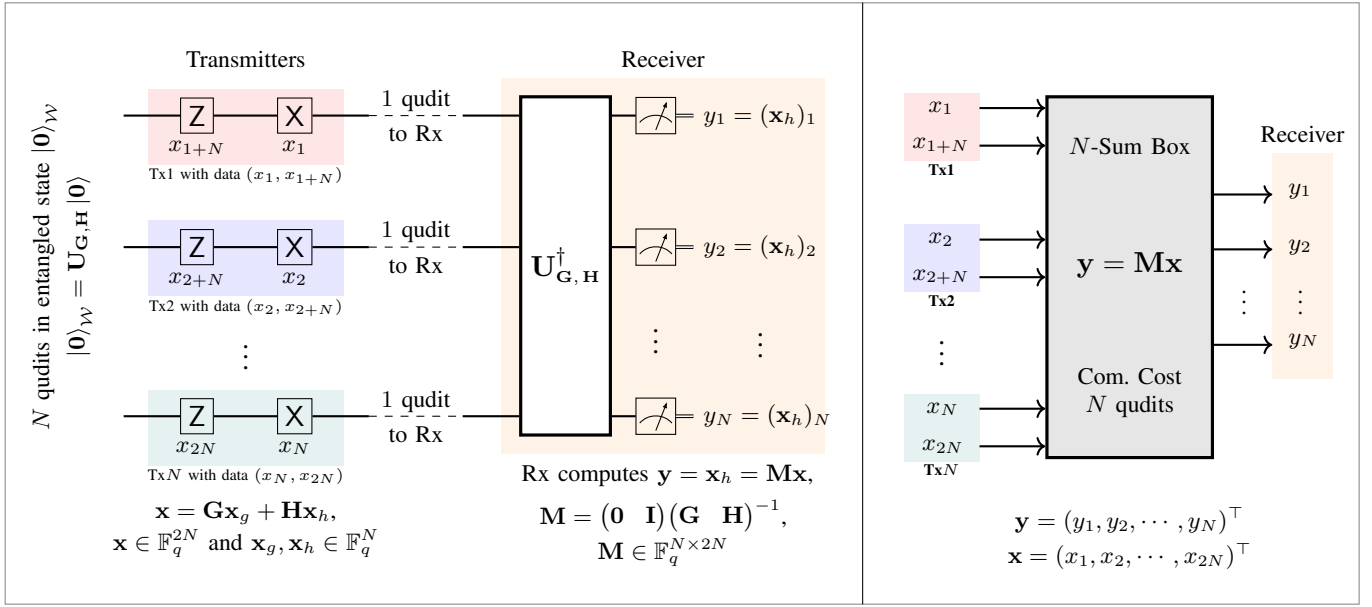

First, we define strongly self-orthogonal matrices, which are used in the construction of $N$-sum boxes based on the stabilizer formalism.

\begin{definition}
    \label{def:ssomatrix}
    A matrix $\bM \in \Fq^{2N \times N}$ is said to be {\em strongly self-orthogonal (SSO)} if its columns span an SSO subspace, or equivalently, if $\bM^\top \bJ \bM = \bzero$ and $\rank(\bM) = N$.
    The set of SSO matrices is denoted by $\cM_o$.
\end{definition}

\begin{remark}
    The reason why we define strongly self-orthogonal matrices is because self-orthogonal matrices would not generate SSO subspaces.
    For example, in $\F_9^4$ where $\F_9 \equiv \F_3[x] / (x^2+x+2)$ with generator element $\a$, the matrix
    \[
    \bG^\top = \ppmatrix{1 & 0 & 0 & 2 \\ 0 & 1 & -\a & 0}
    \]
    is self-orthogonal since the $\F_3$-trace of each element of the matrix
    \[
    \bG^\top \bJ \bG = \ppmatrix{0 & \a+2 \\ \a+2 & 0}
    \]
    is 0, but the space spanned by its rows wouldn not be strongly self-orthogonal since
    \[
    \symp{\a\bv_1}{\bv_2} = \trace_{\F_9 / \F_3}(2\a + \a^2) \neq 0. 
    \]
\end{remark}

Let us now characterize some classes of $N$-sum boxes that can be constructed based on the stabilizer formalism.

\subsection{Case with disallowed LITs}

The following theorem describes which transfer matrices are feasible from a stabilizer-based construction when LITs are disallowed.

\begin{theorem}
    \label{thm:nsumboxconstruction}
    Suppose there exists $\bG \in \Fq^{2N \times N}$ such that
    \begin{enumerate}
        \item \label{item:Gsso} $\bG \in \cM_o$,
        \item there exists $\bH \in \Fq^{2N \times N}$ such that $\ppmatrix{\bG & \bH}$ is full-rank, 
        \item \label{item:HGtransfermatrix} $\bM \in \Fq^{N \times 2N}$ is the submatrix comprised of the bottom $N$ rows of $\ppmatrix{\bG & \bH}^{-1}$, \ie $\bM \coloneqq \ppmatrix{\bzero & \bI} \ppmatrix{\bG & \bH}^{-1}$.
    \end{enumerate}
    Then there exists a stabilizer-based construction for an $N$-sum box over $\Fq$ with transfer matrix $\bM$.
\end{theorem}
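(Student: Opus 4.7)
The plan is to verify that the quantum circuit of Figure~\ref{fig:sbcNsumbox} realizes the prescribed transfer function $\by = \bM \bx$, by tracking the joint state of the $N$ qudits through the stabilizer formalism of Section~\ref{sec:stab_formalism}. First, I would translate the three hypotheses into stabilizer language. By condition~\ref{item:Gsso}, $\cV \coloneqq \colspan{\bG}$ is a maximal SSO subspace of $\Fq^{2N}$ (so $\cV = \cV^{\symperp}$ and $\dim \cV = N$), which gives rise to the maximal stabilizer $\cS(\cV)$ from Equation~\eqref{eq:stab}. In this maximal case the decomposition in Equation~\eqref{eq:Hdecomposition} collapses to $\cH^{\otimes N} = \cW$, a $q^N$-dimensional Hilbert space spanned by $\px*{|\overline{\bs}\rangle}$; and by Proposition~\ref{prop:measurementoutcome} applied with $\kappa = N$ (so that $\bG^\perp = \bG$), each coset $\overline{\bs}$ is uniquely labeled by $(\bs)_h \in \Fq^N$, where the completion $\bH$ from hypothesis~2 is used to define $(\cdot)_h$.

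Next, I would follow the three stages of the circuit. The $N$ transmitters share the entangled initial state $|\overline{\bzero}\rangle \in \cW$, prepared from $|\bzero\rangle$ by a unitary $\bU_{\bG,\bH}$ mapping the computational basis of $\cH^{\otimes N}$ to the stabilizer eigenbasis $\px*{|\overline{\bs}\rangle}$; such a $\bU_{\bG,\bH}$ exists because $\ppmatrix{\bG & \bH}$ is invertible by hypothesis~2. Each transmitter $n \in [N]$ then applies $\sX(x_n)\sZ(x_{n+N})$ to its qudit, so the collective action on the joint state equals $\tilde{\bW}(\bx)$, which coincides with the stabilizer element $\bW(\bx) \in \cS(\cV)$ up to a global scalar phase. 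By Proposition~\ref{prop:stab}(c), this maps $|\overline{\bzero}\rangle$ to a scalar multiple of $|\overline{\bx}\rangle$, which is immaterial for the subsequent measurement.

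Finally, the receiver applies $\bU_{\bG,\bH}^\dagger$ and measures each qudit in the computational basis. As noted in the remark following Proposition~\ref{prop:measurementoutcome}, this composition implements the PVM $\cP^\cV$ of Equation~\eqref{eq:pvm}. Applying $\cP^\cV$ to $|\overline{\bx}\rangle$ deterministically returns the outcome $(\bx)_h$ by Proposition~\ref{prop:measurementoutcome}, and hypothesis~\ref{item:HGtransfermatrix} then identifies $(\bx)_h$ with $\bM \bx$. Hence $\by = \bM\bx$, as claimed.

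The main bookkeeping obstacle will be tracking the scalar phases that arise both in the lift $\cV \to \cS(\cV)$ (the constants $c_\bv$ of Equation~\eqref{eq:stab}) and in the passage between $\tilde{\bW}(\bx)$ and $\bW(\bx)$, and confirming that applying $\bU_{\bG,\bH}^\dagger$ before a computational measurement really realizes $\cP^\cV$. However, both are routine consequences of the structural results in Section~\ref{sec:stab_formalism}: the projectors comprising $\cP^\cV$ do not depend on the scalar phases $c_\bv$, and a global phase on the pre-measurement state leaves the measurement outcome unchanged, so no additional work beyond the identifications above is required.
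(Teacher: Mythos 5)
Your proof is correct and follows essentially the same route as the paper's own: prepare the stabilizer state $|\overline{\bzero}\rangle = \bU_{\bG,\bH}|\bzero\rangle$ for $\cV = \colspan{\bG}$, let the transmitters apply $\sX(x_n)\sZ(x_{N+n})$ to realize $\bW(\bx)|\overline{\bzero}\rangle = |\overline{\bx}\rangle$ via Proposition~\ref{prop:stab}(c), and measure with $\cP^\cV$ to obtain $(\bx)_h = \bM\bx$ via Proposition~\ref{prop:measurementoutcome}. The extra remarks you add about the irrelevance of the phases $c_\bv$ and the equivalence between $\bU_{\bG,\bH}^\dagger$-then-computational-measurement and the PVM $\cP^\cV$ are a useful clarification but do not change the argument.
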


\begin{proof}
    Let $\bU_{\bG,\bH} \in \C^{q^N \times q^N}$ be the unitary matrix such that its $i^{th}$ column is the vector representing the state $| \nu(i) \rangle_\cW$, $i \in [q^N]$, identified by Equation~\eqref{eq:idwstates}, where $\nu: [q^N] \to \Fq^N$ is the natural isomorphism (cf. Remark~\ref{rem:symplectic}).
    Let $\zerostate_\cW = \bU_{\bG,\bH} \zerostate$ be the initial entangled state over $\cH^{\otimes N}$.
    Let $\cV = \colspan{\bG}$.
    Assume that transmitter $n \in [N]$ applies $\sX(x_n), \sZ(x_{N+n})$ on his qudit and sends it to the receiver. 
    Then the quantum system received is in the state $\bW(\bx) \zerostate_\cW = | (\bx)_h \rangle_\cW$.
    After performing the PVM $\cP^\cV$ defined in Equation~\eqref{eq:pvm} on the qudits the receiver measures $(\bx)_h$ without error by Proposition~\ref{prop:measurementoutcome}.
    Let $\bM$ be the submatrix comprised of the bottom $N$ rows of $\ppmatrix{\bG & \bH}^{-1}$, then we have that $\bM \bx = \bx_h$, which is the output of the measurement. 
    We proved that for an $N$-sum box with transfer function $\bM$ satisfying condition~\ref{item:HGtransfermatrix} there exists a quantum black box with input $\bx$ and output $\bM\bx$.
\end{proof}

\begin{remark}    
\label{rem:nsumboxconstruction2}
    The terminology ``stabilizer-based construction" stems from the aforementioned correspondence between stabilizers and self-orthogonal spaces. 
    Explicitly, let $\cS = \langle \bW(\bs_1),\ldots,\bW(\bs_\kappa) \rangle \subseteq \HWqN$ be a stabilizer group, \ie a stabilizer group with $\kappa$ independent generators $\bW(\bs_i)$ dependent on $\bs_i \in \Fq^{2N},\ i \in [\kappa]$.
    Let $\bG \in \Fq^{2N \times \kappa}$ be the matrix that has $\bs_i$ as its $i^{th}$ column, then $\bG^\top \bJ \bG = \bzero$ and $\rank(\bG) = \kappa$.
    For the case $\kappa=N$, the stabilizer is maximal, $\zerostate_\cW$ is its stabilized state, and $\bG \in \cM_o$.
\end{remark}

\begin{remark}
    A stabilizer-based construction for any feasible $N$-sum box $\by = \bM \bx$ is information-theoretically optimal as a black-box implementation in the sense that it has the least possible quantum download cost. 
    In other words, since the transfer matrix $\bM$ is full rank, there cannot exist a more efficient (in terms of download cost) construction of the same $N$-sum box by some other (non-stabilizer-based) means so that the output $\by$ is capable of delivering $N$ $q$-ary digits to the receiver, which cannot be done with a communication cost of less than $N$ qudits by the Holevo bound~\cite{holevo1973bounds}.
\end{remark}

\begin{remark}
    \label{rem:quantumtrivialprotocol}
    From a quantum coding-theoretic perspective, a stabilizer-based construction for an $N$-sum box is equivalent to preparing a stabilizer state, applying an $N$-qudit error corresponding to a string of $2N$ symbols from a finite field, and computing the syndrome on this state as output.
\end{remark}

We denote by $\cM_{sbc}$ the set of all the transfer matrices resulting from stabilizer-based constructions with disallowed LITs. 
In the following, we establish that the set of transfer matrices achievable through the $(\bG,\bH)$-construction established by Theorem~\ref{thm:nsumboxconstruction} is the same as the set of SSO matrices, \ie

\begin{lemma}
    \label{lemm:charM}
    $\cM_o = \cM_{sbc}$.
\end{lemma}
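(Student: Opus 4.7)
The plan is to prove both inclusions, identifying a transfer matrix $\bM \in \Fq^{N \times 2N}$ with the subspace $\rowspan{\bM} \subseteq \Fq^{2N}$, and recalling that an element of $\cM_o$ is correspondingly identified with the SSO column span of a $2N\times N$ matrix; under this identification both sides describe $N$-dimensional SSO subspaces of $\Fq^{2N}$. The whole argument rests on one symplectic identity: since $\bJ^\top = -\bJ$ and $\bJ^2 = -\bI_{2N}$, we have $\bJ^\top \bJ^2 = \bJ$, whence for any $\bA \in \Fq^{2N \times N}$,
\[
(\bJ \bA)^\top \bJ (\bJ \bA) \;=\; \bA^\top \bJ^\top \bJ^2 \bA \;=\; \bA^\top \bJ \bA.
\]
In particular, $\bA$ is SSO if and only if $\bJ \bA$ is SSO, and when this holds $\colspan(\bJ \bA)$ equals the ordinary orthogonal complement of $\colspan \bA$ (the inclusion is immediate from $\bA^\top \bJ \bA = \bzero$, and the dimensions agree). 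I will use this observation repeatedly as a ``key lemma.''

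For the direction $\cM_{sbc} \subseteq \cM_o$, I would take any $\bM$ produced by Theorem~\ref{thm:nsumboxconstruction} from SSO $\bG$ and completion $\bH$. Reading $\bM \bG = \bzero$ off the identity $\bM \ppmatrix{\bG & \bH} = \ppmatrix{\bzero & \bI}$ places the rows of $\bM$ in the ordinary left null space of $\bG$, which by the key lemma equals $\colspan(\bJ \bG)$. Thus $\bM^\top = \bJ \bG\, \bA$ for some invertible $\bA \in \Fq^{N \times N}$, and a short computation yields $\bM \bJ \bM^\top = \bA^\top \bG^\top \bJ \bG \bA = \bzero$. Combined with $\rank \bM = N$, this identifies $\rowspan{\bM}$ as an SSO subspace, i.e., $\bM \in \cM_o$ under our identification.

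For the reverse direction $\cM_o \subseteq \cM_{sbc}$, I would reverse the construction. Starting from $\bM$ of rank $N$ with $\bM \bJ \bM^\top = \bzero$, define $\bG \coloneqq \bJ \bM^\top$. The key lemma makes $\bG$ SSO with $\rank \bG = N$, while $\bM \bG = \bM \bJ \bM^\top = \bzero$ together with a dimension count gives $\colspan \bG = \kernel \bM$. Next, pick any right inverse $\bH \in \Fq^{2N\times N}$ of $\bM$ (which exists because $\bM$ has full row rank); such $\bH$ has rank $N$ and intersects $\kernel \bM$ trivially (if $\bH \bc \in \kernel \bM$ then $\bc = \bM \bH \bc = \bzero$), so $\ppmatrix{\bG & \bH}$ is invertible. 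Finally $\bM \ppmatrix{\bG & \bH} = \ppmatrix{\bzero & \bI}$ rearranges to $\bM = \ppmatrix{\bzero & \bI} \ppmatrix{\bG & \bH}^{-1}$, placing $\bM$ in $\cM_{sbc}$.

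No step poses a serious obstacle; the entire argument is a short symplectic linear-algebra exercise once the key lemma on $\bJ$ swapping an SSO subspace with its ordinary orthogonal complement is in hand. The only care point is bookkeeping between the two shapes of SSO matrices ($2N \times N$ vs.\ $N \times 2N$) and between ordinary and symplectic orthogonality, so that the symmetric roles of $\bG$ and $\bM^\top$ remain visible throughout.
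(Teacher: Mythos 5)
Your proof is correct and follows essentially the same route as the paper's: both hinge on the fact that $\bJ$ carries SSO subspaces to their ordinary orthogonal complements (equivalently, on the identity $\bJ^\top\bJ^2=\bJ$), used to pass between $\bM$ and $\bG$ via $\rowspan{\bM}=\colspan(\bJ\bG)^\top$ and $\kernel\bM=\colspan(\bJ\bM^\top)$. The only cosmetic difference is in the reverse inclusion: you construct $\bG=\bJ\bM^\top$ and take $\bH$ as an explicit right inverse of $\bM$, whereas the paper obtains $\ppmatrix{\bG & \bH}$ by inverting an arbitrary completion of $\bM$ to a square matrix and then argues $\bG$ is SSO ``by the same argument''; these are two phrasings of the same calculation.
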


\begin{proof}
    Let $\bM \in \Fq^{N \times 2N}$ be such that $\bM^\top \in \cM_{sbc}$.
    By condition~\ref{item:HGtransfermatrix} of Theorem~\ref{thm:nsumboxconstruction} we have that $\bM \bG = \bzero = \bG^\top \bJ \bG$.
    Since $\bG$ is full-rank, also $\bG^\top \bJ$ is full-rank, so $\rowspan{\bM} = \rowspan{\bG^\top \bJ}$.
    This implies that $\bM = \bP \bG^\top \bJ$ for an invertible matrix $\bP \in \Fq^N$, and since $\bJ\bJ^\top = \bI$ we can conclude that $\bM \bJ \bM^\top = \bP \bG^\top \bJ \bJ \bJ^\top \bG \bP^\top = \bzero$, \ie $\bM^\top \in \cM_o$ and $\cM_{sbc} \subseteq \cM_o$.
    
    Now, let $\bM \in \Fq^{N \times 2N}$ be such that $\bM^\top \in \cM_o$. 
    Let $\bN \in \Fq^{N \times 2N}$ be such that $\ppmatrix{\bN^\top & \bM^\top}$ is full-rank.
    Then the matrix $\ppmatrix{\bN^\top & \bM^\top}$ is invertible and we can write its inverse as $\ppmatrix{\bG & \bH}$, where $\bG,\bH \in \Fq^{N \times 2N}$ are full-rank matrices. 
    Clearly, $\bM \ppmatrix{\bG & \bH} = \ppmatrix{\bzero & \bI}$, so by the same argument above it is easy to see that $\bG \in \cM_o$, \ie $\cM_o \subseteq \cM_{sbc}$.
\end{proof}

\begin{remark}
    \label{rem:symplectic}
    Let $\bG = \ppsmatrix{\bA \\ \bB} \in \cM_o$ be any SSO matrix and $\bH = \ppsmatrix{\bC \\ \bD}$ be such that the matrix $\bF \coloneqq \ppmatrix{\bG & \bH}$ is symplectic.
    Then, by Equation~\eqref{eq:SympInv}, we have $\ppmatrix{\bzero & \bI} \bF^{-1} = \ppmatrix{-\bB^\top & \bA^\top}$ which is again an SSO matrix. 
    Lemma~\ref{lemm:charM} implies that we only need to complete $\bG$ to a symplectic matrix (instead of invertible).
    The well-known connection between symplectic matrices and the stabilizer formalism~\cite{CRSS98} allows for a simpler description of the matrix $\bU_{\bG,\bH}$, as the following example illustrates.
\end{remark}

\begin{example}
    \label{ex:two_sum_box}
    Suppose we have two parties, Tx1 and Tx2, both possessing two bits $\ba = (x_1, x_3),\ \bb = (x_2, x_4)\in \F_2^2$, respectively (see Figure~\ref{fig:twosumbox}). 
    The two-sum transmission protocol computes the sum of their bits $(x_1 + x_2, x_3 + x_4)$ starting with the Bell state $|\beta_{00}\rangle = (|00\rangle+|11\rangle)/\sqrt{2}$, which is stabilized by the stabilizer $\cS = \langle \bW(0,0,1,1),\bW(1,1,0,0)\rangle$.
    Consider the matrix 
    \begin{equation*}
    \bF = \ppmatrix{\bG & \bH} = 
    \ppmatrix{0 & 1 & 1 & 0 \\ 
              0 & 1 & 0 & 0 \\ 
              1 & 0 & 0 & 0 \\ 
              1 & 0 & 0 & 1},
    \end{equation*}
    where $\bG$ is determined by $\cS$ and $\bH$ is chosen so that $\bF$ is a symplectic matrix. The symplectic matrix can be decomposed, \eg using the Bruhat decomposition~\cite{MR18,PTC22}, as 
    \begin{equation*}
    \bF = 
    \ppmatrix{1 & 1 & 0 & 0 \\
              0 & 1 & 0 & 0 \\
              0 & 0 & 1 & 0 \\
              0 & 0 & 1 & 1} \cdot 
    \ppmatrix{0 & 0 & 1 & 0 \\
              0 & 1 & 0 & 0 \\
              1 & 0 & 0 & 0 \\
              0 & 0 & 0 & 1}.
    \end{equation*}
    The components correspond to quantum gates $\mathsf{CNOT}$ and a partial Hadamard $\sH \otimes \sI$ on the first qudit~\cite{PRTC20}. This is precisely the circuit $\bU_{\bG,\bH}$ for which $|\beta_{00}\rangle = \bU_{\bG,\bH}|00\rangle$.
    Using Equation~\eqref{eq:SympInv} we then obtain the transfer matrix
    \begin{equation*}
    \bM = \ppmatrix{{\bf 0} & \bI}\bF^{-1} = \ppmatrix{1&1&0&0\\0&0&1&1},
    \end{equation*}
    which is exactly the functional form of the two-sum transmission protocol.
    This approach allows for a straightforward generalization that computes $\mathbf{M}\bx$ starting with the state $(|0..0\rangle + |1..1\rangle) / \sqrt{2}$, where
    \[
    \mathbf{M} = \left(\!\!\begin{array}{c|c}
    110 \cdots 00 & 000 \cdots 00 \\
    011 \cdots 00 & 000 \cdots 00 \\ 
    \vdots & \vdots \\
    000 \cdots 11 & 000 \cdots 00 \\
    \hline
    000 \cdots 00 & 111 \cdots 11
    \end{array}\!\!\right)\in \Fq^{N\times 2N}.
    \]
\end{example}

The following theorem fully characterizes $N$-sum boxes without LITs and follows directly from Theorem~\ref{thm:nsumboxconstruction}, Remark~\ref{rem:nsumboxconstruction2} and Lemma~\ref{lemm:charM}.

\begin{theorem}
    Let $\bM \in \cM_o$. A construction based on a stabilizer $\cS \subseteq \HWqN$ exists for an $N$-sum box over $\Fq$ with transfer matrix $\bM^\top$ if and only if $\cS$ is a maximal stabilizer.
\end{theorem}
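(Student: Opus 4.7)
The plan is to treat the two directions separately, since the ``if'' part is essentially a direct application of results already established while the ``only if'' part requires a dimension counting argument based on the structure of the measurement.

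For the ``if'' direction, suppose $\bM\in\cM_o$ and the goal is to exhibit a maximal stabilizer $\cS$ giving a construction of the $N$-sum box with transfer matrix $\bM^\top$. I would first invoke Lemma~\ref{lemm:charM} to get $\bM\in\cM_{sbc}$, which produces matrices $\bG\in\cM_o$ and $\bH$ satisfying the three hypotheses of Theorem~\ref{thm:nsumboxconstruction} and delivering $\bM^\top$ as the transfer matrix. By Definition~\ref{def:ssomatrix}, the $N$ columns of $\bG$ span an $N$-dimensional SSO subspace $\cV\subseteq\Fq^{2N}$, and by the one-to-one correspondence between SSO subspaces and maximal stabilizers (stated right after Remark~\ref{rem:sso_to_symplectic}), the associated $\cS(\cV)$ is maximal. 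Setting $\cS=\cS(\cV)$ finishes this direction.

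For the ``only if'' direction, suppose $\cS\subseteq\HWqN$ is a stabilizer underlying a construction of the $N$-sum box with transfer matrix $\bM^\top$, and let $\cV$ be the associated self-orthogonal subspace with $D:=\dim(\cV)$. The key observation is that, by Proposition~\ref{prop:stab}(b), the PVM $\cP^\cV$ used at the receiver produces outcomes indexed by cosets in $\Fq^{2N}/\cV^{\symperp}$, a set of size $q^D$; equivalently, the measurement yields exactly $D$ $q$-ary digits of classical information (this is precisely the setting of Proposition~\ref{prop:measurementoutcome} with $\kappa=D$). Since the $N$-sum box is required to output $\by=\bM^\top\bx\in\Fq^N$ and $\bM\in\cM_o$ has rank $N$ by Definition~\ref{def:ssomatrix}, the transfer matrix genuinely delivers $N$ independent $q$-ary digits; no classical relabeling of $D$ digits can realize this if $D<N$ (indeed the Holevo bound, already invoked in the remarks, prevents it). Combined with the self-orthogonality constraint $D\le N$ (cf.\ Remark~\ref{rem:sso_to_symplectic}), this forces $D=N$. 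Therefore $\cV$ is SSO and $\cS=\cS(\cV)$ is maximal.

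The main obstacle is making precise the claim that a stabilizer-based construction intrinsically outputs exactly $D$ $q$-ary digits and that this is insufficient to implement a rank-$N$ transfer matrix when $D<N$. This reduces to an information-theoretic dimension count already encoded in the decomposition \eqref{eq:Hdecomposition} and Proposition~\ref{prop:measurementoutcome}, so once those are cited carefully the rest is routine.
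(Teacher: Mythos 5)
Your proof is correct and follows the same route the paper indicates. The paper states only that this theorem ``follows directly from Theorem~\ref{thm:nsumboxconstruction}, Remark~\ref{rem:nsumboxconstruction2} and Lemma~\ref{lemm:charM}'' without spelling out the argument; your ``if'' direction uses exactly those three results, and your ``only if'' direction makes explicit what Remark~\ref{rem:nsumboxconstruction2} leaves implicit, namely that a stabilizer with $\kappa<N$ independent generators can only produce $\kappa$ $q$-ary output digits (via Propositions~\ref{prop:stab} and~\ref{prop:measurementoutcome}) and therefore cannot realize a rank-$N$ transfer matrix.
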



\subsection{Case with allowed LITs}

Now we explore what is possible when LITs are allowed. 
The following theorem shows that the transfer matrix of any $(\bG,\bH)$ construction is equivalent (up to LITs) to the null-space of $\bG$, which can be explicitly represented as $\bJ^\top \bG$.

\begin{proposition}
    Let $\bM \in \Fq^{N \times 2N}$ be such that $\bM^\top \in \cM_{sbc}$. Then $\bM \liteq \bG^\top \bJ$, the null-space of $\bG^\top$.
\end{proposition}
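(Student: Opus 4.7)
The plan is to show that $\bM$ and $\bG^\top\bJ$ have the same row space, which already gives the stronger conclusion that $\bM$ and $\bG^\top\bJ$ are related by a receiver invertible transformation alone (i.e., the LIT can be taken to be the identity). Since receiver transformations are a special case of the equivalence $\liteq$, this suffices.

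First I would identify the key algebraic fact that $\bM\bG=\bzero$. This follows directly from condition~\ref{item:HGtransfermatrix} of Theorem~\ref{thm:nsumboxconstruction}: by definition,
\[
\bM\ppmatrix{\bG & \bH} = \ppmatrix{\bzero & \bI}\ppmatrix{\bG & \bH}^{-1}\ppmatrix{\bG & \bH} = \ppmatrix{\bzero & \bI},
\]
so in particular $\bM\bG=\bzero$, meaning every row of $\bM$ lies in the left null-space of $\bG$ (equivalently, the null-space of $\bG^\top$).

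Second, I would observe that $\bG^\top\bJ$ also annihilates $\bG$ on the right, since $\bG$ is SSO: by Lemma~\ref{lemm:charM}/Definition~\ref{def:ssomatrix}, $\bG^\top\bJ\bG=\bzero$. Thus both $\bM$ and $\bG^\top\bJ$ are $N\times 2N$ matrices whose rows lie in the null-space of $\bG^\top$. Because $\bG$ has rank $N$ (as $\bG\in\cM_o$) and $\bJ$ is invertible, $\bG^\top\bJ$ has rank $N$, and $\bM$ likewise has rank $N$ because $\ppmatrix{\bG & \bH}^{-1}$ is invertible. Since the null-space of $\bG^\top$ has dimension exactly $2N-N=N$, both row spaces coincide with this null-space.

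Finally, two $N\times 2N$ matrices of rank $N$ sharing the same row space differ by a left multiplication by an invertible matrix $\bP\in\GL_{N,\Fq}$. Hence $\bM=\bP(\bG^\top\bJ)$, which yields $\bM\liteq\bG^\top\bJ$ by taking the LIT on the right to be the identity. I do not anticipate any real obstacle here: the argument is essentially a rank/null-space count, and most of it is already encapsulated in the first half of the proof of Lemma~\ref{lemm:charM}. The only mildly subtle point is to note that $\rowspan{\bG^\top\bJ}$ truly has dimension $N$, which uses both the invertibility of $\bJ$ and the full-rank hypothesis on $\bG$ built into the definition of $\cM_o$.
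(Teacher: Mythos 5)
Your proposal is correct and follows essentially the same route as the paper: the paper's proof of the proposition simply references the first half of the proof of Lemma~\ref{lemm:charM}, which establishes exactly the chain $\bM\bG=\bzero$, rank/null-space dimension count, $\rowspan{\bM}=\rowspan{\bG^\top\bJ}$, hence $\bM=\bP\bG^\top\bJ$ for some $\bP\in\GL_{N,\Fq}$. You have merely written out the details that the paper leaves implicit by the cross-reference.
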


\begin{proof}
    This follows directly from the proof of Lemma~\ref{lemm:charM}, since we can write $\bM = \bP \bG^\top \bJ$
    for $\bP \in \GL_{N,\Fq}$, \ie $\bM \liteq \bJ^\top \bG$.
\end{proof}

A known property of SSO matrices is that they can be written in the so-called \emph{standard form}~\cite{Omeara}, \ie if $\bM \in \cM_o$ then there exist $\bP \in \GL_{N,\Fq},\ \bQ \in \GL_{2N,\Fq}$ such that
\begin{equation}
    \label{eq:standardformnonlit}
    \bP \bM^\top \bQ = \ppmatrix{\bI & \bS},
\end{equation}
where $\bS$ is a symmetric $N \times N$ matrix, \ie $\bS^\top = \bS$. 
We denote by $\cM_s$ the set of transfer matrices in standard form, \ie
\[ \begin{split}
    \cM_s \coloneqq \px*{ \bM \in \Fq^{2N \times N} : \bM^\top = \ppmatrix{\bI & \bS},\ \bS \in \Fq^{N \times N},\ \bS^\top = \bS }.
\end{split} \]
The following lemma shows that any SSO matrix $\bM$ can be transformed by at most $N$ signed column-swapping operations into a matrix $\bM' = \ppmatrix{\bM'_l & \bM'_r}^\top \in \cM_o$ such that $\bM'_l$ is full-rank. For completeness, the proof is included in Appendix~\ref{app:colswaplemma}.

\begin{lemma}
    \label{lemm:colswap}
    For any $\bM \in \cM_o$ there exists a diagonal matrix $\bgS \in \px*{0,1}^{N \times N}$ such that
    \begin{align}
        (\bM')^\top & = \bM^\top \ppmatrix{\bI - \bgS & \bgS \\ -\bgS & \bI - \bgS} \liteq \bM^\top, \label{eq:colswapequivalence} \\
        \bM' & \in \cM_o, \label{eq:colswapsso} \\
        \det\paren*{\bM'_l} & \neq 0. \label{eq:colswapdet}
    \end{align}
\end{lemma}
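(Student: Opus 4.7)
The plan is to split the statement into the two easy conclusions and the one that carries the real content. Setting $\bL \coloneqq \ppmatrix{\bI - \bgS & \bgS \\ -\bgS & \bI - \bgS}$ so that $(\bM')^\top = \bM^\top \bL$, conclusions \eqref{eq:colswapequivalence} and \eqref{eq:colswapsso} reduce to routine block manipulations exploiting $\bgS^2 = \bgS$. The four blocks of $\bL$ are diagonal and $(\bI - \bgS)^2 + \bgS^2 = \bI$ has determinant $1$, so $\bL \in \LIT_{N,\Fq}$ and \eqref{eq:colswapequivalence} is immediate. A direct calculation shows $\bL \bJ \bL^\top = \bJ$, hence $(\bM')^\top \bJ \bM' = \bM^\top (\bL \bJ \bL^\top) \bM = \bM^\top \bJ \bM = \bzero$, and since $\bL$ is invertible, $\rank(\bM') = \rank(\bM) = N$, giving \eqref{eq:colswapsso}.

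The substantive part is \eqref{eq:colswapdet}. Writing $\bM = \ppmatrix{\bM_l \\ \bM_r}$ with $\bM_l,\bM_r \in \Fq^{N \times N}$, one computes $\bM'_l = (\bI - \bgS)\bM_l - \bgS \bM_r$, so row $i$ of $\bM'_l$ is row $i$ of $\bM_l$ when $s_i = 0$ and is the negative of row $i$ of $\bM_r$ when $s_i = 1$. Since signs do not affect linear independence, the task becomes: select one row of $\bM$ from each pair $\{\text{row } i,\ \text{row } N+i\}$ so that the resulting $N$ rows are linearly independent. I would phrase this as a matroid transversal problem, with ground set $[2N]$ equipped with the linear matroid of the rows of $\bM$ and partition $A_i = \{i, N+i\}$, and then invoke Rado's theorem: an independent transversal (which, being of size $N$ inside a rank-$N$ matroid, is automatically a basis) exists if and only if for every $I \subseteq [N]$, the submatrix $\bM_I$ of $\bM$ with rows indexed by $I \cup (N+I)$ satisfies $\rank(\bM_I) \geq |I|$.

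The main obstacle, and the only step where the SSO hypothesis is actually used, is proving this rank condition. My plan is to let $W_{I^c} \coloneqq \spacespan\{\be_j, \be_{N+j} : j \in [N] \setminus I\}$, a nondegenerate symplectic coordinate subspace of $\Fq^{2N}$ of dimension $2(N - |I|)$. Because $\bM$ is injective (full column rank) and $\ker(\bM_I) = \{\bx \in \Fq^N : \bM\bx \in W_{I^c}\}$, the map $\bx \mapsto \bM\bx$ identifies $\ker(\bM_I)$ with $\colspan(\bM) \cap W_{I^c}$. Since $\colspan(\bM)$ is Lagrangian, this intersection is isotropic inside $W_{I^c}$ and therefore has dimension at most $N - |I|$ (the maximal isotropic dimension in a symplectic space of dimension $2(N - |I|)$). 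Rearranging via rank-nullity gives $\rank(\bM_I) \geq |I|$, so Rado's theorem produces the required $S \coloneqq \{i : s_i = 1\}$, completing \eqref{eq:colswapdet}.
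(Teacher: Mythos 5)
Your proof is correct, and the substantive part (\eqref{eq:colswapdet}) takes a genuinely different route from the paper. For the two easy conclusions you proceed exactly as the paper does, via $\bgS^2 = \bgS$ and the symplecticity of the swap matrix. For \eqref{eq:colswapdet}, the paper runs a greedy Algorithm (its Algorithm~1): it sweeps $i = 1,\dots,N$, at each step keeping the $i$-th column of $\bM_l$ if independent of the already-fixed columns and otherwise signed-swapping with the $i$-th column of $\bM_r$; it then shows by contradiction that the greedy step can never fail, by exhibiting two null-vectors of an SSO matrix that would violate the identity $\bV_r\bV_l^\top = \bV_l\bV_r^\top$ forced by $\bV^\top\bJ\bV = \bzero$. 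You instead cast the selection as a matroid transversal problem with blocks $A_i = \{i, N+i\}$ and invoke Rado's theorem, verifying the rank condition $\rank(\bM_I) \geq |I|$ via the observation that $\colspan(\bM)$ is Lagrangian and that $\colspan(\bM) \cap W_{I^c}$ is isotropic in the nondegenerate symplectic coordinate subspace $W_{I^c}$, hence of dimension at most $N - |I|$; rank--nullity finishes. Both arguments are sound and ultimately use the same structural fact (Lagrangian/SSO), but packaged differently: the paper's version is self-contained and directly constructive (it yields an $O(N^3)$ procedure for finding $\bgS$), while yours is shorter and more conceptual, lifting the combinatorial core to Rado's theorem and the linear-algebraic core to the standard bound on isotropic subspaces in a symplectic space. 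The trade-off is that your argument is nonconstructive as stated (though Rado's theorem can be made algorithmic) and relies on a theorem the paper's audience may not have at hand.
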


\begin{remark}
    Notice that $(\bM')^\top$ is obtained from $\bM^\top$ by \emph{signed} column-swapping operations, \ie swapping corresponding columns of $\bG_l$ and $\bG_r$ with a sign-change operation. 
    Specifically, if $\bgS_{i,i} = 1$, the $i^{th}$ column of $\bG_l$ is replaced with the \emph{negative} of the $i^{th}$ column of $\bG_r$, while the $i^{th}$ column of $\bG_r$ is replaced with the $i^{th}$ column of $\bG_l$.
\end{remark}

Our next result shows that with LITs, every feasible transfer matrix $\bM \in \cM_o$ has an equivalent representation in the standard form $\bM \liteq \ppmatrix{\bI & \bS}$, where $\bS^\top = \bS$. 
Notice that in Equation~\eqref{eq:standardformnonlit} the matrix $\bQ$ is not necessarily an LIT.
The contribution here is to show that the standard form remains valid when only LITs are allowed. 

\begin{theorem}
    \label{thm:contructiontostandardform}
    For every $\bM \in \cM_{sbc}$ there exists $\bM' \in \cM_s$ such that $(\bM')^\top \liteq \bM^\top$. Conversely, $\cM_s \subseteq \cM_{sbc}$.
\end{theorem}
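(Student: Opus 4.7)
The plan is to prove the forward direction in two steps: first I would normalize the transfer matrix via Lemma~\ref{lemm:colswap} so that its left $N \times N$ block is invertible, then apply a receiver-side invertible transformation to bring the result to the standard form $\ppmatrix{\bI & \bS}$. The converse would be a direct verification using Lemma~\ref{lemm:charM}.

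Starting from $\bM \in \cM_{sbc}$, Lemma~\ref{lemm:charM} gives $\bM \in \cM_o$, so writing $\bM^\top = \ppmatrix{\bA & \bB}$ with $\bA, \bB \in \Fq^{N \times N}$, the SSO condition $\bM^\top \bJ \bM = \bzero$ becomes $\bA \bB^\top = \bB \bA^\top$. I would then invoke Lemma~\ref{lemm:colswap} to obtain a signed column-swap LIT yielding $\tilde{\bM} \in \cM_o$ with $\tilde{\bM}^\top \liteq \bM^\top$ and invertible left block $\tilde{\bA}$; writing $\tilde{\bM}^\top = \ppmatrix{\tilde{\bA} & \tilde{\bB}}$, left-multiplying by the receiver matrix $\bP = \tilde{\bA}^{-1} \in \GL_{N,\Fq}$ produces $\ppmatrix{\bI & \tilde{\bA}^{-1} \tilde{\bB}}$. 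This remains LIT-equivalent to $\bM^\top$ since the $\liteq$ relation permits free composition of a receiver invertible transformation with a transmitter LIT.

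The remaining task is to check that $\bS \coloneqq \tilde{\bA}^{-1} \tilde{\bB}$ is symmetric. Since $\tilde{\bM} \in \cM_o$, the block form of the SSO condition gives $\tilde{\bA} \tilde{\bB}^\top = \tilde{\bB} \tilde{\bA}^\top$; I would then conjugate by $\tilde{\bA}^{-1}$ on the left and $\tilde{\bA}^{-\top}$ on the right to obtain $\bS = \tilde{\bA}^{-1} \tilde{\bB} = \tilde{\bB}^\top \tilde{\bA}^{-\top} = \bS^\top$. Taking $\bM'$ with $(\bM')^\top = \ppmatrix{\bI & \bS}$ then places $\bM'$ in $\cM_s$ with $(\bM')^\top \liteq \bM^\top$, completing the forward direction.

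For the converse inclusion $\cM_s \subseteq \cM_{sbc}$, I would take $\bM \in \cM_s$ with $\bM^\top = \ppmatrix{\bI & \bS}$ and $\bS^\top = \bS$, observe that $\rank(\bM) = N$ is immediate from the leading identity block, and compute $\bM^\top \bJ \bM = \ppmatrix{\bS & -\bI} \ppmatrix{\bI \\ \bS} = \bS - \bS = \bzero$, placing $\bM$ in $\cM_o = \cM_{sbc}$ by Lemma~\ref{lemm:charM}. The hard part will be the first step: ensuring invertibility of the leading block using only LIT operations. This is exactly what Lemma~\ref{lemm:colswap} is engineered to provide, and once it is in hand the rest collapses to the block computations above.
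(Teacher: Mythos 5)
Your argument is correct and is essentially the paper's own proof: apply Lemma~\ref{lemm:colswap} to make the left $N\times N$ block invertible, left-multiply by its inverse (a receiver transform) to reach $\ppmatrix{\bI & \bS}$, and deduce symmetry of $\bS$ from the SSO relation; the converse is the same two-line verification that $\ppmatrix{\bI & \bS}$ with $\bS=\bS^\top$ is full-rank and self-orthogonal. The only cosmetic difference is that you establish $\bS=\bS^\top$ by explicitly conjugating $\tilde{\bA}\tilde{\bB}^\top=\tilde{\bB}\tilde{\bA}^\top$, whereas the paper first notes that left-multiplication by an invertible matrix preserves $\cM_o$ and then reads off symmetry directly from $(\bM'')^\top\bJ\bM''=\bzero$.
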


\begin{proof}
    By Lemma~\ref{lemm:colswap} in Appendix~\ref{app:colswaplemma}, there exists $\bM' \coloneqq \ppmatrix{\bM'_l & \bM'_r}^\top \in \cM_o$ such that $\bM^\top \liteq (\bM')^\top$ and $\bM'_l$ is full-rank. 
    Since multiplication on the left by an invertible square matrix preserves both rank and strong self-orthogonality, we obtain that $(\bM'')^\top = (\bM'_l)^{-1} (\bM')^\top = \ppmatrix{\bI & \bF}$ and $\bM'' \in \cM_o$. Notice that $\bF$ is symmetric, since $\bM'' = \ppmatrix{\bI & \bF}^\top \in \cM_o$ implies that
    \[
    \bzero = (\bM'')^\top \bJ \bM'' = \ppmatrix{\bF & -\bI} \ppmatrix{\bI & \bF}^\top = \bF^\top - \bF,
    \]
    so we can conclude that $\bM'' \in \cM_s$.

    Conversely, it is easy to see that the standard form $\ppmatrix{\bI & \bS}$ is strongly self-orthogonal, since it has rank $N$ and
    \[
    \ppmatrix{\bI & \bS} \bJ \ppmatrix{\bI & \bS}^\top = \ppmatrix{\bS & -\bI} \ppmatrix{\bI & \bS}^\top = \bS - \bS^\top = \bzero.
    \]
    Clearly we have that $\cM_s \subseteq \cM_o = \cM_{\bG,\bH}$.
\end{proof}

\begin{remark}
    The standard form is not unique up to LITs. 
    For example,
    \[ \begin{split}
        \ppmatrix{\bI & \bS_1} & = 
        \ppmatrix{\textcolor{red}{1} & 0 & \textcolor{green}{1} & 1 \\
                  \textcolor{red}{0} & 1 & \textcolor{green}{1} & 1} \liteq
        \ppmatrix{\textcolor{green}{1} & 0 & \textcolor{red}{1} & 1 \\
                  \textcolor{green}{1} & \textcolor{orange}{-}1 & \textcolor{red}{0} & 1} \\
        & \liteq \ppmatrix{1 & 0 \\ 1 & -1}
        \ppmatrix{1 & 0 & 1 & 1 \\
                  1 & -1 & 0 & 1} \\
        & = \ppmatrix{1 & 0 & 1 & 1 \\
                      0 & 1 & 1 & 0} = \ppmatrix{\bI & \bS_2}, \quad \bS_2 \neq \bS_1.
    \end{split} \]
\end{remark}

Next, we define the set $\cM_{\LIT}$ of all possible transfer matrices that we can obtain by applying LITs:
\[
\cM_{\LIT} \coloneqq \px*{\bM \in \Fq^{2N \times N} : \exists \bM' \in \cM_{sbc},\ \bM \liteq \bM'}.
\]
The following theorem shows that any transfer matrix in $\cM_{\LIT}$ is equivalent, up to LITs, to a transfer matrix in standard form.
Figure~\ref{fig:sbc_fig} provides an overview of the relationships between the various forms of transfer functions.

\begin{theorem}
    \label{thm:littostandardform}
    For every $\bM \in \cM_{\LIT}$ there exists $\bM' \in \cM_s$ such that $\bM' \liteq \bM$. Conversely, for every $\bM' \in \cM_s$ there exists $\bM \in \cM_{\LIT}$ such that $\bM \liteq \bM'$.
\end{theorem}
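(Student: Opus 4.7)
The plan is to leverage Theorem~\ref{thm:contructiontostandardform}, which already does the heavy lifting by showing that any $\bM \in \cM_{sbc}$ is LIT-equivalent to some standard-form matrix in $\cM_s$, and conversely that $\cM_s \subseteq \cM_{sbc}$. Theorem~\ref{thm:littostandardform} is essentially the statement that this property propagates to the LIT-closure $\cM_{\LIT}$ of $\cM_{sbc}$. Consequently, the proof reduces to transitivity of $\liteq$ together with the trivial observation that $\cM_{sbc} \subseteq \cM_{\LIT}$.

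First, I would briefly verify that $\liteq$ is a genuine equivalence relation. Reflexivity follows by taking $\bP = \bI$ and $\bgL = \bI$; symmetry follows by inverting $\bP$ and $\bgL$; and transitivity follows from the closure of $\GL_{N,\Fq}$ and $\LIT_{N,\Fq}$ under multiplication. The only nontrivial check is that $\LIT_{N,\Fq}$ is closed under products and inverses: the product of two LITs is again a block matrix whose four blocks are diagonal (diagonals compose under multiplication), and invertibility of the product is immediate from $\LIT_{N,\Fq} \cong \GL_{2,\Fq}^N$ being a genuine group (each transmitter's $2\times 2$ block is an element of $\GL_{2,\Fq}$ acting independently on coordinates $(x_n,x_{N+n})$).

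For the forward direction, let $\bM \in \cM_{\LIT}$. By definition of $\cM_{\LIT}$, there exists $\bM'' \in \cM_{sbc}$ with $\bM \liteq \bM''$. Applying Theorem~\ref{thm:contructiontostandardform} to $\bM''$ yields some $\bM' \in \cM_s$ with $\bM' \liteq \bM''$. Transitivity of $\liteq$ then gives $\bM' \liteq \bM$, as required. For the converse, let $\bM' \in \cM_s$. By Theorem~\ref{thm:contructiontostandardform} we have $\bM' \in \cM_s \subseteq \cM_{sbc}$, and by reflexivity $\bM' \liteq \bM'$, so $\bM' \in \cM_{\LIT}$ directly from the definition. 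Choosing $\bM = \bM'$ finishes this direction.

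No serious obstacle is anticipated: the entire argument is bookkeeping once Theorem~\ref{thm:contructiontostandardform} is available. The only minor subtlety is reconciling the implicit transposition convention (the matrices in $\cM_o,\cM_s,\cM_{sbc},\cM_{\LIT}$ live in $\Fq^{2N\times N}$, while $\liteq$ as written applies most naturally to $N\times 2N$ matrices via $\bM_1 = \bP\bM_2\bgL$), but this is handled by the fact that both $\GL_{N,\Fq}$ and $\LIT_{N,\Fq}$ are closed under transposition, so the equivalence classes agree under $\bM \mapsto \bM^\top$.
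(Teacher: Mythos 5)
Your proof is correct and is essentially the same argument the paper gives: apply Theorem~\ref{thm:contructiontostandardform} to the $\cM_{sbc}$-representative and invoke transitivity of $\liteq$, with the converse following immediately from $\cM_s \subseteq \cM_{sbc} \subseteq \cM_{\LIT}$. The extra bookkeeping you supply (verifying that $\liteq$ is a genuine equivalence relation via the group structure of $\GL_{N,\Fq}$ and $\LIT_{N,\Fq}$, and noting the transposition convention) is left implicit in the paper but is a reasonable thing to spell out.
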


\begin{proof}
    By definition, for every $\bM \in \cM_{\LIT}$ there exists $\bM' \in \cM_{sbc}$ such that $\bM' \liteq \bM$. By Theorem~\ref{thm:contructiontostandardform} there exists $\bM'' \in \cM_s$ such that $\bM'' \liteq \bM'$, and since the equivalence is transitive, we have that $\bM'' \liteq \bM$.

    The converse is trivial by the definitions.
\end{proof}

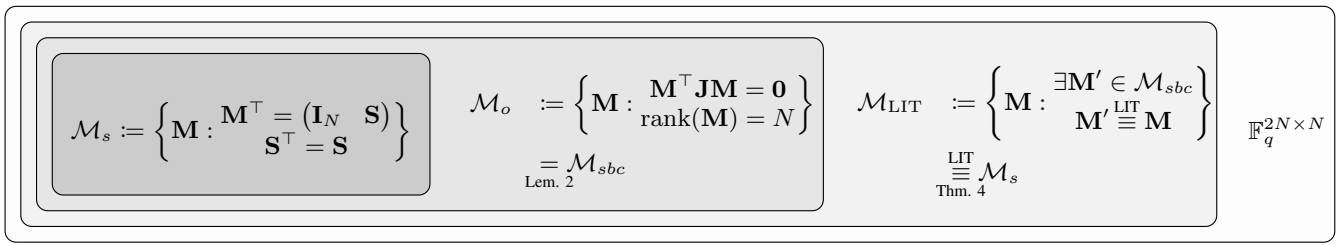
\begin{figure}
    \centering
    \resizebox{\linewidth}{!}{
    \begin{tikzpicture}

\draw [fill=black!1, opacity=1, rounded corners] (-0.4,-0.4) rectangle (16.5,2.6) ;
\node at (15.9,1){\footnotesize $\Fq^{2N\times N}$};

\draw [fill=black!5, opacity=1, rounded corners] (-0.2,-0.2) rectangle (15,2.4) ;

\node at (12.7,1)  [align=center]{ \footnotesize $\begin{array}{ll}\cM_{\LIT}&\coloneqq \px*{\bM : \begin{matrix}\exists \bM' \in \cM_{sbc} \\ \bM'\liteq \bM\end{matrix}}\\[0.4cm] & \mathrel{\underset{\makebox[0pt]{\mbox{\normalfont\tiny Thm.~\ref{thm:littostandardform}}}}{\liteq}}  \cM_s\end{array}$};

\draw [fill=black!10, opacity=1, rounded corners] (0,0) rectangle (10,2.2) ;

\node at (7.7,1)  [align=center]{  \footnotesize $\begin{array}{ll}\cM_o&\coloneqq \px*{\bM : \begin{matrix}\bM^\top \bJ \bM = \bzero\\ \rank(\bM) = N\end{matrix}}\\[0.4cm]& \mathrel{\underset{\makebox[0pt]{\mbox{\normalfont\tiny Lem.~\ref{lemm:charM}}}}{=}} \cM_{sbc}\end{array}$};

\draw [fill=black!20, opacity=1, rounded corners] (0.2,0.2) rectangle (5,2) ;

\node  at (2.6,1) [align=center] { \footnotesize $\cM_s\coloneqq \px*{\bM : \begin{matrix}\bM^\top = \ppmatrix{\bI_N & \bS}\\ \bS^\top = \bS\end{matrix}}$};

\end{tikzpicture}
    }
    \caption{Relationships between the various forms of transfer functions.}
    \label{fig:sbc_fig}
\end{figure}

\subsection{Feasibility of an $N$-Sum Box}

Next we explore the problem of testing feasibility. Given a desired $N$-sum box specification $\bY = \bM^\top \bX$ for some $\bM \in \Fq^{2N \times N}$, the goal is to determine if this $N$-sum box is feasible by stabilizer based constructions combined with local invertible transformations. 
The easiest case is if $\bM$ is in the standard form $\bM^\top = \ppmatrix{\bI & \bS}$, which is immediately seen to be feasible. 
Another easy case is when $\bM \in \cM_o$, which can be checked efficiently as well with complexity $O(N^3)$. 
However, if $\bM$ is neither in standard form, nor a
full-rank strongly self-orthogonal matrix, it could still be feasible through local invertible transformations.
Here we describe a test to determine such feasibility, or to rule it out with certainty, with complexity no more than $O(N^4)$.

\begin{theorem}
    The transfer matrix $\bM^\top = \ppmatrix{\bM_l & \bM_r} \in \Fq^{N \times 2N}$ is feasible for an $N$-sum box construction, \ie $\bM \in \cM_{\LIT}$, if and only if $\rank(\bM) = N$ and there exists an invertible diagonal matrix $\bgD \in \Fq^{N \times N}$ such that
    \begin{equation}
        (\bM')^\top = \ppmatrix{\bM_l & \bM_r \bgD} \in \cM_o.
    \end{equation}
\end{theorem}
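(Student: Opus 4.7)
The plan is to translate the action of an LIT on the transfer matrix into a concrete algebraic condition on $\bM_l$ and $\bM_r$, from which both directions follow easily.

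First I would dispatch the ``if'' direction. Given a diagonal invertible $\bgD$ such that $(\bM')^\top=[\bM_l \mid \bM_r\bgD]\in\cM_o$, observe that the block matrix $\bgL_0 \coloneqq \ppsmatrix{\bI & \bzero \\ \bzero & \bgD}$ has all four blocks diagonal, and $\det(\bI\cdot\bgD-\bzero\cdot\bzero)=\det(\bgD)\neq 0$, so $\bgL_0\in\LIT_{N,\Fq}$. Since $\bM^\top\bgL_0=[\bM_l\mid\bM_r\bgD]=(\bM')^\top$ with $\bM'\in\cM_o=\cM_{sbc}$ (Lemma~\ref{lemm:charM}), we have $\bM\liteq\bM'$, hence $\bM\in\cM_{\LIT}$ by definition.

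For the ``only if'' direction, suppose $\bM\in\cM_{\LIT}$, so there exist $\bP\in\GL_{N,\Fq}$ and $\bgL=\ppsmatrix{\bgL_1 & \bgL_2 \\ \bgL_3 & \bgL_4}\in\LIT_{N,\Fq}$ with $\bM^\ast\coloneqq(\bP\bM^\top\bgL)^\top=\bgL^\top\bM\bP^\top\in\cM_o$. Invertibility of $\bP$ and $\bgL$ immediately yields $\rank(\bM)=N$. The crucial step is the computation of $\bgL\bJ\bgL^\top$: because each $\bgL_i$ is diagonal, any two of them commute, so the two diagonal blocks of $\bgL\bJ\bgL^\top$ collapse to zero and one obtains
\[
\bgL\bJ\bgL^\top=\ppmatrix{\bzero & -\bgD \\ \bgD & \bzero}, \qquad \bgD\coloneqq\bgL_1\bgL_4-\bgL_2\bgL_3,
\]
which is diagonal and invertible by the LIT definition. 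The SSO condition $(\bM^\ast)^\top\bJ\bM^\ast=\bP(\bM^\top\bgL\bJ\bgL^\top\bM)\bP^\top=\bzero$ is then, since $\bP$ is invertible, equivalent to $\bM^\top\bgL\bJ\bgL^\top\bM=\bzero$. Plugging in the above form of $\bgL\bJ\bgL^\top$ and writing $\bM=\ppsmatrix{\bM_l^\top \\ \bM_r^\top}$ reduces this to $\bM_r\bgD\bM_l^\top=\bM_l\bgD\bM_r^\top$, which (using $\bgD^\top=\bgD$) is exactly the SSO condition $[\bM_l\mid\bM_r\bgD]\bJ[\bM_l\mid\bM_r\bgD]^\top=\bzero$; the rank-$N$ condition transfers to $[\bM_l\mid\bM_r\bgD]^\top$ because left-multiplying the bottom block of $\bM$ by the invertible $\bgD$ preserves rank. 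Thus $[\bM_l\mid\bM_r\bgD]^\top\in\cM_o$, as required.

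The only nontrivial step in this program is the computation of $\bgL\bJ\bgL^\top$, and specifically recognizing that the structural restriction ``each block of $\bgL$ is diagonal'' is exactly what makes the two diagonal blocks cancel and leaves a single diagonal invertible matrix $\bgD$ as the effective parameter. This is the precise algebraic reason why a general LIT, which has $4N$ free parameters, can always be replaced (for the purpose of achieving SSO) by a single diagonal scaling of the right half of the transfer matrix, and it is what lets the feasibility test be carried out by searching only over the $N$-dimensional family of diagonal $\bgD$'s.
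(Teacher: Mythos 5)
Your proof is correct, and the key algebraic observation is exactly the one the paper relies on: because all four blocks of an LIT $\bgL$ are diagonal and hence commute, the diagonal blocks of $\bgL\bJ\bgL^\top$ vanish and the off-diagonal blocks collapse to a single invertible diagonal matrix $\bgD=\bgL_1\bgL_4-\bgL_2\bgL_3$; thus an LIT acts on $\bJ$ as a congruence by $\diagonal(\bI,\bgD)$. Where you differ from the paper is in how you set up the computation. The paper first invokes Theorem~\ref{thm:littostandardform} to write $\bM^\top=\bP\ppmatrix{\bI & \bS}\bgL$, picks $\bgD=(\bgL_1\bgL_4-\bgL_2\bgL_3)^{-1}$, and verifies that $\bgL\,\diagonal(\bI,\bgD)$ is symplectic so that $(\bM')^\top\bJ\bM'$ inherits $\bzero$ from the standard form. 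You instead work directly from the definition of $\cM_{\LIT}$, taking $\bM^\ast=(\bP\bM^\top\bgL)^\top\in\cM_o$, compute $\bgL\bJ\bgL^\top$ once, and translate the SSO condition for $\bM^\ast$ term-by-term into $\bM_r\bgD\bM_l^\top=\bM_l\bgD\bM_r^\top$, which is precisely the SSO condition for $\ppmatrix{\bM_l & \bM_r\bgD}$. This avoids the excursion through the standard form entirely; the mirrored appearance of $\bgL_d$ versus $\bgL_d^{-1}$ in the two versions is just an artifact of which side of the $\liteq$ equivalence the LIT is placed. Your route is the more direct of the two and makes the role of the commutativity constraint more transparent, at the cost of not exhibiting the explicit standard-form factorization that the paper also wants for other purposes.
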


\begin{remark}
    Finding such a $\bgD$ can be done with complexity $O(N^4)$ by solving for the $N$ elements of $\bgD$ subject to the linear constraints imposed by the self-orthogonality condition.
\end{remark}

\begin{proof}
    Consider a general feasible setting.
    By Theorem~\ref{thm:littostandardform} we can write $\bM^\top = \bP \ppmatrix{\bI & \bS} \bgL$ for some $\bP \in \GL_{N,\Fq}$ and
    \[
    \bgL = \ppmatrix{\bgL_1 & \bgL_2 \\ \bgL_3 & \bgL_4} \in \LIT_{N,\Fq},
    \]
    where the matrices $\bgL_i \in \diagonal_{N,\Fq},\ i \in [4]$, are such that $\det(\bgL_1 \bgL_4 - \bgL_2 \bgL_3) \neq 0$.
    Let $\bgL_d = \bgL_1 \bgL_4 - \bgL_2 \bgL_3$ and $\bgD = \bgL_d^{-1} \in \diagonal_{N,\Fq}$, then
    \[ \begin{split}
        \bgL \ppmatrix{\bI & \bzero \\ \bzero & \bgD} \bJ \ppmatrix{\bI & \bzero \\ \bzero & \bgD}^\top \bgL^\top & = 
        \bgL \ppmatrix{\bzero & -\bI \\ \bgD & \bzero} \ppmatrix{\bI & \bzero \\ \bzero & \bgD} \bgL^\top  = \bgL \ppmatrix{\bzero & -\bgD \\ \bgD & \bzero} \bgL^\top \\
        & = \ppmatrix{\bgL_2 \bgD & -\bgL_1 \bgD \\ \bgL_4 \bgD & -\bgL_3 \bgD} \ppmatrix{\bgL_1 & \bgL_3 \\ \bgL_2 & \bgL_4} \\
        & = \ppmatrix{\bgL_2 \bgD \bgL_1 - \bgL_1 \bgD \bgL_2 & \bgL_2 \bgD \bgL_3 - \bgL_1 \bgD \bgL_4 \\ \bgL_4 \bgD \bgL_1 - \bgL_3 \bgD \bgL_2 & \bgL_4 \bgD \bgL_3 - \bgL_3 \bgD \bgL_4} \\
        & = \ppmatrix{(\bgL_2 \bgL_1 - \bgL_1 \bgL_2) \bgD & (\bgL_2 \bgL_3 - \bgL_1 \bgL_4) \bgD \\ (\bgL_4 \bgL_1 - \bgL_3 \bgL_2) \bgD & (\bgL_4 \bgL_3 - \bgL_3 \bgL_4) \bgD} \\
        & = \ppmatrix{(\bgL_1 \bgL_2 - \bgL_1 \bgL_2) \bgD & - (\bgL_1 \bgL_4 - \bgL_2 \bgL_3) \bgD \\ (\bgL_1 \bgL_4 - \bgL_2 \bgL_3) \bgD & (\bgL_3 \bgL_4 - \bgL_3 \bgL_4) \bgD} = \bJ
    \end{split} \]
    by the commutativity of $\diagonal_{N,\Fq}$.
    It follows that
    \[ \begin{split}
        (\bM')^\top \bJ \bM' & = \bM^\top \ppmatrix{\bI & \bzero \\ \bzero & \bgD} \bJ \ppmatrix{\bI & \bzero \\ \bzero & \bgD}^\top \bM \\
        & = \bP \ppmatrix{\bI & \bS} \bgL \ppmatrix{\bI & \bzero \\ \bzero & \bgD} \bJ \ppmatrix{\bI & \bzero \\ \bzero & \bgD}^\top \bgL^\top \ppmatrix{\bI & \bS}^\top \bP^\top \\
        & = \bP \ppmatrix{\bI & \bS} \bJ \ppmatrix{\bI & \bS}^\top \bP^\top = \bzero,
    \end{split} \]
    \ie $\bM' \in \cM_o$.

    For the other direction it is trivial to see that if such a $\bgD$ exists, then $\bM \liteq \bM' \in \cM_o \subseteq \cM_{\LIT}$, which implies that $\bM \in \cM_{\LIT}$.
\end{proof}

\begin{remark}
    Notice that in the proof we proved that the matrix $\bgL' = \bgL \ppsmatrix{\bI & \bzero \\ \bzero & \bgD}$ is symplectic, \ie it satisfies the property $\bgL' \bJ (\bgL')^\top = \bJ$, as symplectic matrices are known to be the only kind of matrices that preserves strong self-orthogonality.
\end{remark}


\section{$N$-Sum Box Application: Quantum CSA Scheme}
\label{sec:QCSA}

Cross-subspace alignment (CSA) codes find applications in various private information retrieval (PIR) schemes (\eg PIR with secure storage) and in secure distributed batch matrix multiplication (SDBMM).
Using the developed $N$-sum box abstraction of a quantum multiple-access channel (QMAC), we translate CSA schemes over classical multiple-access channels into efficient quantum CSA schemes over a QMAC, achieving maximal superdense-coding gain. 
Because of the $N$-sum box abstraction, the underlying problem of coding to exploit quantum entanglements for CSA schemes becomes conceptually equivalent to that of designing a channel matrix for a MIMO MAC subject to given structural constraints imposed by the $N$-sum box abstraction, such that the resulting MIMO MAC is able to implement the functionality of a CSA scheme (encoding/decoding) \emph{over-the-air}. 
Applications include Quantum PIR with secure and MDS-coded storage, as well as Quantum SDBMM. 
In this section we first introduce the classical CSA scheme, then give the definition of some important concepts and finally present the way to translate a CSA scheme to QCSA scheme and apply the QCSA scheme to specific problems.

\subsection{Cross Subspace Alignment (CSA) Codes}

Conceptually, the setting for a CSA coding scheme (over a finite field $\Fq$) is the following.  
We have $N$ distributed servers. 
The servers locally compute their answers $A_n, n\in[N],$ to a user's query. 
Each answer $A_n$ is a linear combination  of $L$ symbols that are desired by the user, say $\delta_1, \delta_2, \cdots, \delta_L$, and $N-L$ symbols of undesired information (interference), say $\nu_1, \nu_2, \cdots, \nu_{N-L}$. 
The linear combinations have a Cauchy-Vandermonde structure that is the defining characteristic of CSA schemes, such that the desired terms appear along the dimensions corresponding to the Cauchy terms, while the interference  appears (aligned) along the dimensions corresponding to the Vandermonde terms. 
The $b^{th}$ instance of a CSA scheme is represented as
\begin{align*}
    \underbrace{\ppmatrix{
    A^b_1\\ \vdots \\ A^b_N
    }}_{\bA^b}
    =
    \underbrace{\ppmatrix{
    \begin{array}{ccc|cccc}
    \frac{1}{f_1-\alpha_1}&\cdots&\frac{1}{f_L-\alpha_1}&1&\alpha_1&\cdots&\alpha_1^{N-L-1}\\
    \frac{1}{f_1-\alpha_2}&\cdots&\frac{1}{f_L-\alpha_2}&1&\alpha_2&\cdots&\alpha_2^{N-L-1}\\
    \vdots&\vdots&\vdots&\vdots&\vdots&\vdots&\vdots\\
    \frac{1}{f_1-\alpha_N}&\cdots&\frac{1}{f_L-\alpha_N}&1&\alpha_N&\cdots&\alpha_N^{N-L-1}\\
    \end{array}
    }}_{\bG_{\tiny \mathrm{CSA}^q_{N,L}(\bm{\alpha},\bff)}}
    \underbrace{\ppmatrix{
    \delta^b_1\\
    :\\
    \delta^b_L\\
    \nu^b_1\\
    :\\
    \nu^b_{\mbox{\tiny $N-L$}}
    }}_{\bX^b_{\delta,\nu}}.
\end{align*}
The CSA scheme requires that all the $\alpha_i, f_j$ are distinct elements in $\Fq$ (thus needing $q \geq N+L$), which guarantees that the $N \times N$ matrix $\bG_{\tiny \mathrm{CSA}^q_{N,L}(\bm{\alpha},\bff)}$ is invertible. 
After downloading $A_n$ from each server $n, n\in[N]$, the user is able to recover the desired symbols $\delta^b_1, \ldots, \delta^b_L$ by inverting $\mathrm{CSA}^q_{N,L}(\bm{\alpha},\bff)$. 
Thus, each instance $b$ of the CSA scheme allows the user to retrieve $L$ desired symbols at a cost of $N$ downloaded symbols. 
The rate of the scheme, defined as the number of desired symbols recovered per downloaded symbol, is $L/N$. 
The reciprocal, $N/L$, is the download cost per desired symbol.

\begin{remark} \label{rem:redundant} 
    A noteworthy aspect of CSA schemes is that the number of servers can be reduced, \ie the CSA scheme can be applied to $N'<N$ servers, with a corresponding reduction in the number of desired symbols $L'<L$, as long as the dimension of interference is preserved, \ie $N-L=N'-L'$. 
    In the classical setting, this flexibility is not useful as it leads to a strictly higher download cost, \ie $N'/L' = (N-L)/L' + 1>(N-L)/L+1=N/L$. 
    In the quantum setting, however, this will lead to a useful simplification.
\end{remark}

\subsection{Definitions}

In the following we represent GRS codewords as column vectors rather than the usual coding-theoretic notation with row vectors in order to be consistent with the standard notation used in linear computation.
It follows that the GRS generator matrix is represented as a $n \times k$ matrix instead of the usual $k \times n$ matrix.

\begin{definition}{\bf (GRS Code)} 
    Let $\cC = \mathrm{GRS}^q_{n,k}(\bm{\alpha},\bu)$ be a Generalized Reed--Solomon (GRS) code over $\Fq$, where $\bm{\alpha} = (\alpha_1,\ldots,\alpha_n) \in \Fq^{1\times n}$, $\bu = (u_1,\ldots,u_n) \in \Fq^{1\times n}$, $u_i\neq 0$ and $\alpha_i\neq \alpha_j$ for all distinct $i,j\in[n]$.
    Its generator matrix can be defined as
    \begin{equation*}
        \bG_{\mathrm{GRS}^q_{n,k}(\bm{\alpha},\bu)} \coloneqq \ppmatrix{
        u_1&u_1\alpha_1&u_1\alpha_1^2&\cdots&u_1\alpha_1^{k-1}\\[0.1cm]
        u_2&u_2\alpha_2&u_2\alpha_2^2&\cdots&u_2\alpha_2^{k-1}\\[0.1cm]
        \vdots&\vdots&\vdots&\vdots&\vdots\\[0.1cm]
        u_n&u_n\alpha_n&u_n\alpha_n^2&\cdots&u_n\alpha_n^{k-1}\\[0.1cm]
        } \in \Fq^{n \times k}.
    \end{equation*}
\end{definition}

\begin{definition}{\bf (Dual Code)}
    For an $[n,k]$ linear code $\cC\subset \Fq^{n\times 1}$, its dual code $\cC^\perp$ is defined as $$\cC^\perp \coloneqq \px*{\bv \in \Fq^{n \times 1}: \langle\bv,\bc\rangle=0, \forall \bc\in \cC }.$$
\end{definition}

The dual code of a GRS code is also a GRS code, \eg according to the following construction \cite{Macwilliams}.
\begin{definition}{{\bf (Dual GRS Code)}\cite{Macwilliams}}
    For $\bv=(v_1,\ldots,v_n)\in\Fq^{1\times n}$ defined as
    \begin{equation} \label{eq:def_v}
        v_j =\frac{1}{u_j} \paren*{\prod_{i\in[n],i\neq j }(\alpha_j-\alpha_i)}^{-1} \quad \forall j\in[n],
    \end{equation}
    where $u_j \neq 0$, we have
    \begin{equation}
        \bG_{\mathrm{GRS}^q_{n,k}(\bm{\alpha},\bu)}^\top \bG_{\mathrm{GRS}^q_{n,n-k}(\bm{\alpha},\bv)} = \bzero_{k\times (n-k)},\label{eq:guarantee}
    \end{equation}
    \ie $\mathrm{GRS}^q_{n,n-k}(\bm{\alpha},\bv)$ is an $[n,n-k]$ code that is the dual of the code $\mathrm{GRS}^q_{n,k}(\bm{\alpha},\bu)$.
\end{definition}

\begin{definition}{\bf (QCSA Matrix)} \label{def:QCSA} 
    We define the $\mathrm{QCSA}^q_{N,L}(\bm{\alpha},\bm{\beta},\bff)$ matrix as the $N \times N$ matrix
    \begin{align}
        &\bG_{\tiny \mathrm{QCSA}^q_{N,L}(\bm{\alpha},\bm{\beta},\bff)}
        \coloneqq \notag\\[0.1cm]
        &\scalebox{.85}{$
        \ppmatrix{\begin{array}{cccc|ccccccccc}
        \frac{\beta_1}{f_1 - \alpha_1} & \frac{\beta_1}{f_2 - \alpha_1} & \cdots & \frac{\beta_1}{f_L - \alpha_1}
        & \overmat{\bG_{\mathrm{GRS}^q_{N,\lfloor N/2\rfloor}}(\bm{\alpha},\bm{\beta})}{\beta_1 & \beta_1 \alpha_{1} & \cdots &  \beta_1 \alpha_{1}^{\lfloor N/2\rfloor - 1}} & \beta_1 \alpha_{1}^{\lceil N/2\rceil - 1}
        &\beta_1 \alpha_{1}^{\lceil N/2\rceil} & \cdots & \beta_1 \alpha_{1}^{N-L-1}\\[0.1cm]
        \frac{\beta_2}{f_1 - \alpha_2} & \frac{\beta_2}{f_2 - \alpha_2} & \cdots & \frac{\beta_2}{f_L - \alpha_2}
        &\beta_2 & \beta_2 \alpha_{2} & \cdots & \beta_{2} \alpha_{2}^{\lfloor N/2\rfloor - 1} & \beta_2 \alpha_{2}^{\lceil N/2\rceil - 1}
        &\beta_2 \alpha_{2}^{\lceil N/2\rceil} & \cdots & \beta_2 \alpha_{2}^{N-L-1}\\
        \vdots & \vdots & \vdots & \vdots&\vdots & \vdots & \vdots&\vdots & \vdots & \vdots\\
        \undermat{\tilde{\bH}^{\bm{\beta}}_\sC}{\frac{\beta_N}{f_1 - \alpha_N} & \frac{\beta_N}{f_2 - \alpha_N} & \cdots & \frac{\beta_N}{f_L - \alpha_N}}&
        \undermat{\bG_{\mathrm{GRS}^q_{N,\lceil N/2\rceil}(\bm{\alpha},\bm{\beta})}} {\beta_N & \beta_N \alpha_{N} & \cdots & \beta_N \alpha_{N}^{\lfloor N/2\rfloor - 1} & \beta_N \alpha_{N}^{\lceil N/2\rceil - 1}}&
        \undermat{\tilde{\bH}^{\bm{\beta}}_\sV} {\beta_N \alpha_{N}^{\lceil N/2\rceil} & \cdots & \beta_N \alpha_{N}^{N-L-1}}
        \end{array}}$
        }, \label{eq:MCV_Blocks}\\
        \notag
    \end{align} 
    where $\bm{\alpha}=(\alpha_1,\ldots,\alpha_N)\in\Fq^{1\times N}$, $\bm{\beta}=(\beta_1,\ldots,\beta_N)\in\Fq^{1\times N}$, $\bff=(f_1,f_2,\ldots,f_L)\in\Fq^{1\times L}$, all $\beta_i, i\in[N]$ are non-zero, the terms $\alpha_1,\ldots, \alpha_N,f_1,\ldots,f_L$ are $L+N$ distinct elements of $\Fq$, $q\geq L+N$, $L\leq N/2$.  
    The sub-matrices $\bG_{\mathrm{GRS}^q_{N,\lfloor N/2\rfloor}(\bm{\alpha},\bm{\beta})}$ and $\bG_{\mathrm{GRS}^q_{N,\lceil N/2\rceil}(\bm{\alpha},\bm{\beta})}$ identified in Equation~\eqref{eq:MCV_Blocks} are the generator matrices of $[N, \floor*{N/2}]$ and $[N, \ceil*{N/2}]$ GRS codes, respectively.
    As a special case, setting $\beta_1=\ldots=\beta_N=1$, we have,
    \begin{align}
        \bG_{\mathrm{CSA}^q_{N,L}(\bm{\alpha},\bff)} = \bG_{\mathrm{QCSA}^q_{N,L}(\bm{\alpha},\mathbf{1},\bff)}.
    \end{align}
\end{definition}

\begin{remark}
    When $N$ is an even number, the two sub-matrices $\bG_{\mathrm{GRS}^q_{N,\lfloor N/2\rfloor}(\bm{\alpha},\bm{\beta})}$ and $\bG_{\mathrm{GRS}^q_{N,\lceil N/2\rceil}(\bm{\alpha},\bm{\beta})}$ become the same sub-matrix $\bG_{\mathrm{GRS}^q_{N, N/2}(\bm{\alpha},\bm{\beta})}$.
\end{remark}

\begin{remark} 
The $\mathrm{QCSA}_{N,L}^{q}(\bm{\alpha},\bm{\beta},\bff)$ matrix is the product of an $N\times N$ diagonal matrix $\diagonal(\beta_1,\beta_2,\cdots,\beta_N)$ and the $N\times N$ Cauchy-Vandermonde matrix $\bG_{\mathrm{CSA}^q_{N,L}(\bm{\alpha},\bff)}$ invoked by CSA schemes (cf. \cite[Equation (11)]{Jia_Jafar_MDSXSTPIR}). 
The former is full rank since $\beta_n\neq 0,\ \forall n\in [N]$, and the latter is full rank according to \cite[Lemma 1]{Jia_Jafar_MDSXSTPIR} because $\alpha_1,\cdots, \alpha_N,f_1,\cdots, f_N$ are distinct ($q\geq L+N$). 
Since multiplication with an invertible matrix preserves rank, the $\mathrm{QCSA}^q_{N,L}(\bm{\alpha},\bm{\beta},\bff)$ matrix is an invertible matrix. 
\end{remark}

\subsection{From CSA Scheme to QCSA Scheme}\label{subsec:CSA2QCSA}

Given a  CSA scheme over $\Fq$, we show how to translate it into a QCSA scheme over a QMAC. 
The process is described by the following three steps.
\begin{enumerate}
    \item From the $\mathrm{CSA}_{N,L}^{q}(\bm{\alpha}, \bff)$ matrix, construct two matrices $\bQ_N^{\bu} = \bG_{\mathrm{QCSA}_{N,L}^{q}(\bm{\alpha}, \bu, \bff)},\ \bQ_N^{\bv} = \bG_{\mathrm{QCSA}_{N,L}^{q}(\bm{\alpha}, \bv, \bff)}$.
    
    \item From $\bQ_N^{\bu}, \bQ_N^{\bv}$, construct a feasible $N$-sum Box, \ie a MIMO MAC with channel matrix $\bM_{\mathrm{QCSA}}$.
    
    \item Over the MIMO MAC, realize `over-the-air' decoding of two instances of the CSA scheme. 
    By the $N$-sum box abstraction, this automatically maps to a quantum protocol (a QCSA scheme) and the efficiency gained by `over-the-air' decoding in the MIMO MAC translates into the superdense coding gain over the QMAC.
\end{enumerate}
These steps are explained next.

\subsubsection*{Step 1. Generation of QCSA matrices}
Let $\bu = (u_1, \ldots, u_N) \in \Fq^{1\times N}$ and $\bv = (v_1, \ldots, v_N) \in \Fq^{1\times N}$, where
\begin{equation} \label{eq:b_compute}
    v_j = \frac{1}{u_j} \paren*{\prod_{i\in[N],i\neq j }(\alpha_j-\alpha_i)}^{-1} \quad \forall j\in[N],
\end{equation}
such that the sub-matrix $\bG_{\mathrm{GRS}^q_{N,\ceil*{N/2}}(\bm{\alpha},\bu)}$ of $\bQ_N^{\bu}$ and the sub-matrix $\bG_{\mathrm{GRS}^q_{N,\floor*{N/2}}(\bm{\alpha},\bv)}$ of $\bQ_N^{\bv}$ satisfy
\begin{equation} \label{eq:dualsub}
    \bG_{\mathrm{GRS}^q_{N,\ceil*{N/2}}(\bm{\alpha},\bu)}^\top \bG_{\mathrm{GRS}^q_{N,\floor*{N/2}}(\bm{\alpha},\bv)} = \bzero_{\ceil*{N/2} \times \floor*{N/2}},
\end{equation}
according to Equation~\eqref{eq:def_v} and Equation~\eqref{eq:guarantee}. 

\subsubsection*{Step 2. A suitable $N$-sum box}
The $N$-sum box is specified by the following theorem.
\begin{theorem} \label{thm:QCSA}
    For the $\bQ_N^{\bu}$ and $\bQ_N^{\bv}$ constructed in Step 1, there exists a feasible $N$-sum box $\by=\bM_{\mathrm{QCSA}}\bx$ in $\Fq$ with the $N\times 2N$ transfer matrix,
    \begin{equation} \label{eq:thmQCSA}
        \bM_{\mathrm{QCSA}} = \ppmatrix{\begin{array}{ccc|ccc}
        \bI_L&\bzero_{L\times \lceil N/2\rceil}&\bzero&\bzero&\bzero&\bzero\\
        \bzero&\bzero&\bI_{\lfloor N/2\rfloor -L}&\bzero&\bzero&\bzero\\
        \bzero&\bzero&\bzero&\bI_L&\bzero_{L\times \lfloor N/2\rfloor }&\bzero\\
        \bzero&\bzero&\bzero&\bzero&\bzero&\bI_{\lceil N/2\rceil -L}
        \end{array}}
        \ppmatrix{
        \bQ_N^{\bu}&\bzero\\
        \bzero&\bQ_N^{\bv}
        }^{-1}.
    \end{equation}
\end{theorem}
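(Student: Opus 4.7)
The plan is to apply Theorem~\ref{thm:nsumboxconstruction}: it suffices to exhibit matrices $\bG \in \cM_o$ and $\bH$ with $\ppmatrix{\bG & \bH}$ invertible such that $\bM_{\mathrm{QCSA}} = \ppmatrix{\bzero & \bI}\ppmatrix{\bG & \bH}^{-1}$. All the raw material is already present on the right-hand side of Equation~\eqref{eq:thmQCSA}: the block-diagonal matrix $\bT \coloneqq \ppsmatrix{\bQ_N^{\bu} & \bzero \\ \bzero & \bQ_N^{\bv}}$ is invertible (both QCSA factors are), so the task reduces to splitting the columns of $\bT$ into a \emph{$\bG$-part} and an \emph{$\bH$-part} consistent with the selector prefix of Equation~\eqref{eq:thmQCSA}, and in such a way that the $\bG$-part is strongly self-orthogonal.

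Concretely, I would rewrite the selector prefix of Equation~\eqref{eq:thmQCSA} as $\ppmatrix{\bzero & \bI}\bP$ for a permutation matrix $\bP \in \GL_{2N,\Fq}$ that moves the $N$ selected rows of $\bT^{-1}$ to the bottom. Then $\bM_{\mathrm{QCSA}} = \ppmatrix{\bzero & \bI}(\bT\bP^{-1})^{-1}$, and $\ppmatrix{\bG & \bH} \coloneqq \bT\bP^{-1}$ is the natural candidate. Reading off which indices are \emph{not} selected from Equation~\eqref{eq:thmQCSA} and comparing with Definition~\ref{def:QCSA}, the columns of $\bT$ that land in $\bG$ are, up to column reordering, exactly the GRS sub-blocks $\bG_{\mathrm{GRS}^q_{N,\lceil N/2\rceil}(\bm{\alpha},\bu)}$ and $\bG_{\mathrm{GRS}^q_{N,\lfloor N/2\rfloor}(\bm{\alpha},\bv)}$, placed in the top-left and bottom-right of a $2N\times N$ block-diagonal matrix.

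With this block-diagonal $\bG$, a one-line calculation gives $\bG^\top \bJ \bG$ as a $2\times 2$ block-antidiagonal matrix whose nonzero blocks are $\pm\bG_{\mathrm{GRS}^q_{N,\lceil N/2\rceil}(\bm{\alpha},\bu)}^\top \bG_{\mathrm{GRS}^q_{N,\lfloor N/2\rfloor}(\bm{\alpha},\bv)}$. These vanish by the dual-code identity of Equation~\eqref{eq:dualsub}, which is precisely what the choice of $\bv$ in Equation~\eqref{eq:b_compute} was engineered to achieve back in Step~1. Full rank of $\bG$ follows since its two nonzero block-columns have disjoint row supports and together span $\lceil N/2\rceil + \lfloor N/2\rfloor = N$ dimensions, and full rank of $\ppmatrix{\bG & \bH} = \bT\bP^{-1}$ is inherited from invertibility of $\bT$. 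Theorem~\ref{thm:nsumboxconstruction} then furnishes the desired stabilizer-based $N$-sum box with transfer matrix $\bM_{\mathrm{QCSA}}$.

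The main obstacle is essentially notational: pinning down exactly which columns of $\bT$ correspond to non-selected coordinates of the selector in Equation~\eqref{eq:thmQCSA} and tracking the ensuing column permutation carefully. Once that bookkeeping is fixed, the entire mathematical content of the theorem collapses onto the single dual-code identity Equation~\eqref{eq:dualsub}, which was set up precisely for this purpose.
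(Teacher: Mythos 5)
Your proposal is correct and follows essentially the same route as the paper's proof in Appendix~B: pick out the columns of $\diagonal(\bQ_N^{\bu},\bQ_N^{\bv})$ indexed by the \emph{non-selected} coordinates of the prefix in Equation~\eqref{eq:thmQCSA} as $\bG$ (which, by design, are exactly the two GRS sub-blocks), take the remaining columns as $\bH$, verify $\bG \in \cM_o$ via the block-antidiagonal computation and the dual-code identity Equation~\eqref{eq:dualsub}, and invoke Theorem~\ref{thm:nsumboxconstruction}. The only cosmetic difference is that you introduce the permutation by factoring the selector prefix as $\ppmatrix{\bzero & \bI}\bP$, whereas the paper builds $\ppmatrix{\bG & \bH}$ first and reads off $\bP_\pi$ afterwards; the bookkeeping you flag as the remaining chore (including the odd/even-$N$ case where the two GRS blocks differ in width by one) is spelled out explicitly there but contains no further ideas.
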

A proof of Theorem~\ref{thm:QCSA} is presented in Appendix~\ref{app:proofQCSA}. 

\subsubsection*{Step 3. QCSA Scheme as `Over-the-Air' CSA}\label{sec:csa2qcsa}
Using the MIMO MAC with channel matrix $\bM_{\mathrm{QCSA}}$ identified in Step 2, we now describe how to achieve `over-the-air' decoding of several instances of CSA schemes. 
Since the MIMO MAC is actually an $N$-sum box, which in fact represents a quantum protocol with communication cost $N$ qudits, a QCSA scheme is automatically implied for the QMAC through the $N$-sum box abstraction. 

First consider the case where we are given a CSA scheme with $L\leq N/2$. 
With 2 instances of the CSA scheme we have
\begin{equation}
    \ppmatrix{\bA^1 \\ \bA^2} =
    \ppmatrix{
    \bG_{\mathrm{CSA}^q_{N,L}(\bm{\alpha},\bff)} &\bzero \\
    \bzero & \bG_{\mathrm{CSA}^q_{N,L}(\bm{\alpha},\bff)}
    }
    \ppmatrix{
    \bX_{\delta,\nu}^1 \\ \bX_{\delta,\nu}^2},
\end{equation}
which retrieves $2L$ desired symbols at the download cost of $2N$ symbols. 
Now the corresponding QCSA scheme (over-the-air MIMO MAC) is obtained as follows.
\begin{align}
    \by & = \bM_{\mathrm{QCSA}}
    \underbrace{\ppmatrix{
    \diagonal(\bu) & \bzero\\
    \bzero & \diagonal(\bv)
    }
    \ppmatrix{
    \bA^1\\
    \bA^2
    }}_{\bx} \label{eq:serverscale} \\
    & = \bM_{\mathrm{QCSA}} \ppmatrix{
    \diagonal(\bu) & \bzero \\
    \bzero & \diagonal(\bv)
    }
    \ppmatrix{
    \bG_{\mathrm{CSA}^q_{N,L}(\bm{\alpha},\bff)} & \bzero \\
    \bzero & \bG_{\mathrm{CSA}^q_{N,L}(\bm{\alpha},\bff)}
    }
    \ppmatrix{
    \bX^1_{\delta,\nu}\\
    \bX^2_{\delta,\nu}
    } \\
    & = \bM_{\mathrm{QCSA}} \ppmatrix{
    \bQ_N^{\bu} & \bzero\\
    \bzero & \bQ_N^{\bv}
    } \ppmatrix{
    \bX^1_{\delta,\nu}\\
    \bX^2_{\delta,\nu}
    } \\
    &=
    \ppmatrix{\begin{array}{ccc|ccc}
    \bI_L&\bzero_{L\times \lceil N/2\rceil}&\bzero&\bzero&\bzero&\bzero\\
    \bzero&\bzero&\bI_{\lfloor N/2\rfloor -L}&\bzero&\bzero&\bzero\\
    \bzero&\bzero&\bzero&\bI_L&\bzero_{L\times \lfloor N/2\rfloor }&\bzero\\
    \bzero&\bzero&\bzero&\bzero&\bzero&\bI_{\lceil N/2\rceil -L}
    \end{array}}
    \ppmatrix{
    \bX^1_{\delta,\nu} \\
    \bX^2_{\delta,\nu}
    }\notag\\
    & = \paren*{\delta^1_1,\cdots,\delta^1_L,\bm{\nu}^1_{(\leftharpoondown)}, \delta^2_1,\cdots,\delta^2_L,\bm{\nu}^2_{(\leftharpoondown)}}^\top \label{eq:qcsa_output}.
\end{align}
where the $N$ entries of $\bu$ are non-zero, $\bv$ is specified in Equation~\eqref{eq:b_compute}, $\bm{\nu}^1_{(\leftharpoondown)}$ represents the last $\lfloor N/2\rfloor -L$ symbols of the vector $\bm{\nu}^1 = \paren*{\nu^1_1,\ldots,\nu^1_{N-L}}$, and $\bm{\nu}_{(\leftharpoondown)}(2)$ represents the last $\lceil N/2\rceil -L$ symbols of the vector $\bm{\nu}^2 = \paren*{\nu^2_1,\ldots,\nu^2_{N-L}}$.  
Multiplication by $\diagonal(\bu,\bv)$ in Equation~\eqref{eq:serverscale} simply involves each server $j\in[N]$ scaling its answers of the two instances of the CSA scheme $A^1_j, A^2_j$ by $u_j, v_j$, respectively, and applying $u_j A^1_j, v_j A^2_j$ to the inputs of the $N$-sum box (MIMO MAC) corresponding to that server. 
Evidently, all $2L$ desired symbols are recovered, and the total download cost is $N$ qudits (one qudit from each server), for a normalized download cost of $N/(2L)$ qudits per desired dit. 
The improvement from $N/L$ (classical CSA) to $N/2L$ (QCSA) reflects the factor of $2$ superdense coding gain in communication efficiency.

If $L > N/2$, we discard `redundant' servers (cf. Remark~\ref{rem:redundant}) and only employ $N' = 2N - 2L < N$ servers, choosing a CSA scheme with $L' =  N' / 2$, such that $N - L = N' - L'$, \ie the dimensions of interference are preserved, which results in a download cost of $N'/(2L') = 1$ qudit per desired symbol.

\subsection{QCSA Scheme Application}

Based on the approach described above, existing achievability results based on CSA schemes in the classical setting translate into corresponding achievability results for QCSA schemes in the quantum setting. 
In particular, the following corollaries follow immediately from this approach.
\begin{corollary} \label{cor:mdsxstpir}
    For the $(N,M,K,X,T)$ MDSXSTPIR \cite{Jia_Jafar_MDSXSTPIR}, where $N>X+T+K-1$, a QCSA scheme achieves rate
    \begin{equation}
        R_\sQ = \min\px*{1, 2 \paren*{1 - \paren*{\frac{X+T+K - 1}{N}}}},
    \end{equation}
    \ie the user is able to recover $L = N R_\sQ$ $q$-ary symbols of desired information for every $N$ $q$-dimensional qudits that it downloads from the servers.
\end{corollary}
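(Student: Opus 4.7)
The plan is to invoke the three-step QCSA recipe of Subsection~\ref{subsec:CSA2QCSA} on top of the classical CSA-based $(N,M,K,X,T)$ MDSXSTPIR scheme of \cite{Jia_Jafar_MDSXSTPIR}. Recall from that reference that the classical scheme produces per-instance server answers of Cauchy--Vandermonde form $\bA = \bG_{\mathrm{CSA}^q_{N,L}(\bm{\alpha},\bff)}\bX_{\delta,\nu}$, where the $N-L = X+T+K-1$ interference dimensions align along the Vandermonde block and the $L = N - (X+T+K-1)$ desired symbols appear on the Cauchy block. The hypothesis $N > X+T+K-1$ guarantees $L \geq 1$, and the field size $q \geq N+L$ required by Definition~\ref{def:QCSA} is identical to that of the classical construction. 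I would then split into two regimes according to the sign of $N - 2L$.

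\textbf{Regime A: $L \leq N/2$, i.e., $X+T+K-1 \geq N/2$.} Apply Steps~1--3 of Subsection~\ref{subsec:CSA2QCSA} verbatim. Each of the $N$ servers generates two classical CSA answers $A_j^1, A_j^2$ for two independent query instances, scales them by $u_j, v_j$ (with $\bu$ an arbitrary nonzero vector and $\bv$ determined by \eqref{eq:b_compute}), and feeds the pair to its input port of the feasible $N$-sum box with transfer matrix $\bM_{\mathrm{QCSA}}$ supplied by Theorem~\ref{thm:QCSA}. By the identity \eqref{eq:qcsa_output} the receiver's measurement output is exactly the $2L$ desired symbols of the two instances, at a total quantum download of $N$ qudits. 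Hence $L_{\mathrm{eff}} = 2L$ desired symbols are recovered per $N$ downloaded qudits, giving rate $2L/N = 2(1-(X+T+K-1)/N)$. Privacy against any $T$-colluding subset, security against $X$ eavesdroppers, and the MDS-coded storage guarantee are all inherited directly from the classical CSA scheme, because the server-local scaling by $(u_j,v_j)$ is an invertible transformation applied \emph{after} the answers are formed and is independent of the queries, so the marginal distribution of what each colluding subset observes is unchanged.

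\textbf{Regime B: $L > N/2$, i.e., $X+T+K-1 < N/2$.} Here the direct QCSA construction is unavailable because Definition~\ref{def:QCSA} requires $L \leq N/2$. I would invoke Remark~\ref{rem:redundant} to engage only $N' = 2(X+T+K-1) < N$ of the $N$ servers, using a smaller CSA scheme with $L' = N' - (X+T+K-1) = N'/2$ that keeps the interference dimension unchanged. This reduced scheme falls under Regime~A, yielding $2L' = N'$ desired symbols from $N'$ qudits, i.e., rate $1$. Combining the two regimes realizes the claimed $R_\sQ = \min\{1, 2(1-(X+T+K-1)/N)\}$.

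The main obstacle is not analytical but bookkeeping: one must verify that the privacy, security and MDS-storage constraints of the underlying classical CSA scheme survive both (i) the server-local LIT scaling by $(u_j,v_j)$ and (ii) the transition from two separate classical downloads to a single joint quantum download through the $N$-sum box. Both checks are short because LITs are invertible and chosen independently of the queries, and because the $N$-sum box abstraction reproduces exactly the linear decoding step that the user would have performed classically; nonetheless each of the MDSXSTPIR constraints should be individually accounted for to make the reduction airtight.
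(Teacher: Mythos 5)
Your overall strategy---translate the classical CSA-based MDSXSTPIR scheme into a QCSA scheme via the three-step recipe of Subsection~\ref{subsec:CSA2QCSA}, then use Remark~\ref{rem:redundant} to drop redundant servers when $L>N/2$---is the same as the paper's, and your two-regime split and rate arithmetic are correct. The privacy/security inheritance argument you sketch is also fine, though the paper does not dwell on it.

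However, there is a genuine gap in your Regime~A argument. The MDSXSTPIR scheme of \cite{Jia_Jafar_MDSXSTPIR} is not a single-shot CSA scheme: retrieving one message from $(N,K+X)$-MDS-coded storage takes $K$ rounds, and the round-$\kappa$ answer vector has the form $\bA^{b,(\kappa)} = \bG_{\mathrm{CSA}_{N,L}(\bm{\alpha},\bff)}\,\bX^{b,(\kappa)}_{\delta,\nu} + \bR^{b,(\kappa)}$, where the residual $\bR^{b,(\kappa)}$ involves message symbols $W^{b,\theta}_{l,k}$ with $k<\kappa$. Equivalently, after absorbing $\bR^{b,(\kappa)}$ along the columns of $\bG_{\mathrm{CSA}}$, the Cauchy coordinates the $N$-sum box delivers are not the message symbols themselves but $\delta^{b,(\kappa)}_l = W^{b,\theta}_{l,\kappa} + F^{b,(\kappa)}_l$ with $\bF^{b,(\kappa)} = \bG_{\mathrm{CSA}}^{-1}\bR^{b,(\kappa)}$. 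Your statement that ``the receiver's measurement output is exactly the $2L$ desired symbols of the two instances'' therefore does not hold verbatim except in round~$1$, where $\bR^{b,(1)}=\bzero$. What the paper's proof supplies, and your proposal omits, is the inductive decoding argument across the $K$ rounds: $\bR^{b,(\kappa)}$ depends only on the symbols recovered in rounds $1,\dots,\kappa-1$, so the user can compute $\bF^{b,(\kappa)}$ and strip it from the $\delta^{b,(\kappa)}_l$ to obtain $W^{b,\theta}_{l,\kappa}$, which in turn feeds the next round. Without this step you have not established decodability, which is the substantive content of the corollary; you correctly identify that the QCSA box outputs $2L$ Cauchy coordinates per round, but you have not shown those coordinates determine the desired message symbols. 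The rate expression still comes out to $2L/N = 2LK/(NK)$, but that is a consequence of the per-round accounting together with the induction, not a free-standing fact.
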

\begin{remark}
    In an MDSXSTPIR problem, there are $M$ messages and $N$ servers. Every $K$ symbols of one message, together with $X$ random noise symbols, are encoded according to an $[N,K + X]$ MDS code and each codeword symbol is stored at one of the $N$ servers. Thus, the storage cost of every server is $\frac{1}{M}$ of the size of all the $M$ messages, and any group of up to $X$ colluding servers learn nothing about the $M$ messages. A user wishes to retrieve one of the $M$ messages by querying the $N$ distributed servers such that any group of up to $T$ colluding servers can learn nothing about  the desired message index.
\end{remark}
Appendix~\ref{app:proof_mdsxstpir} provides a proof of Corollary~\ref{cor:mdsxstpir}. 
In addition to establishing achievable rates for quantum versions of XSTPIR \cite{Jia_Sun_Jafar_XSTPIR}, MDS-XSTPIR \cite{Jia_Jafar_MDSXSTPIR} that have not been previously explored, Corollary~\ref{cor:mdsxstpir} recovers existing achievability results in Quantum PIR for the cases of TPIR \cite{QTPIR} and MDS-TPIR \cite{QMDSTPIR}. 
Relative to prior works, the field size required by the QCSA scheme is linear in $N$, because an even $q \geq L+N$ where $L \leq \frac{N}{2}$ guarantees the existence of the $\mathrm{QCSA}$ matrix, while in \cite{QTPIR}, the field size is exponential in $N$, due to an $\cO(N)$ fold field extension step. 
It is also noteworthy that prior quantum PIR schemes employ mixed quantum states to achieve \emph{symmetric} privacy, \ie the user does not learn more than his desired information. 
While the QCSA scheme does not automatically ensure symmetric privacy, the same can be accomplished by noise alignment based on shared common randomness among servers as in \cite{Chen_Jia_Wang_Jafar_NGCSA}. 
In Section~\ref{sec:applications_kn_sum_box} we discuss how we can achieve symmetric privacy automatically with a non-maximal stabilizer based construction.
Shared (classical) common randomness among servers is not difficult to achieve when the servers share quantum entanglements.

\begin{corollary}\label{cor:sdmm}
    For the $(N,X_A,X_B), N > X_A + X_B$ SDBMM (secure distributed batch matrix multiplication) problem defined in \cite{Jia_Jafar_SDMM}, where $L$ matrices $\bA_1, \ldots, \bA_{L} \in \Fq^{\lambda\times \eta}$ are $X_A$-securely shared among $N$ servers, another $L$ matrices $\bB_1, \ldots, \bB_{L} \in \Fq^{\eta\times \mu}$ are $X_B$-securely shared among the same $N$ servers, and the user wants to compute the $L$ products $\bA_1 \bB_1, \bA_2 \bB_2, \ldots, \bA_{L} \bB_{L} \in \Fq^{\lambda\times \mu}$ by querying the $N>X_A+X_B$ servers, a QCSA scheme achieves the rate
    \begin{align}
        R_\sQ = \min\px*{1, 2 \paren*{1 - \paren*{\frac{X_A + X_B}{N}}}},
    \end{align}
    \ie from every $N \cdot (\lambda\mu)$ $q$-dimensional qudits downloaded from $N$ servers, the user recovers $L= NR_\sQ$ desired product matrices in $\Fq^{\lambda\times \mu}$.
\end{corollary}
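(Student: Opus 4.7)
The proof plan closely mirrors that of Corollary~\ref{cor:mdsxstpir}, using the three--step CSA $\to$ QCSA translation developed in Section~\ref{subsec:CSA2QCSA}. First, I would invoke the classical CSA-based SDBMM construction of \cite{Jia_Jafar_SDMM}. In that scheme, for any batch size $L_Q\le N-X_A-X_B$, the $L_Q$ pairs $(\bA_\ell,\bB_\ell)$ are $X_A$- and $X_B$-securely shared among the $N$ servers through polynomials evaluated at $N$ distinct points $\alpha_1,\dots,\alpha_N$, with the secret summands positioned at $L_Q$ distinct Cauchy nodes $f_1,\dots,f_{L_Q}$. Each server $n$ computes the product of its two share matrices, producing an $\Fq^{\lambda\times\mu}$ answer whose $(i,j)$-entry, viewed as a vector in $n$, realises exactly one CSA instance with desired symbols $[\bA_\ell\bB_\ell]_{ij}$ at the Cauchy terms and $X_A+X_B$ interference symbols at the Vandermonde terms. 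In other words, each scalar position yields the $\bA^b=\bG_{\mathrm{CSA}^q_{N,L_Q}(\bm{\alpha},\bff)}\bX_{\delta,\nu}^b$ structure required by Section~\ref{subsec:CSA2QCSA}.

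Next, assume $X_A+X_B\ge N/2$, so that $L_Q=N-X_A-X_B\le N/2$ is in the native regime of Theorem~\ref{thm:QCSA}. For each of the $\lambda\mu$ scalar positions $(i,j)$, I would pair two successive batches of the CSA scheme and feed them through a single invocation of the $N$-sum box with transfer matrix $\bM_{\mathrm{QCSA}}$: server $n$ scales its two answers by $u_n$ and $v_n$ as in Equation~\eqref{eq:b_compute} and loads them as the classical inputs $(x_n,x_{n+N})$. Equation~\eqref{eq:qcsa_output} then delivers all $2L_Q$ desired scalars for that $(i,j)$. Repeating over the $\lambda\mu$ positions costs $N\lambda\mu$ qudits in total and returns $L=2L_Q$ complete matrix products, which gives
\[
R_{\sQ}=\frac{L}{N}=\frac{2L_Q}{N}=2\paren*{1-\frac{X_A+X_B}{N}},
\]
as claimed in this regime.

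For the complementary regime $X_A+X_B<N/2$, the naive rate formula exceeds $1$ and is therefore information-theoretically inadmissible by the Holevo bound. Here I would apply the redundancy reduction noted in Remark~\ref{rem:redundant}: restrict to any $N'=2(X_A+X_B)$ of the $N$ servers and run the classical CSA scheme with $L_Q'=N'-(X_A+X_B)=N'/2$, so that the interference dimension is preserved. The same QCSA packaging now operates at the boundary $L_Q'=N'/2$, delivering $2L_Q'=N'$ matrix products per $N'\lambda\mu$ qudits, i.e.\ rate $1$, matching the outer minimum.

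The main technical point to verify is that collapsing the scheme into the quantum layer does not weaken the classical $(X_A,X_B)$-security guarantees. This is where I would spend the most care: the server-side scalings $u_n,v_n$ are public constants known to every colluding subset and are invertible per-server, so any view of a colluding group is an invertible affine image of the classical view; and the two CSA batches multiplexed into one $N$-sum box are drawn with independent randomness, so security composes across the two instances. With these observations, the scheme inherits the $X_A$- and $X_B$-security of the underlying Jia--Jafar construction, completing the proof.
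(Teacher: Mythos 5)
Your proposal is correct and follows essentially the same route as the paper: invoke the Jia--Jafar classical CSA scheme for SDBMM entrywise, pair two instances per scalar position $(i,j)$, feed them through the $N$-sum box of Theorem~\ref{thm:QCSA} when $L = N-(X_A+X_B)\le N/2$, and apply the server-reduction of Remark~\ref{rem:redundant} (with $N' = 2(X_A+X_B)$, $L'=N'/2$) when $L>N/2$ to hit rate $1$. The one substantive addition in your write-up is the explicit argument that $(X_A,X_B)$-security is preserved under the quantumization (the scalings $u_n,v_n$ are public, per-server, and invertible, and the two multiplexed CSA instances use independent randomness); the paper takes this as implicit, deferring to the classical scheme's security proof, so your treatment is slightly more self-contained but not a different proof strategy.
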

Section~\ref{proof:sdmm} provides a proof of Corollary~\ref{cor:sdmm}. Let us note that in the original problem defined in \cite{Jia_Jafar_SDMM}, the $S$ matrices $\bA_1, \ldots, \bA_S \in \Fq^{L\times K}$ and another $S$ matrices $\bB_1, \ldots, \bB_S \in \Fq^{K\times M}$ are to be pairwise multiplied. 
The parameters $(S,L,K,M)$ from \cite{Jia_Jafar_SDMM} are mapped to $(L,\lambda,\eta,\mu)$ to fit the notation in this work.

\begin{remark}
    In both corollaries, the rate achieved with the QCSA scheme can be expressed as $R_\sQ = \min\{1,2 R_\sC\}$, where $R_\sC$ is the rate achieved by the CSA scheme in the corresponding classical setting. 
    Considering that rates greater than 1 are not possible according to the Holevo bound \cite{Holevo_Bounds}, it is apparent that the QCSA scheme  achieves the maximal superdense-coding gain relative to the classical CSA scheme.
\end{remark}


\section{Non-Maximal-Stabilizer-Based Constructions}
\label{sec:non_max_stab_based_constructions}

In Section~\ref{sec:stab_formalism} we noticed that the output of the measurement associated with a non-maximal stabilizer has less digits than one would expect.
The reason is that the states that form a basis for the space $\cW$ (cf. Equation~\eqref{eq:Hdecomposition}) associated to a non-maximal stabilizer are mixed states, so the remaining $N-\kappa$ output digits result as uniformly random and can be \emph{discarded}.
Thus, non-maximal stabilizers can be useful to describe black boxes that output only $\kappa$ $q$-ary digits while discarding the last $N-\kappa$ digits, which we call \emph{$(\kappa,N)$-sum boxes}.

Furthermore, given a maximal stabilizer with $N$ generators $\cS(\cV')$, one can always obtain a non-maximal stabilizer $\cS(\cV)$ by choosing $\kappa$ of those.
Conversely, given a non-maximal stabilizer with $\kappa$ generators $\cS(\cV)$, one can always complete the generator basis to obtain a maximal stabilizer $\cS(\cV')$.

These two observations suggest that, in order to obtain the stabilized state of a non-maximal stabilizer $\cS(\cV)$, one can start with $N$ qudits in the mixed state $\ket{\bzero} \bra{\bzero} \otimes \frac{\bI_{q^{N-\kappa}}}{q^{N-\kappa}}$ and apply the same unitary as one would use to generate $\ket{\bzero}_\cW$ from $\ket{\bzero}$ in a maximal stabilizer $\cS(\cV')$ that completes $\cS(\cV)$.
After we apply a unitary $\tilde{\bW}(\bs)$ for some $\bs \in \Fq^{2N}$ and measure using $\cP^{\cV'}$, the output will have $N-\kappa$ uniformly random digits.
\begin{example}
    \label{ex:non_max_stab}
    Let $\bG = \ppmatrix{1 & 1 & 0 & 0}^\top$ and $\cV = \colspan{\bG}$.
    Clearly $\bG$ is a submatrix of the matrix determined by $\cS$ in Example~\ref{ex:two_sum_box}, so we can take $\cS$ as the maximal-stabilizer completion of $\cS(\cV)$.
    Consider $\bs = (x_1,x_2,x_3,x_4) \in \F_2^4$. 
    Let $\rho = \ket{0} \bra{0} \otimes \frac{\bI}{2}$ be the original state, and let $\rho' = \ket{\varphi} \bra{\varphi}$ be its purification, where $\ket{\varphi} = \ket{0} \otimes (\ket{00} + \ket{11})/\sqrt{2}$.
    Here we consider a purification to clarify the evolution of the state before the measurement.
    By applying the same gates as one would use to prepare the initial entangled state for the two-sum transmission protocol, \ie a Hadamard gate on the first qubit and a CNOT gate with the first qubit as control and the second qubit as target (cf. Example~\ref{ex:two_sum_box}), we obtain the initial pure state
    \[
    \ket{\varphi'} = (\ket{000} + \ket{110} + \ket{101} + \ket{011}) / 2,
    \]
    which is stabilized by the three independent Weyl operators $\bW(1,1,0,0,0,0),\ \bW(1,0,1,0,0,0)$ and $\bW(0,1,1,0,0,0)$.

    The state evolves in the following way:
    \[\begin{split}
        \ket{\varphi'} \stackrel{\footnotesize \sZ(x_3) \otimes \sZ(x_4) \otimes \bI}{\rightarrow} & \paren*{\ket{000} + (-1)^{x_3+x_4} \ket{110} + (-1)^{x_3} \ket{101} + (-1)^{x_4} \ket{011}} / 2 \\
        \stackrel{\footnotesize \sX(x_1) \otimes \sX(x_2) \otimes \bI}{\rightarrow} & \paren*{\ket{x_1 x_2 0} + (-1)^{x_3+x_4} \ket{\overline{x_1} \overline{x_2} 0} + (-1)^{x_3} \ket{\overline{x_1} x_2 1} + (-1)^{x_4} \ket{x_1 \overline{x_2} 1}} / 2 \\
        \stackrel{\footnotesize \mathsf{CNOT}_{1,2} \otimes \bI}{\rightarrow} 
        & \bigl(\ket{x_1 (x_1 + x_2) 0} + (-1)^{x_3+x_4} \ket{\overline{x_1} (x_1 + x_2) 0} \\
        & + (-1)^{x_3} \ket{\overline{x_1} (\overline{x_1 + x_2}) 1} + (-1)^{x_4} \ket{x_1 \overline{x_1 + x_2} 1} \bigr) / 2 = \\
        & \paren*{\paren*{\ket{x_1} + (-1)^{x_3 + x_4} \ket{\overline{x_1}}} / \sqrt{2}} \otimes \paren*{\paren*{\ket{(x_1 + x_2) 0} + (-1)^{x_4} \ket{(\overline{x_1+x_2} 1)}} / \sqrt{2}} = \\
        & \paren*{\sX(x_1) \paren*{\ket{0} + (-1)^{x_3 + x_4} \ket{1}} / \sqrt{2}} \otimes \paren*{(\sX(x_1+x_2) \sZ(x_4) \otimes \bI) \paren*{\ket{00} + \ket{11}} / \sqrt{2} } \\
        \stackrel{\footnotesize \sH \otimes \bI \otimes \bI}{\rightarrow} 
        & \paren*{\sZ(x_1) \ket{x_3 + x_4}} \otimes \paren*{(\sX(x_1+x_2) \sZ(x_4) \otimes \bI) (\ket{00} + \ket{11}) / \sqrt{2}}
    \end{split}\]
    Thus, after applying the Weyl operator $\bW(\bs)$ and reverting the initial CNOT and Hadamard gates we end up with the qubits in the state
    \[
    \ket{\varphi''} = (-1)^{x_1+x_3+x_4} \ket{x_3+x_4} \otimes \paren*{\bW(x_1+x_2,0,x_4,0) (\ket{00} + \ket{11})/\sqrt{2}}.
    \]
    Let $\ttr : \cH \to \R$ be the trace operator that maps a state to the trace of its density matrix. Let $\ttr: \cH_1 \otimes \cH_2 \otimes \cH_3 \to \cH_1 \otimes \cH_2$ the partial trace over the third quantum system. In this case, given the density matrix $\bD$, the partial trace is given by
    \[
    \ttr(\bD) = \sum_{x=0}^1 \paren*{\bI \otimes \bI \otimes \bra{x}} \bD \paren*{\bI \otimes \bI \otimes \ket{x}}.
    \]
    Notice that, by the properties of the tensor product, the state over the first qubit has density matrix $\ket{x_3+x_4}\bra{x_3+x_4}$ after applying the partial trace operator. On the other side, 
    \[\begin{split}
        & \frac{1}{2} \sum_{x=0}^1 (\bI \otimes \bra{x}) \paren*{\bW(x_1+x_2,0,x_4,0) (\ket{00} + \ket{11}) (\bra{00} + \bra{11}) \bW^\dagger(x_1+x_2,0,x_4,0)} (\bI \otimes \ket{x}) = \\
        & \frac{1}{2} \bW(x_1 + x_2, x_4) \paren*{\sum_{x=0}^1 (\bI \otimes \bra{x}) (\ket{00} + \ket{11}) (\bra{00} + \bra{11}) (\bI \otimes \ket{x})} \bW^\dagger(x_1 + x_2, x_4) = \\
        & \frac{1}{2} \bW(x_1 + x_2, x_4) \bI \bW^\dagger(x_1 + x_2, x_4) = \frac{\bI}{2}.
    \end{split}\]
    We conclude that, by tracing out the third qubit, the state has the density matrix
    \[
    \rho_{out} = \ket{x_3+x_4} \bra{x_3+x_4} \otimes \frac{\bI}{2},
    \]
    which outputs the bit $x_3 + x_4$ and a random bit.
\end{example}

\subsection{Stabilizer-Based $(\kappa,N)$-Sum Boxes}
\label{sec:stab_based_kn_sum_boxes}


The setting for $(\kappa,N)$-sum boxes is similar to the one described for $N$-sum boxes in Section~\ref{sec:n_sum_box}, with the difference that the transfer matrix $\bM$ has dimensions $\kappa \times 2N$ and $N-\kappa$ outputs of the measurement are discarded.
The following theorem generalizes Theorem~\ref{thm:nsumboxconstruction} to non-maximal-stabilizer-based sum boxes.
The proof is omitted, as it is the same as the one provided for Theorem~\ref{thm:nsumboxconstruction}.

\begin{theorem}
    \label{thm:kn_sum_box_construction}
    Let $\bG \in \Fq^{2N \times \kappa}$ and $\bG^\perp \in \Fq^{2N \times 2N - \kappa}$ be matrices satisfying the conditions of Proposition~\ref{prop:measurementoutcome}, \ie such that 
    \begin{itemize}
        \item $\bG = \bG^\perp \ppsmatrix{\bI_\kappa \\ \bzero}$,
    
        \item $\bG^\top \bJ \bG^\perp = \bzero$,

        \item there exists $\bH \in \Fq^{2N \times \kappa}$ such that $\ppmatrix{\bG^\perp & \bH}$ is full rank.
    \end{itemize}
    Then there exists a stabilizer-based construction for a $(\kappa,N)$-sum box over $\Fq$ with transfer matrix
    \[
    \bM = \ppmatrix{\bzero_{\kappa \times 2N - \kappa} & \bI_{\kappa}} \ppmatrix{\bG^\perp & \bH}^{-1} \in \Fq^{\kappa \times 2N}.
    \]
\end{theorem}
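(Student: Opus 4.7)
The plan is to mirror the proof of Theorem~\ref{thm:nsumboxconstruction}, with the only real adjustment being that the stabilizer now has $\kappa<N$ generators and consequently only $\kappa$ of the $N$ measurement outcomes are informative. First, I would check that the hypotheses of Proposition~\ref{prop:measurementoutcome} are exactly the three bullet points given in the statement, so that proposition applies verbatim to the pair $(\bG,\bG^\perp)$. In particular, conditions 1 and 2 guarantee $\cV\coloneqq\colspan{\bG}\subseteq\colspan{\bG^\perp}=\cV^{\symperp}$, so $\cS(\cV)$ is a legitimate (non-maximal) stabilizer group, and condition 3 ensures the full-rank completion needed to define the decoding map $(\cdot)_h$.

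Next, I would describe the quantum circuit. Start the $N$ qudits in an entangled/mixed state stabilized by $\cS(\cV)$; concretely, one can purify as in Example~\ref{ex:non_max_stab} by completing $\cV$ to a strongly self-orthogonal subspace $\cV'\supseteq\cV$ and using the unitary $\bU_{\bG',\bH'}$ associated to a maximal completion to prepare the state from $|\bzero\rangle\langle\bzero|\otimes(\bI_{q^{N-\kappa}}/q^{N-\kappa})$. Transmitter $n\in[N]$ then applies $\sX(x_n)\sZ(x_{N+n})$ to its qudit and sends it to the receiver, so the joint Weyl operator $\bW(\bx)$ is applied. By Proposition~\ref{prop:stab}(c), this moves the state into the $|\overline{\bx}\rangle$ sector of the decomposition in Equation~\eqref{eq:Hdecomposition}. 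Finally, the receiver performs the PVM $\cP^\cV$; by Proposition~\ref{prop:measurementoutcome} the informative outcome is exactly $(\bx)_h$, obtained with probability one.

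It remains to identify $(\bx)_h$ with $\bM\bx$ using the $\bM$ defined in the theorem. This is immediate from the definition in Proposition~\ref{prop:measurementoutcome}:
\[
(\bx)_h = \ppmatrix{\bzero_{\kappa\times(2N-\kappa)} & \bI_\kappa}\ppmatrix{\bG^\perp & \bH}^{-1}\bx = \bM\bx,
\]
so the black box implements $\by=\bM\bx$ deterministically, on a communication cost of $N$ qudits, while the remaining $N-\kappa$ register components are uniformly random and can be discarded.

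The main obstacle I anticipate is nothing technical in the algebra, since Proposition~\ref{prop:measurementoutcome} already handles the general $\kappa\leq N$ case; the only conceptual care needed is to justify the initial state preparation in the mixed (non-maximal) setting. This is precisely what Example~\ref{ex:non_max_stab} illustrates: the $N-\kappa$ ``extra'' degrees of freedom of $\cH^{\otimes N}=\cW\otimes\C^{q^{N-\kappa}}$ that are not fixed by $\cS(\cV)$ are traced out at measurement time, which is why the $N-\kappa$ trailing outputs may be safely discarded without disturbing the $\kappa$ informative coordinates $\bM\bx$. Hence no new machinery is required and the proof reduces, modulo this initial-state remark, to the argument already carried out for Theorem~\ref{thm:nsumboxconstruction}.
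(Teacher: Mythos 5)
Your proof is correct and follows essentially the same route the paper takes: invoke Proposition~\ref{prop:measurementoutcome} to identify the PVM outcome with $(\bx)_h=\bM\bx$, run the same circuit as in Theorem~\ref{thm:nsumboxconstruction}, and handle the non-maximal case by initializing the $N-\kappa$ free qudit degrees of freedom in the maximally mixed state (exactly the device of Remark~\ref{rem:max_to_non_max} and Example~\ref{ex:non_max_stab}). The paper explicitly omits this proof as \qmarks{the same as the one provided for Theorem~\ref{thm:nsumboxconstruction}}, and your write-up also supplies the short justification of the initial-state preparation that the paper delegates to the surrounding remarks.
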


The following example shows the implementation of a $(1,2)$-sum box over $2$ qubits.

\begin{example}
    \label{ex:kn_sum_box}
    Suppose we have two parties, Tx1 and Tx2, both possessing two bits $\ba = (x_1, x_3),\ \bb = (x_2, x_4)\in \F_2^2$, respectively.
    Let $\cS = \langle \bW(1,1,0,0) \rangle$ be a 1-dimensional stabilizer over 2 qubits.
    Notice that its generator is the restriction to the first two qubits of $\bW(1,1,0,0,0,0)$, \ie the first Weyl operator that fixes $\ket{\varphi'}$ defined in Example~\ref{ex:non_max_stab}.
    The stabilizer has a generator matrix that can be completed to a symplectic matrix $\bF$ as follows:
    \begin{equation*}
    \bF = 
    \ppmatrix{1 & 0 & 0 & 1 \\ 
              1 & 0 & 0 & 0 \\ 
              0 & 1 & 0 & 0 \\ 
              0 & 1 & 1 & 0}.
    \end{equation*}
    Thus, we can choose matrices $\bG^\top$ and $\bH$ as
    \begin{equation*}
    \bG^\top = 
    \ppmatrix{1 & 0 & 1 \\ 
              1 & 0 & 0 \\ 
              0 & 1 & 0 \\ 
              0 & 1 & 0}, \quad 
    \bH = \ppmatrix{0 \\ 0 \\ 0 \\ 1}.
    \end{equation*}
    The output of the box is then given by 
    \[
    \ppmatrix{\bzero_{3 \times 1} & \bI_1} \ppmatrix{\bG^\top & \bH}^{-1} \bx = 
    \ppmatrix{0 & 0 & 0 & 1} 
    \ppmatrix{0 & 1 & 0 & 0 \\ 
              0 & 0 & 1 & 0 \\ 
              1 & 1 & 0 & 0 \\ 
              0 & 0 & 1 & 1}  
    \ppmatrix{x_1 \\ x_2 \\ x_3 \\ x_4} = 
    x_3 + x_4.
    \]
    Now, notice that the Weyl operator $\bW(\ba) \otimes \bW(\bb)$ applied by the transmitters together and the non-maximal stabilizer $\cS$ correspond, respectively, to the Weyl operator $\bW(\bs)$ and the stabilizer $\cS(\cV)$ in Example~\ref{ex:non_max_stab}.
    As the outputs match, it is easy to see that the (1,2)-sum box simplifies the description of the non-maximal-stabilizer construction.
\end{example}

\begin{remark}
    \label{rem:max_to_non_max}
    From the discussion above, one can see that maximal stabilizers exhaust the scope of stabilizer-based constructions for black boxes of the form $\bY = \bM \bX$, where $\bM \in \Fq^{\kappa \times 2N}$.
    More precisely, consider matrices $\bG,\bH$ that satisfy the conditions of Theorem~\ref{thm:nsumboxconstruction}, \ie the matrices associated to a stabilizer-based construction for an $N$-sum box.
    Then there is a correspondence between $\bU_{\bG,\bH}$ and the maximal stabilizer $\cS$, where $\bU_{\bG,\bH}$ is the unitary that prepares the initial state $\ket{\bzero}_\cW = \bU_{\bG,\bH} \ket{\bzero}$ for the $N$-sum box and $\cS = \cS(\colspan{\bG})$.
    If we want $N-\kappa$ of the outputs of an $N$-sum box to be discarded, we can consider the initial state $\ket{\bzero}_{\cW'} = \bU_{\bG,\bH} \paren*{\ket{\bzero} \otimes \frac{\bI_{q^{N-\kappa}}}{q^{N-\kappa}}}$, as operations on the qudits initially prepared in the mixed state do not affect their mixedness and measuring them at the end of the sum-box operations outputs random digits.
    Thus, $(\kappa,N)$-sum boxes can be obtained from $N$-sum boxes by changing the initial (unentangled) state to a mixed state and their description (as in Theorem~\ref{thm:kn_sum_box_construction}) can be formalized by using a submatrix of $\bG$ as the generator for the non-maximal stabilizer.
\end{remark}

\subsection{Applications of $(\kappa,N)$-Sum Boxes}
\label{sec:applications_kn_sum_box}

The utilization of a $(\kappa,N)$-sum box may be perceived as inefficient due to the presence of non-informative qudits among the total $N$ qudits being transferred.
Nevertheless, it has some meaningful applications, \eg it is useful to make symmetric QPIR protocols which use the $N$-sum box construction without the need for shared randomness among the servers like in classical PIR \cite{Chen_Jia_Wang_Jafar_NGCSA}.

\begin{remark}
    Notice that, in this case, we do not need to provide any additional information to the servers to provide symmetry for the protocol.
    In fact, we only need to change the initial entangled state distributed to the servers, rather than providing additional shared randomness to the servers which requires additional computations and storage.
\end{remark}

For instance, consider the output of a QCSA scheme as given in Equation~\eqref{eq:qcsa_output}, \ie
\[
\ppmatrix{\delta^1_1,\cdots,\delta^1_L,\bm{\nu}^1_{(\leftharpoondown)}, \delta^2_1,\cdots,\delta^2_L,\bm{\nu}^2_{(\leftharpoondown)}}^\top,
\]
where $\bm{\nu}^b_{(\leftharpoondown)}$ represents the last $\lfloor N/2\rfloor -L$ symbols of the interference vector $\bm{\nu}^b = \paren*{\nu^b_1,\ldots,\nu^b_{N-L}}$ for $b \in [2]$.
This interference contains linear combinations of other files stored in the server that might be retrieved over many rounds by a malicious user.
Thus, we want to \qmarks{hide} the interference outputs by replacing them with random digits, which can be achieved in the following way.
Some details are omitted, as the procedure is similar to the one shown in Appendix~\ref{app:proofQCSA}.

Consider two GRS codes $\cC_1 = \mathrm{GRS}^q_{N,L}(\a,\bu)$ and $\cC_2 = \mathrm{GRS}^q_{N,L}(\a,\bv)$ where $\bv$ is generated by $\bu$ according to Equation~\eqref{eq:def_v}, then we have that $\cC_2 \subseteq \cC_1^\perp$ or, equivalently, $\bG_{\cC^\perp_1}^\top \bG_{\cC_2} = \bzero_{(N-L) \times L}$ for $L \leq \floor{N/2}$.
Similarly, we have that $\cC_1 \subseteq \cC_2^\perp$ or, equivalently, $\bG_{\cC^\perp_2}^\top \bG_{\cC_1} = \bzero_{N-L \times L}$.
Let
\[
\bG_{\cC_b^\perp} = \ppmatrix{\bG_{\cC_b} & \bG'_{\cC_b}},\ b \in [2].
\]
Consider the matrices
\begin{align*}
    \bG & = \ppmatrix{
    \bG_{\cC_1} & \bzero \\ 
    \bzero & \bG_{\cC_2}
    } \in \Fq^{2N \times 2L}, \\
    \bG^\perp & = \ppmatrix{
    \bG_{\cC_1} & \bzero & \bG'_{\cC_1} & \bzero \\ 
    \bzero & \bG_{\cC_2} & \bzero & \bG'_{\cC_2}
    } \in \Fq^{2N \times 2(N-L)}, \\
    \bH & = \ppmatrix{
    \tilde{\bH}^\bu_\sC & \bzero \\
    \bzero & \tilde{\bH}^\bv_\sC
    } \in \Fq^{2N \times 2L},
\end{align*}
where $\tilde{\bH}^\bu_\sC$ and $\tilde{\bH}^\bv_\sC$ are the Cauchy submatrices of QCSA matrices defined in Equation~\eqref{eq:MCV_Blocks}.
This matrix satisfies the conditions of Theorem~\ref{thm:kn_sum_box_construction}, then there exists a $(2L,N)$-sum box with transfer matrix
\[
\bM = \ppmatrix{\bzero_{\kappa \times 2N - \kappa} & \bI_{\kappa \times \kappa}} \ppmatrix{\bG^\perp & \bH}^{-1}.
\]
Clearly there exists a permutation matrix $\bP_\pi$ as defined in Equation~\eqref{eq:def_perm} such that
\[
\ppmatrix{\bG^\perp & \bH} = 
\ppmatrix{
\bG_{\mathrm{QCSA}^q_{N,L}(\a,\bu,\bff)} & \bzero \\ 
\bzero & \bG_{\mathrm{QCSA}^q_{N,L}(\a,\bv,\bff)}
} \bP_\pi.
\]
The transfer matrix is given by
\[
\bM_{\mathrm{QCSA}} = \ppmatrix{\begin{array}{cc|cc}
\bI_L & \bzero & \bzero & \bzero \\
\bzero &\bzero & \bI_L &\bzero
\end{array}}
\ppmatrix{
\bQ_N^{\bu} & \bzero\\
\bzero & \bQ_N^{\bv}
}^{-1}.
\]
It is easily verified that the output of the $(2L,N)$-sum box is just
\[
\ppmatrix{\delta^1_1,\ldots,\delta^1_L, \delta^2_1,\ldots,\delta^2_L}^\top,
\]
without the interference terms that appear in Equation~\eqref{eq:qcsa_output}.
The QCSA scheme is thus symmetric, as the remaining $N - 2L$ outputs are randomized by construction.

\section{Conclusion}

In this paper we considered a quantum protocol based on the stabilizer formalism to improve the communication rates of classical many-to-one networks employing quantum communication.
Although the protocol proposed in this study may be considered obvious from a quantum coding-theoretic perspective, as noted in Remark~\ref{rem:quantumtrivialprotocol}, it may not be as straightforward for the classical-coding community.
The $N$-sum box represents a quantum black-box with classical inputs and classical outputs that can communicate $N$ sums from $N$ transmitters, each holding 2 symbols from a finite field.
In a classical setting, assuming that the transmitters cannot talk to each other, the only way to achieve the communication of $N$ linear combinations of $2N$ inputs is to send all the $2N$ symbols from the transmitters to the receiver, compared to sending only $N$ qudits with the $N$-sum box.
We presented an application of such black box to improve the rates of CSA-based PIR and SDMM using the underlying superdense-coding gain.
Furthermore, we showed how we can create a symmetric CSA-based QPIR protocol by using a non-maximal-stabilizer-based $(\kappa,N)$-sum box instead of sharing common randomness among the servers.

\clearpage


\appendices

\section{Proof of lemma~\ref{lemm:colswap}}
\label{app:colswaplemma}

\begin{proof}
    The first two claims \eqref{eq:colswapequivalence} and \eqref{eq:colswapsso} are true not just for a particular $\bgS$, but for every diagonal $\bgS$ with elements in $\px*{0,1}$. 
    
    Let $\bgL = \ppsmatrix{\bI - \bgS & \bgS \\ -\bgS & \bI - \bgS}$.
    First, we prove Equation~\eqref{eq:colswapequivalence} by testing the LIT condition on $\bgL$.
    We have that $\det((\bI - \bgS)^2 + \bgS^2) = \det(\bI - 2\bgS + 2\bgS^2) = \det(\bI) = 1$,
    since $\bgS^2 = \bgS$ by the fact that $\bgS$ is a diagonal matrix with entries in $\px*{0,1}$.

    Equation~\eqref{eq:colswapsso} follows by the fact that $\bgL \bJ \bgL^\top = \bJ$ (that is, $\bgL^\top$ is symplectic), which can be easily proved by employing $\bgS^2 = \bgS$.

    Finally, we prove that there exists a signed column-swap operation ($\bgS$) that gives us the desired full-rank left half-matrix of $(\bM')^\top$. 
    For this we proceed according to Algorithm~\ref{alg:colswap}, which tries at most $N$ different signed-swap operations before declaring either success or failure.

    \begin{algorithm}
    \SetKwInOut{Input}{input}\SetKwInOut{Output}{output}
    
    \caption{Signed column swap}
    \label{alg:colswap}
    \Input{$\bM = \ppmatrix{\bM_l & \bM_r} \in \Fq^{N \times 2N}$}
    \Output{$\bM' = \ppmatrix{\bM'_l & \bM'_r} \in \Fq^{N \times 2N} : \det(\bM'_l) \neq 0$}
    $\bM' \gets \bM$\;
    $\bM'_l \gets \bM_l$\;
    $\bM'_r \gets \bM_r$\;
    $i \gets 1$\;
    \While{$i \leq N$}
    {
        \uIf{$(\bM'_l)_{\cdot,i}$ is linearly independent of the first $i-1$ columns of $\bM'_l$}
        {
        $i \gets i+1$\;
        }
        \uElseIf{$(\bM'_r)_{\cdot,i}$ is linearly independent of the first $i-1$ columns of $\bM'_l$ that are already fixed}
        {
        $(\bM'_l)_{\cdot,i},(\bM'_r)_{\cdot,i} \leftarrow (-\bM'_r)_{\cdot,i},(\bM'_l)_{\cdot,i}$\;
        $i \gets i+1$\;
        }
        \Else
        {
        \Return Failure\;
        }
    }
    \Return Success\;
    \end{algorithm}

    If the algorithm exits with success, then we obtain a full-rank $\bM'_l$ as desired. 
    To show that the algorithm cannot fail, let us show that failure would lead to a contradiction. 
    Suppose the algorithm fails and exits with the value $i < N$. At this point, the first $i-1$ columns of $\bM'_l$ are linearly independent, but the $i^{th}$ column of $\bM_l$ and the $i^{th}$ column of $\bM_r$ are each linearly dependent on the first $i$ columns of $\bM_l$.
    Note that since the only manipulations performed by the algorithm are signed-swap operations, by Equation~\eqref{eq:colswapsso} $\bM' \in \cM_o$. 
    The remainder of the proof of (\ref{eq:colswapdet}) uses the following two facts.
    \begin{enumerate}
        \item Since $\bM' \in \cM_o$, the $2N \times N$ matrix $(\bM')^\perp = \bJ \bM'$ spans the null-space of $\bM'$.
        Moreover, $(\bM')^\perp  \in \cM_o$, since multiplication by an invertible matrix preserves rank and
        \[
        (\bJ^\top \bM')^\top \bJ (\bJ^\top \bM') = (\bM')^\top \bJ \bJ \bJ^\top \bM' = (\bM')^\top \bJ \bM' = \bzero.
        \]
        Thus, any matrix whose columns are null-vectors of $\bM$ must be self-orthogonal.

        \item If $\bV = \ppmatrix{\bV_l & \bV_r}^\top \in \cM_o$, then the dot product between the $i^{th}$ row of $\bV_l$ and the $j^{th}$ row of $\bV_r$ is equal to the dot product between the $j^{th}$ row of $\bV_l$ and the $i^{th}$ row of $\bV_r$.
        In fact, $\bV^\top \bJ \bV = \bzero$ implies that
        \[ \begin{split}
            \bV^\top \bJ \bV = \bzero & \implies \ppmatrix{\bV_r & -\bV_l} \ppmatrix{\bV_l & \bV_r}^\top = \bzero \\
            & \implies \bV_r \bV_l^\top - \bV_l \bV_r^\top = \bzero \\
            & \implies (\bV_r \bV_l^\top)_i,j = (\bV_l \bV_r^\top)_i,j,\quad \forall i,j \in [N] \\
            & \implies (\bV_r)_i \cdot (\bV_l)_j = (\bV_l)_i \cdot (\bV_r)_j,\quad \forall i,j \in [N].
        \end{split} \]
    \end{enumerate}
    Since the $(i+1)^{th}$ column of $\bM_l$ and the $(i+1)^{th}$ column of $\bM_r$ are each linearly dependent on the first $i$ columns of $\bM_l$, there exists a $2 \times N$ matrix $\bV$ such that
    \[ \begin{split}
        \bV^\top & = \ppmatrix{\bV_l & \bV_r} \\
        & =
        \begin{pmatrix}[ccccc|ccccc]
            \a_1 & \cdots & \a_i & \a_{i+1} & \bzero & 0 & \cdots & 0 & 0 & \bzero \\
            \b_1 & \cdots & \b_i & 0 & \bzero & 0 & \cdots & 0 & \b_{i+1} & \bzero
        \end{pmatrix},
    \end{split} \]
    $\a_{i+1} \b_{i+1} \neq 0$, and $(\bM')^\top \bV = \bzero$.
    In fact, since the first $(i+1)$ columns of $\bM_l$ are linearly dependent, there exists a non-trivial linear combination of them with coefficients $\a_1,\ldots,\a_{i+1}$ that produces the zero vector.
    Notice that $\a_{i+1}$ cannot be zero because the first $i$ columns are linearly independent by assumption.
    Thus, the first row of $\bV^\top$ is in the null-space of $\bM'$ and $\a_{i+1} \neq 0$.
    The second row of $\bV^\top$ is similarly in the null-space of $\bM'$ and $\b_{i+1} \neq 0$.
    So, $\bV \in \cM_o$ by fact 1), but the dot product of the first row of $\bV_l$ with the second row of $\bV_r$ is $\a_{i+1} \b_{i+1} \neq 0$, whereas the dot product of the second row of $\bV_l$ with the first row of $\bV_r$ is 0, which contradicts fact 2).
    This contradiction proves Equation~\eqref{eq:colswapdet}.
\end{proof}

\section{Proof of Theorem~\ref{thm:QCSA}}
\label{app:proofQCSA}

For a permutation $\pi : [2N] \rightarrow [2N]$, define,
\begin{equation}
    \label{eq:def_perm}
    \bP_{\pi} \coloneqq \ppmatrix{\be_{\pi(1)} & \be_{\pi(2)} & \cdots & \be_{\pi(2N)}},
\end{equation}
where $\be_{i}$ is the $i^{th}$ column of $\bI_{2N}$. 
When $\bP_{\pi}$ is multiplied to the right of a $2N \times 2N$ matrix $\bA$, the columns of $\bA$ are permuted according to the permutation $\pi$. 
Also, note that $\bP_{\pi}^{-1} = \bP_{\pi^{-1}} = \bP_{\pi}^\top$, 
where $\pi^{-1}$ is the inverse permutation of $\pi$.

Now let us prove Theorem~\ref{thm:QCSA} by specifying the $N$-sum box construction according to Theorem~\ref{thm:nsumboxconstruction}. Define,
\begin{equation}
    \tilde{\bG} = 
    \ppmatrix{
    \bG_{\mathrm{GRS}^q_{N,\lceil N/2\rceil}(\bm{\alpha},\bu)} & \bzero \\ 
    \bzero & \bG_{\mathrm{GRS}^q_{N,\lceil N/2\rceil}(\bm{\alpha},\bv)}
    },
\end{equation}
where $\bG_{\mathrm{GRS}^q_{N,\lceil N/2\rceil}(\bm{\alpha},\bu)}$ and $\bG_{\mathrm{GRS}^q_{N,\lceil N/2\rceil}(\bm{\alpha},\bv)}$ are sub-matrices of $\bQ^\bu_N$ and $\bQ^\bv_N$, respectively.
If $N$ is even, then $\tilde{\bG}$ is a $2N \times N$ square matrix, and we choose $\bG = \tilde{\bG}$ for the $N$-sum box construction. 
If $N$ is odd, then $\tilde{\bG}$ is a $2N \times (N+1)$ matrix, \ie it has an extra column. 
In this case we choose $\bG$ as the $2N\times N$ left-sub-matrix of $\tilde{\bG}$, leaving out the $(N+1)^{th}$ column. 
Thus, 
\begin{equation} \label{eq:Gform}
    \bG = \ppmatrix{
    \bG_{\mathrm{GRS}^q_{N,\lceil N/2\rceil}(\bm{\alpha},\bu)} & \bzero \\ 
    \bzero & \bG_{\mathrm{GRS}^q_{N,\lfloor N/2\rfloor}(\bm{\alpha},\bv)}
    }.
\end{equation}
Note that $\bG$ is a sub-matrix of $\diagonal(\bQ_N^\bu, \bQ_N^\bv) = \ppsmatrix{\bQ_N^\bu & \bzero \\ \bzero & \bQ_N^\bv}$. 
Let $\bH$ in Theorem~\ref{thm:nsumboxconstruction} be chosen as the remaining columns of $\diagonal(\bQ_N^\bu, \bQ_N^\bv)$ after eliminating the columns that are present in $\bG$. Thus,
\begin{align}
    \ppmatrix{\begin{array}{c|c}
    \bG & \bH
    \end{array}}
    & = {\scriptsize\ppmatrix{
    \begin{array}{c|c;{2pt/2pt}c;{2pt/2pt}c;{2pt/2pt}c}
    \tilde{\bG} & 
    \begin{matrix} 
    \tilde{\bH}^\bu_\sC & \tilde{\bH}^\bu_\sV \\ \bzero  & \bzero
    \end{matrix} & 
    \begin{matrix} 
    \bzero \\ \tilde{\bH}^\bv_\sC
    \end{matrix} & 
    {\color{black!50} \tilde{\bG}_{\cdot,N+1}} & 
    \begin{matrix}
    \bzero \\ \tilde{\bH}^\bv_\sV
    \end{matrix} 
    \end{array}}}
    \label{eq:HGform}\\
    & =
    \ppmatrix{
    \bQ_N^\bu & \bzero\\
    \bzero & \bQ_N^\bv
    }
    \bP_{\pi}.
\end{align}
The column $\tilde{\bG}(:,N+1)$ does not appear in Equation~\eqref{eq:HGform} (\ie it is  empty) if $N$ is even, because in that case $\tilde{\bG}$ has only $N$ columns.
The matrices $\tilde{\bH}^\bu_\sC, \tilde{\bH}^\bv_\sC$ and $\tilde{\bH}^\bu_\sV, \tilde{\bH}^\bv_\sV$ are the respective Cauchy and Vandermonde matrices specified in Equation~\eqref{eq:MCV_Blocks}. 
Evidently, the $2N\times 2N$ matrix $\ppmatrix{\begin{array}{c|c} \bG & \bH \end{array}}$ has full rank $2N$ as required for an $N$-sum box construction Theorem~\ref{thm:nsumboxconstruction}, since the rank of a block-diagonal matrix is the sum of the ranks of its blocks and $\bQ_N^\bu, \bQ_N^\bv$ have full rank $N$, while $\bP_\pi$ is an invertible square matrix which preserves the same rank. 
The permutation $\pi$ is explicitly expressed as
\begin{equation} \label{eq:HGperm}
    \begin{split}
        \pi = & (\pi(1),\ldots,\pi(2N)) \\
        = & \big(L+1, L+2, \ldots, L+\lceil N/2\rceil, \\
        & N+L+1, N+L+2, \ldots,N+L+\lfloor N/2\rfloor\\
        & 1,2,\ldots, L, L+\lceil N/2\rceil+1, L+\lceil N/2\rceil+2, \ldots, N,\\
        & N+1, N+2, \ldots, N+L,\\
        & N+L+\lfloor N/2\rfloor +1,N+L+\lfloor N/2\rfloor +2, \ldots, 2N\big).
    \end{split}
\end{equation}
Note that $\bP_{\pi}$ moves the columns of $\diagonal(\bQ_N^\bu, \bQ_N^\bv)$ with indices in $\bparen*{L+1: L+\lceil {N}/{2} \rceil} \cup \bparen*{N+L+1: N+L+\lfloor{N}/{2}\rfloor}$ to the left-most part.

Now, notice that $\bG$ is SSO, \ie $\bG \in \cM_o$, since $\ppmatrix{\begin{array}{c|c} \bG & \bH \end{array}}$ has full rank $2N$ by the previous discussion and $\bG^\top \bJ \bG = \bzero$ by Equation~\eqref{eq:dualsub}.
It follows that this $N$-sum box is feasible according to Theorem~\ref{thm:nsumboxconstruction}.

For the specified $\bG, \bH$ matrices, the transfer matrix $\bM$ of the resulting $N$-sum box is
\begin{equation}\label{eq:verify}
    \bM = \ppmatrix{\bzero_N & \bI_N} = 
    \ppmatrix{\bG & \bH} =
    \ppmatrix{\bzero_N & \bI_N} \bP_\pi^{-1} \ppmatrix{\bQ_N^\bu & \bzero \\ \bzero & \bQ_N^\bv}^{-1}
\end{equation}
Recall that $\bP_{\pi}^{-1} = \bP_{\pi^{-1}}$. 
The expression of the permutation $\pi^{-1}$ can be written explicitly as
\begin{align}
    \pi^{-1} = & \big(N+1, N+2, \cdots, N+L, 1, 2, \cdots, \lceil N/2 \rceil,\notag\\
    &N+L+1, N+L+2, \cdots, N + \lfloor N/2 \rfloor,\notag\\
    &N + \lfloor N/2 \rfloor + 1, N + \lfloor N/2 \rfloor + 2, \cdots, N + \lfloor N/2 \rfloor + L, \notag\\
    &\lceil N/2 \rceil + 1, \lceil N/2 \rceil + 2, \cdots, N,\\
    &N + L + \lfloor N/2 \rfloor + 1, N + L + \lfloor N/2 \rfloor + 2, \cdots, 2N\big).\notag
\end{align}
Now it is easily verified that $\ppmatrix{\bzero_N & \mathbf{I}_{N}}\bP_{\pi}^{-1}$ in Equation~\eqref{eq:verify} is  the same as the  matrix in Equation~\eqref{eq:thmQCSA}. 
Theorem~\ref{thm:QCSA} is thus proved.$\hfill\square$

\section{Proof of Corollary~\ref{cor:mdsxstpir}}
\label{app:proof_mdsxstpir}

Before we move forward to the general scheme translation, let us start from a specific example, $(N=5,M,K=2,X=1,T=1)$-MDSXSTPIR to give an intuition of how are the messages stored among servers and what do the parameters mean. 
Other details are omitted and can be found in \cite[Section IV]{Jia_Jafar_MDSXSTPIR}.

\subsubsection{$(N=5,M,K=2,X=1,T=1)$ MDSXSTPIR}
In this example, we have $L = N-(X+T+K -1) = 2$. Thus, in the classical setting, we will use the CSA scheme based on the $\mathrm{CSA}_{N=5,L=2}^{q}(\bm{\alpha}, \bff)$ matrix. 
In this setting, for the $b^{th}$ instance of a CSA scheme, there are $M$ messages $\bW^{b,1}, \ldots, \bW^{b,M}$. For any $b \in [2],\ i \in [M]$, message 
\begin{equation}
    \bW^{b,i} = \ppmatrix{ W^{b,i}_{1,1} & W^{b,i}_{1,2} \\W^{b,i}_{2,1} & W^{b,i}_{2,2} } \in \Fq^{L\times K}
\end{equation}
consists of $LK=4$ symbols from $\Fq$.

For any $l \in [L],\ k \in [K],\ b \in [2]$, let 
\begin{align}
    \bW^b_{l,k} = \ppmatrix{W^{b,1}_{l,k} &  W^{2}_{l,k} & \cdots  & W^{b,M}_{l,k}} \in \Fq^{1 \times M},
\end{align}
be the $1 \times M$ row vector that contains the $(l,k)^{th}$ symbol of all the $M$ messages in the $b^{th}$ instance of a CSA scheme.

There are $N=5$ servers and the storage at server $n \in [N]$ is $S_n = (S^1_{n,1}, S^1_{n,2}, S^2_{n,1}, S^2_{n,2})$, where 
\begin{equation}
    S^b_{n,l} = \frac{1}{(f_1 - \alpha_n)^2} \bW^b_{l,1} + \frac{1}{f_1 - \alpha_n}\bW^b_{l,2} + \bZ^b_{l} \in \Fq^{1\times K},\ l \in [2],
\end{equation}
and $\bZ^b_1, \bZ^b_2$ are independent random noise vectors that are uniform over $\Fq^{1\times K}$. 
Thus, it is obvious that the storage cost at each server is $2LM = 4M$ $q$-ary symbols. 
Comparing with the replicated storage, the storage cost is reduced to $\frac{1}{K} = \frac{1}{2}$, since the $M$ messages consist of $2LKM=8M$ $q$-ary symbols in total. 
Also, due to the fact that the messages are padded with random noise, any $X=1$ server learns nothing about the $M$ messages.

A user wishes to retrieve the $\theta^{th}$ message by querying the $N=5$ servers, without letting any $T=1$ server learn anything about $\theta$. 
In the scheme of \cite{Jia_Jafar_MDSXSTPIR}, for one instance of MDSXSTPIR there are $K=2$ rounds of retrieval. 
In each round, the user downloads 1 $q$-ary answer symbol from each server. 
In the first round the user recovers $W_{1,1}^{\theta}, W_{2,1}^{\theta}$, while in the second round the user recovers $W_{1,2}^{\theta}, W_{2,2}^{\theta}$. 
Thus, in each round, the user recovers $L = 2$ symbols of the desired message $\bW^{\theta}$ by downloading $5$ symbols in total from the servers.
The total downloaded symbols are $NK = 10$, while the retrieved desired symbols are $LK = 4$, thus achieving a rate of 
\[
R_\sC = \frac{2}{5} = \frac{LK}{NK} = \frac{N - (X+T+K-1)}{N}.
\]
Clearly, the rate is the same for two instances of the same scheme.

\subsubsection{QCSA scheme for general MDSXSTPIR}
Let us briefly introduce the general CSA code-based classical MDSXSTPIR scheme \cite{Jia_Jafar_MDSXSTPIR}, whereupon the translation to the quantum scheme is obtained as described in Section~\ref{subsec:CSA2QCSA}. 
In the classical scheme there are $K$ rounds of retrieval. 
In round $\kappa, \kappa \in [K]$ a $\mathrm{CSA}_{N,L}$ scheme is applied, where $0 < L = N - (X + T + K -1)$. 
Specifically, according to \cite[Equations~(65),(66)]{Jia_Jafar_MDSXSTPIR}, the $N$ answer symbols from the $N$ servers in the $\kappa^{th}$ round of the $b^{th}$ instance of a CSA scheme are
\begin{align}
    \underbrace{\ppmatrix{
    A_1^{b,(\kappa)}\\
    \vdots\\
    A_N^{b,(\kappa)}
    }}_{\bA^{b,(\kappa)})}
    &=
    \underbrace{\ppmatrix{\begin{array}{ccc|cccc}
    \frac{1}{f_1-\alpha_1}&\cdots&\frac{1}{f_L-\alpha_1}&1&\alpha_1&\cdots&\alpha_1^{N-L-1}\\
    \frac{1}{f_1-\alpha_2}&\cdots&\frac{1}{f_L-\alpha_2}&1&\alpha_2&\cdots&\alpha_2^{N-L-1}\\
    \vdots&\vdots&\vdots&\vdots&\vdots&\vdots&\vdots\\
    \frac{1}{f_1-\alpha_N}&\cdots&\frac{1}{f_L-\alpha_N}&1&\alpha_N&\cdots&\alpha_N^{N-L-1}\\
    \end{array}}}_{\bG_{\mathrm{CSA}_{N,L}(\bm{\alpha},\bff)}}
    \ppmatrix{
    W_{1,\kappa}^{b,\theta}\\
    \vdots\\
    W_{L,\kappa}^{b,\theta}\\
    *\\
    \vdots\\
    *
    }
    +
    \underbrace{
    \ppmatrix{
    \sum_{l\in[L]}\sum_{k \in [\kappa - 1]}\frac{1}{(f_l - \alpha_1)^{\kappa - k + 1}}W_{l,k}^{b,\theta}\\
    \sum_{l\in[L]}\sum_{k \in [\kappa - 1]}\frac{1}{(f_l - \alpha_2)^{\kappa - k + 1}}W_{l,k}^{b,\theta}\\
    \vdots\\
    \sum_{l\in[L]}\sum_{k \in [\kappa - 1]}\frac{1}{(f_l - \alpha_N)^{\kappa - k + 1}}W_{l,k}^{b,\theta}
    }}_{\bR^{b,(\kappa)}}\\
    &=
    \underbrace{\ppmatrix{\begin{array}{ccc|cccc}
    \frac{1}{f_1-\alpha_1}&\cdots&\frac{1}{f_L-\alpha_1}&1&\alpha_1&\cdots&\alpha_1^{N-L-1}\\
    \frac{1}{f_1-\alpha_2}&\cdots&\frac{1}{f_L-\alpha_2}&1&\alpha_2&\cdots&\alpha_2^{N-L-1}\\
    \vdots&\vdots&\vdots&\vdots&\vdots&\vdots&\vdots\\
    \frac{1}{f_1-\alpha_N}&\cdots&\frac{1}{f_L-\alpha_N}&1&\alpha_N&\cdots&\alpha_N^{N-L-1}\\
    \end{array}}}_{\bG_{\mathrm{CSA}_{N,L}(\bm{\alpha},\bff)}}
    \underbrace{\ppmatrix{
    W_{1\kappa}^{b,\theta} + F_1^{b,(\kappa)} &= \delta_1^{b,(\kappa)}\\
    \vdots\\
    W_{L\kappa}^{b,\theta} + F_L^{b,(\kappa)} &= \delta_L^{b,(\kappa)}\\
    * + F_{L+1}^{b,(\kappa)} &= \nu_1^{b,(\kappa)}\\
    \vdots\\
    * + F_N^{b,(\kappa)} &= \nu_{N-L}^{b,(\kappa)}
    }}_{\bX_{\delta,\nu}^{b,(\kappa)}(i)},\label{eq:mdsxstpir_qform}\\
    &=\bG_{\mathrm{CSA}_{N,L}(\bm{\alpha},\bff)} \bX_{\delta,\nu}^{b,(\kappa)}
\end{align}
where  $W_{1,\kappa}^{b,\theta}, \cdots, W_{L,\kappa}^{b,\theta}$ are the $L = N - (X+T+K - 1)$ symbols of the desired message $\bW^{b,\theta}$, $*$ represents undesired information (interference) whose explicit expression is redundant, $\bR^{b,(\kappa)}$ is an $N \times 1$ column vector which only depends on the message symbols that are decodable from the previous $\kappa - 1$ rounds (to be proved for the quantum scheme later), and $\bF^{b,(\kappa)} = \ppmatrix{F_1^{b,(\kappa)} & F_2^{b,(\kappa)} & \cdots & F_{N}^{b,(\kappa)}}^{\top}$ is acquired by projecting the $\bR^{b,(\kappa)}$ along the columns of $\bG_{\mathrm{CSA}_{N,L}(\bm{\alpha},\bff)}$, \ie
\begin{align}
    \bF^{b,(\kappa)} = \bG_{\mathrm{CSA}_{N,L}(\bm{\alpha},\bff)}^{-1} \bR^{b,(\kappa)}.
\end{align}

With the answers in the form of Equation~\eqref{eq:mdsxstpir_qform} it is clear that in round $\kappa \in [K]$, with $2$ instances of the CSA scheme, $2L$ ``desired" symbols $\delta_1^{1,(\kappa)}, \ldots, \delta_L^{1,(\kappa)}, \delta_1^{2,(\kappa)}, \ldots, \delta_L^{2,(\kappa)}$ can be recovered by using the $N$-sum Box specified in Theorem~\ref{thm:QCSA} when $L \leq \frac{N}{2}$, according to Section~\ref{subsec:CSA2QCSA}. 

Note that when $\kappa = 1$ we have that $\bR^{b,(\kappa)} = \bzero$, which implies $\bF^{b,(\kappa)} = \bzero$. 
Thus, in the first round, $2L$ message symbols $W_{1,1}^{1,\theta} = \delta_1^{1,(1)}, \ldots, W_{L,1}^{1,\theta} = \delta_L^{1,(1)}, W_{1,1}^{2,\theta} = \delta_1^{2,(1)}, \ldots, W_{L,1}^{2,\theta} = \delta_L^{2,(1)}$ are decodable.

Let us then prove the decodability of $2L$ symbols $W_{1,\kappa}^{1,\theta}, \ldots, W_{L,\kappa}^{1,\theta}, W_{1,\kappa}^{2,\theta}, \ldots, W_{L,\kappa}^{2,\theta}$ in the $\kappa^{th}$ round by induction for $\kappa \in \px{2,\ldots,K}$. 

Suppose that in the first $(\kappa - 1)$ rounds $2L(\kappa - 1)$ symbols $\px*{W_{l,k}^{b,\theta}}_{l \in [L], k \in [\kappa - 1], b \in [2]}$ are successfully decoded, then the user can find the values $\bR^{b,(\kappa)}, b\in [2]$ in the $\kappa^{th}$ round as they only depend on the message symbols that are decoded in the previous $(\kappa - 1)$ rounds. 
The user is then able to find $\bF^{b,(\kappa)}, b \in [2]$, as $\bF^{b,(\kappa)}$ is just a deterministic function of $\bR^{b,(\kappa)}$. 
Then by subtracting $F_l^{b,(\kappa)}$ from $\delta_l^{b,(\kappa)}$, the user is able to recover $W_{l,\kappa}^{b,\theta}$ for all $l \in [L], b \in [2]$. 
That is to say, in the first $\kappa$ rounds, the user is able to decode $2L\kappa$ symbols $\px*{W_{l,k}^{b,\theta}}_{l \in [L], k \in [\kappa], b \in [2]}$. 
The induction is thus completed. 

Thus, the rate when $L = N - (X+T+K - 1) \leq \frac{N}{2}$ can be computed as $\frac{2L}{N} = 2\paren*{1 - \frac{X+T+K - 1}{N}}$. 

For the case where $L > \frac{N}{2}$, we can eliminate ``redundant'' servers and construct a rate 1 scheme as mentioned in Section~\ref{subsec:CSA2QCSA}. 
Corollary~\ref{cor:mdsxstpir} is thus proved. $\hfill\square$

\subsection{Proof of Corollary~\ref{cor:sdmm}}\label{proof:sdmm}

Without loss of generality, let us prove that the specified rate is achievable for retrieving the $(1,1)^{th}$ entry of the $L$ products $\bA_1\bB_1, \ldots, \bA_L\bB_L$. 
Other entries are similarly retrieved.

For any $l \in [L]$, let $\bC_l = \bA_l\bB_l$, and let $c_l$ be the $(1,1)^{th}$ entry of the matrix $\bC_l$. 
According to \cite[Equations~(103),(110),(112)]{Jia_Jafar_SDMM}, after applying a $\mathrm{CSA}_{N,L}$ scheme with $0 < L = N - (X_A + X_B)$, the $N$ answer symbols from the $N$ servers regarding the $(1,1)^{th}$ entry of the $L$ products are
\begin{align}
    \underbrace{\ppmatrix{
    A_1^{b}\\
    \vdots\\
    A_N^{b}
    }}_{\bA^b}
    &=
    \underbrace{\ppmatrix{\begin{array}{ccc|cccc}
    \frac{1}{f_1-\alpha_1}&\cdots&\frac{1}{f_L-\alpha_1}&1&\alpha_1&\cdots&\alpha_1^{N-L-1}\\
    \frac{1}{f_1-\alpha_2}&\cdots&\frac{1}{f_L-\alpha_2}&1&\alpha_2&\cdots&\alpha_2^{N-L-1}\\
    \vdots&\vdots&\vdots&\vdots&\vdots&\vdots&\vdots\\
    \frac{1}{f_1-\alpha_N}&\cdots&\frac{1}{f_L-\alpha_N}&1&\alpha_N&\cdots&\alpha_N^{N-L-1}\\
    \end{array}}}_{\bG_{\mathrm{CSA}_{N,L}(\bm{\alpha},\bff)}}
    \underbrace{\ppmatrix{
    c^b_1 &= \delta^b_1\\
    \vdots\\
    c^b_L &= \delta^b_L\\
    * &= \nu^b_1\\
    \vdots\\
    * &= \nu^b_{N-L}
    }}_{\bX^b_{\delta,\nu}},
\end{align}
where $c^b_1 = \delta^b_1, \cdots, c^b_L = \delta^b_L$ are the $(1,1)^{th}$ entry of the $L$ products in the $b^{th}$ instance of the CSA scheme, \ie the $L$ desired symbols, and $*$ denotes the undesired information (interference) whose explicit expressions are redundant.

Thus, with $2$ instances of the CSA scheme, $2L$ desired symbols can be recovered by using the $N$-sum Box specified in Theorem~\ref{cor:sdmm} when $L = N - (X_A + X_B) \leq \frac{N}{2}$, according to Section~\ref{subsec:CSA2QCSA}. 
The rate of retrieving one entry of the $L$ products is thus $\frac{2L}{N} = 2\bigg(1 - \frac{X_A + X_B}{N}\bigg)$ when $L \leq \frac{N}{2}$. 
Again, we can achieve rate $1$ by eliminating ``redundant'' servers when $L > \frac{N}{2}$, as mentioned in Section~\ref{subsec:CSA2QCSA}. 
Corollary~\ref{cor:sdmm} is thus proved. $\hfill\square$

\clearpage


\bibliographystyle{IEEEtran}
\bibliography{main}

\end{document}